\newlength{\widthA}
\newcommand{\beq}{\begin{equation}}
\newcommand{\eeq}{\end{equation}}
\newcommand{\beqa}{\begin{eqnarray}}
\newcommand{\eeqa}{\end{eqnarray}}
\newcommand{\beqan}{\begin{eqnarray*}}
\newcommand{\eeqan}{\end{eqnarray*}}
\def\var{\mathop{\mathrm{var}}}
\def\Z{{\mathbb{Z}}}
\def\kt{{\widetilde{k}}}
\def\Ct{{\widetilde{C}}}
\def\G{{\mathcal{G}}}
\def\F{{\mathcal{F}}}
\renewcommand{\P}[1]{\textbf{P}\left({#1}\right)} 
\newcommand{\E}[1]{\textbf{E}\left[{#1}\right]}   
\newcommand{\eqlabel}[1]{ \stackrel{(#1)}{=} }
\newcommand{\geqlabel}[1]{ \stackrel{(#1)}{\geq} }
\newcommand{\leqlabel}[1]{ \stackrel{(#1)}{\leq} }
\newtheorem{theorem}{Theorem}
\newtheorem{lemma}[theorem]{Lemma}
\newtheorem{proposition}[theorem]{Proposition}
\newtheorem{corollary}[theorem]{Corollary}
\newtheorem{definition}{Definition}
\newcommand{\qed}{\nobreak \ifvmode \relax \else
      \ifdim\lastskip<1.5em \hskip-\lastskip
      \hskip1.5em plus0em minus0.5em \fi \nobreak
      \vrule height0.75em width0.5em depth0.25em\fi}
\title{The Porosity of Additive Noise Sequences}
\author{Vinith~Misra and
        Tsachy~Weissman,~\IEEEmembership{Senior~Member,~IEEE}
\thanks{This work was supported in part by the NDSEG fellowship, the Stanford Graduate Fellowship, and the NSF SCOI center.}%
\thanks{ V. Misra (email: vinith@stanford.edu),
        and T. Weissman (email: tsachy@stanford.edu) are with
        the Department of Electrical Engineering,
        Stanford University,
        Stanford, CA 94305 USA\@.}}
\begin{document}
\maketitle

\begin{abstract}
Consider a binary additive noise channel with noiseless feedback.  
When the noise is a stationary and ergodic process $\mathbf{Z}$, the 
capacity is $1-\mathbb{H}(\mathbf{Z})$ ($\mathbb{H}(\cdot)$ denoting
the entropy rate).  
It is shown analogously that
when the noise is a deterministic sequence $z^\infty$,
the capacity under finite-state encoding and decoding is
$1-\overline{\rho}(z^\infty)$, where $\overline{\rho}(\cdot)$ is Lempel and Ziv's
finite-state compressibility.  This quantity is termed the \emph{porosity}
$\underline{\sigma}(\cdot)$ of an
individual noise sequence.  A sequence of schemes
are presented that universally achieve porosity for any noise sequence.
These converse and achievability results may be interpreted both as a channel-coding counterpart
to Ziv and Lempel's work in universal source coding, as well as an extension to the work
by Lomnitz and Feder and Shayevitz and Feder on communication across
modulo-additive channels.
Additionally, a slightly more practical architecture
is suggested that draws a connection with finite-state predictability,
as introduced by Feder, Gutman, and Merhav.

\end{abstract}

\begin{IEEEkeywords}
Lempel-Ziv, universal source coding, universal channel coding, modulo-additive channel, compressibility, predictability
\end{IEEEkeywords}

\section{Introduction}
\label{sec:intro}

The ``core'' results of information theory, starting with Shannon's source
and channel coding theorems, are concerned with probabilistic systems
of a known model: an iid Bernoulli$(1/4)$ source must be compressed, or
perhaps bits are to be communicated across an AWGN channel of known SNR.
One may seek additional generality by asking that a coding scheme simultaneously function
for an entire class of such probabilistic models.  In the case of source coding,
Ziv and Lempel \cite{ZivL1977, ZivL1978} and Ziv \cite{Ziv1978} take this question to its logical
 extreme and ask that a compressor
not only achieve the optimal rate for any probabilistic source model, but do so
for \emph{any individual source sequence}.  In \cite{ZivL1978}, it is discovered that the traditionally relevant probabilistic measurement 
--- entropy rate --- generalizes into a measure for
an individual sequence --- compressibility.  In this paper, an analogous set of questions
yield an analogous set of answers in the context of noisy channel coding with feedback.

Historically, far more attention has been paid to the issue of universality in source coding
than in channel coding.  The source of this discrepancy is readily apparent from Figs. \ref{fig:source-coding} 
and \ref{fig:channel-coding}.  The encoder of Fig. \ref{fig:channel-coding} never observes the noise sequence in any way,
 and so its codebook cannot be dynamically customized to suit the channel.  The source encoder of Figure \ref{fig:source-coding} on
 the other hand has direct access to the source sequence and can therefore adjust to its statistics.
 As such, the degree of universality that can be requested in the classical channel-coding setup 
 is far more restricted than in source coding.  Certainly, this does not preclude
 discussion of ``universality,'' but the term must take on a considerably looser meaning, as
 is discussed in  Sec. \ref{sec:related-work}.
 
 \begin{figure}
 \centering
  \psfrag{s}[cc][cc]{\small Unknown Source}
  \psfrag{m}[cc][cc]{$X^\infty$}
  \psfrag{e}[cc][cc]{E}
  \psfrag{d}[cc][cc]{D}
  \psfrag{m2}{$\widehat{X}^\infty$}
  \includegraphics[width=3in]{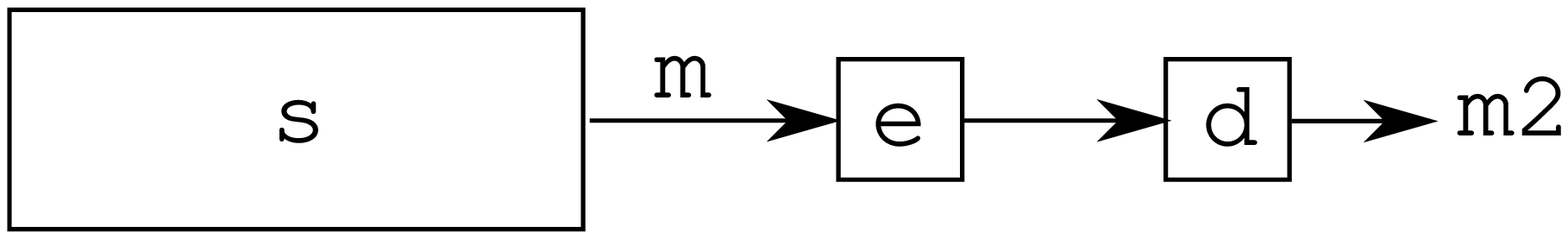}
  \caption{The model for universal source coding.  An unknown source is
  provided to an encoder, which must describe it to the decoder.}
 \label{fig:source-coding}
\end{figure}

\begin{figure}
 \centering
  \psfrag{c}[cc][cc]{\small Unknown Channel}
  \psfrag{m}[cc][cc]{$M^\infty$}
  \psfrag{e}[cc][cc]{E}
  \psfrag{d}[cc][cc]{D}
  \psfrag{m2}[cc][cc]{$\widehat{M}^\infty$}
  \psfrag{x}[cc][cc]{$x_i$}
  \psfrag{y}[cc][cc]{$y_i$}
  \includegraphics[width=3.8in]{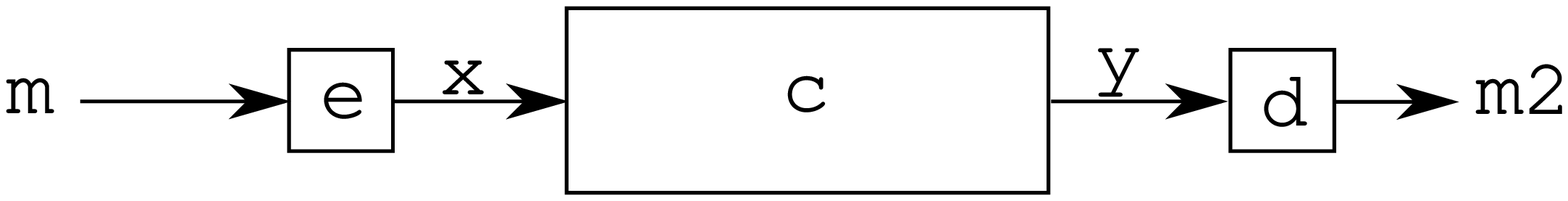}
  \caption{The model for universal channel coding.  A message must
  be communicated over an unknown channel.}
 \label{fig:channel-coding}
\end{figure}
 
The playing field is considerably leveled by introducing a noiseless feedback link, as in Fig. \ref{fig:channel-coding-feedback}.
In particular, the modulo-additive channel of Fig. \ref{fig:channel-coding-additive} allows for a clear and precise analogy to the universal
source coding of Lempel and Ziv.  To highlight some of the parallels:

\begin{figure}
 \centering
  \psfrag{c}[cc][cc]{\small Unknown Channel}
  \psfrag{m}[cc][cc]{$M^\infty$}
  \psfrag{e}[cc][cc]{E}
  \psfrag{d}[cc][cc]{D}
  \psfrag{m2}[cc][cc]{$\widehat{M}^\infty$}
  \psfrag{x}[cc][cc]{$x_i$}
  \psfrag{y}[cc][cc]{$y_i$}
  \psfrag{z}[cc][cc]{$z^{-1}$}
  \psfrag{y-1}[cc][cc]{$y_{i-1}$}
  \includegraphics[width=3.8in]{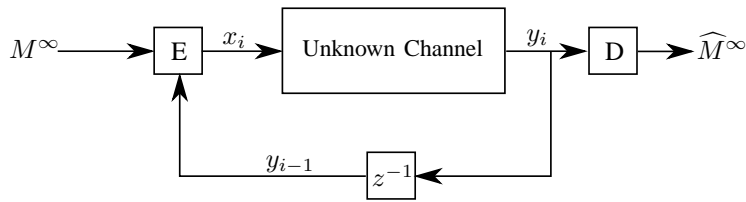}
  \caption{Universal channel coding with noiseless feedback.}
 \label{fig:channel-coding-feedback}
\end{figure}

\begin{figure}
 \centering
  \psfrag{c}[cc][cc]{\small Unknown Channel}
  \psfrag{m}[cc][cc]{$M^\infty$}
  \psfrag{e}[cc][cc]{E}
  \psfrag{d}[cc][cc]{D}
  \psfrag{m2}[cc][cc]{$\widehat{M}^\infty$}
  \psfrag{x}[cc][cc]{$x_i$}
  \psfrag{y}[cc][cc]{$y_i$}
  \psfrag{z}[cc][cc]{$z^{-1}$}
  \psfrag{y-1}[cc][cc]{$y_{i-1}$}
  \psfrag{2}[cc][cc]{$2$}
  \psfrag{noise}[cc][cc]{$z^{\infty}$}
  \psfrag{+}[cc][cc]{$+$}
  \includegraphics[width=3.8in]{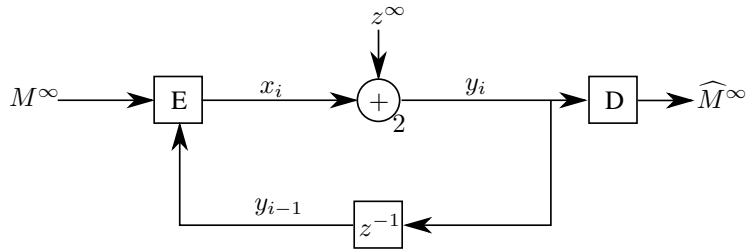}
  \caption{An additive-noise channel with noiseless feedback.}
 \label{fig:channel-coding-additive}
\end{figure}

\begin{itemize}
\item Individual sequences.  In the source coding setting, Lempel and Ziv replace the source random process $X^\infty$ with a 
deterministic sequence $x^\infty$.  Here, the standard stochastic description of channel noise
$Z^\infty$ is supplanted by a specific individual sequence $z^\infty$.
\item Finite state constraint.  Lempel and Ziv ask the question: how well can a source encoder/decoder
perform for a specific individual sequence, if the engineer designing the encoder/decoder \emph{knows} the sequence
ahead of time?
Clearly, if the encoder and decoder are unconstrained, this problem trivializes: 
one may design a decoder that, with absolutely no input from the encoder, produces $x_t$ at time $t$.  
A finite-state requirement --- which reflects the constraints of the real world ---
is therefore introduced for both encoder and decoder.  Within this class of schemes, Lempel and Ziv provide 
a (non-trivial) converse. 

Similarly, if one permits arbitrary encoders and decoders for the individual noise sequence channel
coding setting, the maximum communication rate of $\log |\mathcal{X}|$ may be easily attained
 for any noise sequence $z^\infty$.  One merely sets the
encoder's output to the difference between the source sequence and the noise sequence $M^\infty - z^\infty$.
The channel cancels out the noise, and the decoder needs merely read the channel output to 
obtain the message $M^\infty$.
To avoid such triviality, a 
finite-state encoder/decoder constraint analogous to that of 
Lempel and Ziv is introduced.  Within this class of schemes,
a non-trivial converse is proven.

\item In traditional Shannon theoretic results, converses that apply for a particular
source/channel model are accompanied by achievability schemes that function
for that particular model.  For instance, given a BSC with crossover probability $p$,
Shannon's channel coding converse tells us no reliable sequence of coding schemes
can achieve a rate better than $1-h_b(p)$, and Shannon provides a rate $(1-h_b(p))$-achieving sequence of
codebooks customized for the BSC$(p)$.
One might similarly ask Lempel and Ziv that to accompany their converse for a particular source
sequence, they provide an achievability scheme
for that particular sequence.  The achievability they provide, however, goes several steps further:
it functions for \emph{any} individual source sequence.
Note that while this scheme that is suggested is infinite-state, it is also shown that by truncation
and repetition, achievability is also possible through a sequence of finite-state source-coding schemes.  

Similarly, both the infinite-state scheme of Lomnitz and Feder \cite{LomnitzF2011}
and the sequence of finite-state achievability schemes $\overline{\mathcal{F}^m}$ presented in Sec. \ref{sec:achievability}
achieve the channel-coding converses for *any* noise sequence.

\item
The \emph{compressibility} $\rho(x^\infty)$ of a sequence, introduced by Lempel and Ziv, is its best possible compression rate.
The analogous quantity in the channel-coding case is here referred to as 
the channel's \emph{porosity} $\sigma(z^\infty)$.  In the binary additive
 noise case it is demonstrated to be equal to  $1-\rho(z^\infty)$.
  Both are analogues of probabilistic quantities --- entropy rate
 in the case of compressibility, and one minus the entropy rate in the case of porosity.
 Both may also be interpreted as generalizations of their probabilistic analogues, 
 since according to Thm. 4 in \cite{ZivL1978}, $\rho(X^\infty) = H(\mathbb{X})$ with probability 1
 if $X^\infty$ is an ergodic source.
\end{itemize}

\noindent To summarize, we show both that the porosity of the noise is the best possible rate achievable
within the class of finite-state schemes, and that there exist a sequence of finite-state schemes that simultaneously achieves
porosity for all noise sequences.

\section{Related Work}
\label{sec:related-work}
Lomnitz and Feder introduce the notion of competitive universality to channel coding in \cite{LomnitzF2011}.
The reference class used in this comparison consists of 
\emph{iterated fixed-blocklength} (IFB) schemes, which ignore the feedback channel
and simply employ block coding across the noisy channel.  
\emph{Rate-adaptive} schemes, on the other hand, make arbitrary use of feedback and communicate a fixed
number of bits over at most $n$ channel uses.  It is proven that
IFB schemes can do no better than porosity (rate $1-\overline{\rho}(z^\infty)$),
and a rate-adaptive scheme built upon LZ78 is shown to achieve porosity.

In a sense, the results reported here take these statements of competitive optimality
a step further: we ask that the achievability schemes not only outperform any elements of the reference class,
but that they \emph{are} elements of the reference class.  This establishes porosity as a channel capacity of sorts.
As IFB codes 
 frequently cannot even
achieve porosity for a given noise sequence $z^\infty$, let alone for the entire set of noise sequences,
 this requires that the reference class
be widened to the class of all finite-state schemes.

The porosity-achieving rate-adaptive scheme introduced by Lomnitz and Feder does not quite fall into this class,
as it consists of infinite states.  One might consider consider truncating and repeating it
in order to construct a finite state scheme.  While this could potentially work, we find it somewhat easier to build from the schemes of Shayevitz and Feder \cite{ShayevitzF2009}, whose performance guarantees mesh well with the asymptotic performance metrics
of interest here.

In \cite{ShayevitzF2009}, Shayevitz and Feder establish the initial results 
that have sparked much of the subsequent work in this problem.  An extremely general family
of channels is considered, but the results provided are most meaningful
when restricting attention to the individual additive noise sequence setting.
Of interest are two things: the construction of variable-rate, fixed-blocklength
schemes expanded from Horner's coding method, and the strong performance
guarantees that are provided.  The schemes are shown to achieve for any noise sequence the 
\emph{empirical capacity}, or one minus the first-order empirical entropy.
By operating this coding technique over \emph{blocks} of channel use, one can
potentially generalize to arbitrary-order empirical entropies.
The achievability schemes $\{\overline{\F^m}\}$ presented in this paper are
an extension of this idea.

As with both \cite{LomnitzF2011} and \cite{ShayevitzF2009}, Eswaran et al. \cite{EswaranSSG2010}
consider a very broad class of channels with noiseless feedback, but the results provided
are most meaningful in the modulo-additive setting with an individual noise sequence.
Extending  \cite{ShayevitzF2009}, it is demonstrated that even when the feedback 
is asymptotically zero-rate, the empirical capacity is still universally achievable.

As previously mentioned, even in the absence of a feedback link 
questions of universal channel coding can be considered.  The principal complication in this setting
is that the encoder no longer has any information about the specific channel,
and so neither the rate nor the transmission methodology can be adapted.

One may nonetheless ask that the \emph{decoder} adapt to the channel.  
  Csiszar and Korner \cite{CsiszarK2011} consider
the class of memoryless DMC's that share a common input alphabet $\mathcal{X}$ --- call this class $A(\mathcal{X})$.  
For a randomly generated codebook, a universal decoder for the entire class $A(\mathcal{X})$ is constructed 
and its performance compared to that of a decoder customized for whichever specific channel happens to appear
(maximum likelihood decoder).
The universal decoder is not only shown to match the ML decoder 
in terms of vanishing error, but it is also found to achieve the same error exponent.

Ziv \cite{Ziv1985} and Lapidoth and Ziv \cite{LapidothZ1998} seek to expand such a result into the territory
of channels with memory.  Each considers a fairly specific form of memory: finite-state
channels with deterministic state transitions \cite{Ziv1985} and those with probabilistic
state transitions \cite{LapidothZ1998}.  Each also demonstrates achievability through decoding schemes
that utilize LZ78-style sequence parsing.  Feder and Lapidoth \cite{FederL1998} on the other
hand consider the more general problem of decoding for a parametric family of channels.

Despite the non-adaptability of the encoder in this feedback-less setting, one may seek to
maximize the worst-case rate of communication across the channel.  The fundamental limit
of performance in this scenario is the \emph{compound channel capacity}, discussed at length
in the review article by Lapidoth and Narayan \cite{LapidothN1998}.

A generalization of the above is to take a broadcast approach,
wherein channel uncertainty is modeled by having the encoder
broadcast across all channels in the class considered.  The
rate region of this broadcast channel then characterizes the rates
the encoder may simultaneously achieve for each potential channel.
Observe that if this rate region can be specifically determined,
it answers all possible questions of universal decoding.
Shamai and Steiner \cite{ShamaiS2003} leverage this approach for
the case of fading channels.

\section{Structure of Paper}
In Sec. \ref{sec:ProblemSetup} a precise description is provided  of the
problem setting, the class of finite-state schemes, and the relevant performance metrics.
Sec. \ref{sec:NotionsOfCompressibility} builds slightly on Lempel and Ziv's definition of 
compressibility and establishes certain useful properties.
Sec. \ref{sec:StatementOfResults} states the three theorems that constitute the core results of this work.
Sec. \ref{sec:converse} contains the proof of the converse theorems, and Sec. \ref{sec:achievability}
proves achievability.  
Sec. \ref{sec:predictability} introduces a significantly more practical (but sub-optimal) set of schemes that establish a 
connection between porosity and finite-state predictability.
Sec. \ref{sec:Conclusion} summarizes this paper's findings.
A few lemmas have somewhat distracting proofs that are relegated to the appendices.

\section{Problem setup}
\label{sec:ProblemSetup}
A deterministic additive noise feedback channel, as depicted in
Fig. \ref{fig:channel-coding-additive}, is defined by  a noise sequence
$z^\infty \in \mathcal{X}^\infty$, where $\mathcal{X}$ is a finite alphabet with a
modulo-addition operator.  The channel output at any time $i$ is given by the sum
of the noise and the input: $y_i = x_i + z_i$.  
Noiseless feedback $u_i = y_{i-1}$ delays the channel output by one time unit
before providing it to the encoder.
Without loss of generality, we will concern ourselves primarily with the binary-alphabet case, i.e. $\mathcal{X} = \{0,1\}$,
as the extension to general finite $\mathcal{X}$ is straightforward.

\subsection{Finite-state Schemes}

\begin{figure}
 \centering
  \psfrag{c}[cc][cc]{\small Unknown Channel}
  \psfrag{m}[cc][cc]{$\displaystyle M_1,\ldots,\underbrace{M_{p_i},M_{p_i+1},\ldots,M_{p_i+\ell}},M_{p_i+\ell+1},\ldots$}
  \psfrag{e}[cc][cc]{E}
  \psfrag{d}[cc][cc]{D}
  \psfrag{m2}[cc][cc]{$\widehat{M}_{p_i}^{p_i+L_i-1}$}
  \psfrag{x}[cc][cc]{$x_i$}
  \psfrag{y}[cc][cc]{$y_i$}
  \psfrag{z}[cc][cc]{$z^{-1}$}
  \psfrag{y-1}[cc][cc]{$y_{i-1}$}
  \psfrag{2}[cc][cc]{$2$}
  \psfrag{noise}[cc][cc]{$z^{\infty}$}
  \psfrag{+}[cc][cc]{$+$}
  \psfrag{t}[cc][cc]{$\theta_i$}
  \includegraphics[width=4.8in]{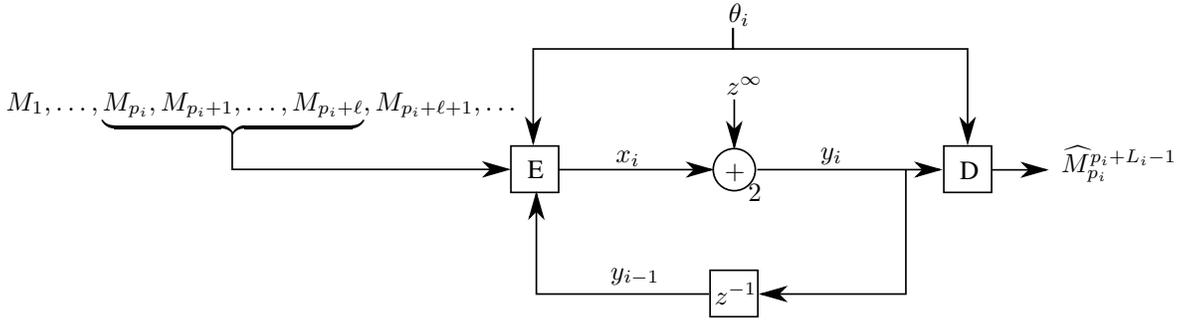}
  \caption{A finite-state encoding/decoding scheme for a 
  modulo-additive channel.}
 \label{fig:fs-schemes}
\end{figure}

A finite-state (FS) encoder/decoder  scheme for an additive noise channel ---
depicted in Fig. \ref{fig:fs-schemes} --- consists of several
components:
\begin{enumerate}
\item An encoder state variable $s_i^{\text{(e)}}$ and decoder state variable $s_i^{\text{(d)}}$,
each taking values in a finite set $S$.
\item A source pointer $p_i$ and a finite lookahead constant $\ell$.
\item An iid common randomness source $\theta_i \sim p_\theta$ taking values in a finite alphabet.
\item An encoding function $x_i = e(s_i^{\text{(e)}}, M_{p_i}^{p_i+\ell}, y_i, \theta_i) \in \mathcal{X}$.
\item A decoding length function $L_i = d_L(s_i^{\text{(d)}},y_i, \theta_i)$ that also determines the update of the source pointer:
$p_{i+1} = p_i + L_i$.
\item A decoding function $\widehat{M}_{p_i}^{p_i + L_i - 1} = d_M(s_i^{\text{(d)}},y_i, \theta_i)$.
\item State-update functions for both the encoder $s_{i+1}^{\text{(e)}} = f_{\text{(e)}}(s_i^{\text{(e)}},M_{p_i}^{p_i+\ell}, y_i, \theta_i)$
and decoder $s_{i+1}^{\text{(d)}} = f_{\text{(d)}}(s_i^{\text{(d)}}, y_i, \theta_i)$.
\end{enumerate}

At each time step, the encoding function determines the input $x_i$ to the channel,
the decoding function estimates the first $L_i$ source symbols that have yet to be estimated
(based on the output
$y_i$ of the channel), and state variables and the source pointer location are updated in anticipation
of the next transmission.

Observe, first, that this class of schemes is sufficiently general to include
the following as special cases:
\begin{enumerate}
\item The class of ``iterated fixed-length" block schemes, as defined by Lomnitz and Feder \cite{LomnitzF2011}.
These are simply block codes that ignore the feedback.  The common randomness at encoder and decoder allows for
randomly generated block codes as well.
\item Schemes that transmit a variable number of source symbols over a fixed number of channel uses, before
reseting their state variables and repeating the operation (defined more precisely in
Sec. \ref{sec:achievability-classes-of-schemes} as ``Repeated Finite-Extent'' schemes).
\item Schemes that transmit a variable (or fixed) number of source symbols over a variable (but bounded) number
of channel uses (also known as ``rate-adaptive'' schemes \cite{LomnitzF2011}).
\end{enumerate}

Secondly, notice that without certain restrictions in the definition of 
class FS, the problem can become trivial:
\begin{enumerate}
\item Suppose that the encoder is permitted to be infinite-state.  The system designer may then
allow the encoder state $s_i^{\text(e)}$ 
to be the current time index $i$.  This then allows the encoding function to be a function of $i$,
which in turn permits the encoding function to be customized for a particular noise sequence $z^\infty$:
 $e(i,M_{p_i}) = M_{p_i} - z_i$.  Sending this through the channel, $z_i$ is canceled out.  The 
 decoder needs merely read the channel output to obtain the message at the maximum possible rate, $\log | \mathcal{X} |$.
\item Suppose that the decoder is permitted to be infinite-state.  One may reverse the above construction
by having the encoder blindly send the message bits through the channel $e(M_{p_i}) = M_{p_i}$ 
and asking the decoder to cancel out the noise.  Specifically, letting $s_i^{\text{(d)}} = i$, the decoding
function can be a function of the time $i$.  This allows for a clever system designer to choose
$d_M(i,y_i) = y_i - z_i$, which guarantees that $\widehat{M}_{p_i} = M_{p_i}$ and that $L_i = 1$ for any $i$.
\item Finally, suppose that the finite-lookahead requirement is nonexistant --- that is, the encoding function
can look at the entire untransmitted message stream $x_i = e(s_i^{\text{(e)}}, M_{p_i}^\infty,y_i,\theta_i)$.
As we will illustrate, this is identical to allowing the encoder an infinite number of states.  If $M^\infty$ is 
a Bernoulli$(1/2)$ sequence, then with probability one there exists a one-to-one map between 
$M_i^{\infty}$ and $i$.  The encoder may therefore send $e(M_{p_i}^\infty) = M_{p_i} + z_{p_i}$ as the channel input at time $i$.  The decoder, as before, simply reads the channel output, achieving the maximum rate $\log| \mathcal{X} |$.
\end{enumerate}

\subsection{Performance metrics}
Channel coding typically concerns itself with the tradeoff between
rate of communication and the frequency of errors.  In the individual
sequence setting of interest to us, we define the instantaneous rate and
bit-error rate of an FS scheme at time $n$ as
\[
R_n = \frac{1}{n} \sum_{i=1}^n L_i \mbox{,}
\]
and 
\[
\epsilon_n = \frac{1}{n R_n} \sum_{i=1}^{nR_n} {\bf 1}_{\widehat{M}_i \neq M_i} \mbox{.}
\]
We consider two interpretations of these quantities.

{\bf Best-Case.} An FS scheme \emph{best-case $p$-achieves} rate/error $(R,\epsilon)$ for a noise sequence $z^\infty$ if
with at least probability $p$ there exists a sequence of points $\{n_i\} \in \Z^+$ such that $\lim_{i\rightarrow\infty} R_{n_i} \geq R$
and $\lim_{i\rightarrow\infty} \epsilon_{n_i} \leq \epsilon$.  In other words,
a performance monitor that observes the system at the ``right'' times 
will see it achieve $(R,\epsilon)$ with probability at least $p$.  If $p$ is 1, we say that the scheme
simply \emph{best-case achieves} $(R,\epsilon)$.

{\bf Worst-Case.}  An FS scheme \emph{worst-case $p$-achieves} rate/error $(R,\epsilon)$ if with 
at least probability $p$ both
$\liminf_{n\rightarrow\infty} R_n \geq R$ and $\limsup_{n\rightarrow\infty} \epsilon_n \leq \epsilon$.
In other words, a performance monitor observing the system at any set of sample times
will see it achieve $(R,\epsilon)$ with probability at least $p$.  If
$p$ is one the scheme is said to \emph{worst-case achieve} $(R,\epsilon)$.

Observe that the randomness in these definitions has two possible sources: 
the source sequence $M^\infty$
and the common-information sequence $\theta^\infty$ used by the FS scheme.
Sometimes the source $M^\infty$ will be a fixed sequence, but this is always made clear from context.

\section{Notions of ``compressibility''}
\label{sec:NotionsOfCompressibility}
The results of this paper connect the operational notions of achievability 
to certain long-established individual sequence properties, first introduced by
Lempel and Ziv \cite{ZivL1978} in a source-coding context.  In this section, 
these properties are defined and some useful relations are presented between them.

First, we denote the $k$th order block-by-block empirical distribution 
\[ 
\hat{p}^k(X^k)[x^n] = \frac{1}{\lfloor \frac{n}{k} \rfloor} \sum_{i=0}^{ \lfloor n/k \rfloor} 
                                                       {\bf 1}_{x_{ki+1}^{k(i+1)} = X^k}
\mbox{.}
\]
If the empirical distribution is instead computed in a 
sliding-window manner, we denote
\[
\hat{p}_{\text{sw}}^k(X^k)[x^n] = 
\frac{1}{n-k+1} \sum_{i=0}^{ n-k+1 } {\bf 1}_{x_{i+1}^{i+k} = X^k}
\mbox{.}
\]
The argument $[x^n]$ is occasionally omitted when the 
context is clear.

The $k$th order block-by-block empirical entropy is indicated
by $\hat{H}^k(x^n) = H_{\hat{p}^k}(X^k)$.  The sliding-window
$k$th order empirical entropy is similarly written as
$\hat{H}^k_{\text sw}(x^n) = H_{\hat{p}^k_{\text sw}}(X^k)$.

As shown by Ziv and Lempel \cite{ZivL1978}, the finite-state \emph{compressibility} of a sequence $x^\infty$
may be written as
\[
\overline{\rho}(x^\infty) = \limsup_{k\rightarrow\infty}\limsup_{n\rightarrow \infty}
\hat{H}^k_{\text sw}(x^n)
\mbox{.}
\]
An analagous quantity may also be introduced:
\[
\underline{\rho}(x^\infty) = \liminf_{k\rightarrow\infty}\liminf_{n\rightarrow \infty}
\hat{H}^k_{\text sw}(x^n)
\mbox{.}
\]
Operationally, compressibility is the smallest limit supremum compression ratio achievable
for a sequence (Theorem 3 in \cite{ZivL1978}).  It is not difficult to show that, analogously,
the second quantity is the smallest possible limit infimum compression ratio.  Informed by this,
we refer to the original compressibility quantity as the \emph{worst-case compressibility}
and the new limit infimum version as the \emph{best-case compressibility}.

The following lemma, proved in Appendix \ref{app:compressibility-proof}, demonstrates that both best-case and
worst-case compressibilities may be computed using either block-by-block or sliding-window
empirical entropies.

\begin{lemma}
\label{lem:compressibility}
Let $x^\infty$ be a finite-alphabet sequence.  Then
\[
\underline{\rho}(x^\infty)
= \lim_{k\rightarrow\infty} \liminf_{n\rightarrow\infty} \frac{1}{k} \hat{H}^k(x^n)
\]
and
\[
\overline{\rho}(x^\infty)
= \lim_{k\rightarrow\infty} \limsup_{n\rightarrow\infty} \frac{1}{k} \hat{H}^k(x^n) \mbox{.}
\]
\end{lemma}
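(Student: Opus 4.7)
My plan is to prove both equalities in parallel by relating both the sliding-window and the block-by-block normalized empirical entropies to a common intermediary quantity: the uniform average over offset-shifted block-by-block distributions. For each $j \in \{0, 1, \ldots, k-1\}$, let $\hat{p}^{k,(j)}[x^n]$ denote the block-by-block empirical distribution of length-$k$ blocks starting at positions $j+1, j+k+1, j+2k+1, \ldots$, so that $\hat{p}^{k,(0)} = \hat{p}^k$. A direct counting argument yields, modulo an $O(k/n)$ boundary error in total variation,
\[
\hat{p}^k_{\text{sw}}[x^n] \;=\; \frac{1}{k}\sum_{j=0}^{k-1}\hat{p}^{k,(j)}[x^n].
\]

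Applying concavity of entropy (lower bound) and the standard $\log k$ bound on the entropy of a $k$-component uniform mixture (upper bound) gives
\[
\frac{1}{k}\sum_{j=0}^{k-1} H_{\hat{p}^{k,(j)}} - o_n(1) \;\le\; H_{\hat{p}^k_{\text{sw}}} \;\le\; \frac{1}{k}\sum_{j=0}^{k-1} H_{\hat{p}^{k,(j)}} + \log k + o_n(1),
\]
where the Lipschitz-type $o_n(1)$ corrections come from the $O(k/n)$ perturbation above. Dividing by $k$, the $\log k/k$ slack vanishes as $k\to\infty$, so $\tfrac{1}{k} H_{\hat{p}^k_{\text{sw}}}$ and the mixture-average $M_k(x^n) := \tfrac{1}{k^2}\sum_j H_{\hat{p}^{k,(j)}}[x^n]$ coincide in the double limit. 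By the definition of $\underline{\rho}$ (respectively $\overline{\rho}$) as the $\liminf_k\liminf_n$ (respectively $\limsup_k\limsup_n$) of $\tfrac{1}{k} H_{\hat{p}^k_{\text{sw}}}$, this yields $\underline{\rho}(x^\infty) = \liminf_k \liminf_n M_k(x^n)$ and $\overline{\rho}(x^\infty) = \limsup_k \limsup_n M_k(x^n)$.

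The main obstacle lies in the final step: showing that $M_k$ and the zero-shift $\tfrac{1}{k}\hat{H}^k = \tfrac{1}{k}H_{\hat{p}^{k,(0)}}$ agree in the double limit, even though at any fixed $k$ they can differ substantially (e.g., on a periodic sequence whose period does not divide $k$, different offsets yield genuinely different block-by-block entropies). The natural route exploits the identity $H_{\hat{p}^{k,(j)}}[x^n] = H_{\hat{p}^k}[x_{j+1}^n]$, which recasts $M_k(x^n)$ as the shift-average of $\tfrac{1}{k}\hat{H}^k$ evaluated at the shifted sequences $x_{j+1}^\infty$. Combining this with the easy observation that $\underline{\rho}$ and $\overline{\rho}$ are shift-invariant --- since the sliding-window distribution is insensitive to $O(1)$ boundary alterations as $n\to\infty$ --- one can deploy a bootstrap/diagonal argument to show that each shift individually contributes the same value $\underline{\rho}$ (resp.\ $\overline{\rho}$) in the $k\to\infty$ limit, so their average does as well. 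I expect the heaviest lifting to be in carefully bookkeeping uniformity of the various error terms across $j$, $k$, and $n$ so that the double limit commutes properly with the shift-averaging; this is the sort of technical step I would relegate to the appendix. The analogous statement for $\overline{\rho}$ follows by replacing $\liminf$ with $\limsup$ throughout.
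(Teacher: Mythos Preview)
Your reduction of $\tfrac{1}{k}\hat H^k_{\text{sw}}$ to the shift-average $M_k(x^n)=\tfrac{1}{k^2}\sum_{j=0}^{k-1}H_{\hat p^{k,(j)}}[x^n]$ via concavity and the $\log k$ mixture bound is clean and correct; it immediately gives $\underline\rho=\liminf_k\liminf_n M_k$ and $\overline\rho=\limsup_k\limsup_n M_k$. The difficulty you flag is exactly the right one, but the ``bootstrap/diagonal'' sketch does not resolve it. Shift-invariance of $\underline\rho,\overline\rho$ tells you only that $\liminf_k\liminf_n M_k(x_{j+1}^n)=\underline\rho(x^\infty)$ for each fixed $j$; it says nothing about the single offset term $\tfrac{1}{k}\hat H^k(x^n)=\tfrac{1}{k}H_{\hat p^{k,(0)}}$ relative to the average $M_k$. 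Your own periodic example shows these can differ by $\Theta(1)$ at every fixed $k$, and there is nothing in the outlined argument that forces the $j=0$ term to track the average as $k\to\infty$. Concretely, from $\liminf_n M_k\ge \tfrac{1}{k}\sum_j a_k^{(j)}$ (with $a_k^{(j)}:=\liminf_n\tfrac{1}{k}\hat H^k(x_{j+1}^n)$) you can only extract information about \emph{some} offset by pigeonhole, not about $j=0$; and the statement ``each shift individually contributes $\underline\rho$ in the $k\to\infty$ limit'' is precisely the lemma applied to $x_{j+1}^\infty$, so invoking it is circular.

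What is missing is an estimate that compares block-by-block entropies at two different block sizes. The paper supplies this via a coding argument: build the Huffman code $C_{n,k}$ for the appropriate $k$-block (or sliding-window) distribution, then form its $\tilde k$-extension $\widetilde C_{n,\tilde k}$ and apply Shannon's converse at block size $\tilde k$. This yields bounds of the form
\[
\frac{1}{\tilde k}\hat H^{\tilde k}(x^n)\ \le\ \frac{2k}{\tilde k}+\frac{1}{k}\Big(\hat H^k(x^n)+1\Big)+o_n(1),
\]
and the analogous inequality with $\hat H^k_{\text{sw}}$ on the right. Taking $n\to\infty$, then $\tilde k\to\infty$, then $k\to\infty$ gives both the existence of $\lim_k\limsup_n\tfrac{1}{k}\hat H^k$ (and its $\liminf$ counterpart) and the two-sided comparison with the sliding-window quantities. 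This ``two-scale'' subadditivity-type input is what replaces your bootstrap step; without it, I do not see how to close the argument from $M_k$ alone.
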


The \emph{porosity} of a noise sequence $z^\infty \in \mathcal{Z}^\infty$ 
is defined in best-case 
\[ \overline{\sigma}(z^\infty) = \log_2 |\mathcal{Z}| - \underline{\rho}(z^\infty) \] and worst-case 
\[ \underline{\sigma}(z^\infty) = \log_2 |\mathcal{Z}| - \overline{\rho}(z^\infty)\] varieties as well.  
Observe the sign changes: while a ``good'' compressibility is small, a ``good'' porosity is large.
The remainder of this paper clarifies the operational significance of these quantities.

\section{Statement of results}
\label{sec:StatementOfResults}
The results of this paper may be summarized as follows:
\begin{enumerate}
\item A converse that upper-bounds the best-case achievable rate by an FS scheme.
\item A converse that upper-bounds the worst-case achievable rate by an FS scheme.
\item A sequence of universal FS schemes $\{\overline{\mathcal{F}^m}\}_{m=1}^\infty$ that simultaneously achieve the 
best-case and worst-case converse bounds for \emph{any} noise sequence $z^\infty$.
\end{enumerate}

\noindent Formally, each of these three statements corresponds to a theorem:

\begin{theorem}
\label{thm:best-case-converse}
Suppose an FS scheme best-case $p$-achieves $(R,\epsilon)$. 
If $p>0$, then 
\[
R \leq h_b(\epsilon) + \overline{\sigma}(z^\infty) \mbox{.}
\]
\end{theorem}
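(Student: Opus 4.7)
The plan is to combine Fano's inequality with a finite-state source encoder for $z^\infty$ built from the channel scheme, reducing the theorem to Ziv and Lempel's finite-state compressibility converse. By best-case $p$-achievability with $p>0$, there is a positive-probability event on which some (possibly random) subsequence $\{n_i\}$ has $R_{n_i}\to R$ and $\epsilon_{n_i}\to\epsilon$. Taking $M^\infty$ to be uniform i.i.d.\ Bernoulli and independent of $\theta^\infty$, and noting that $\widehat{M}^{nR_n}$ is a deterministic function of $(y^n,\theta^n)$ attaining bit-error rate $\epsilon_n$ against $M^{nR_n}$, Fano's inequality yields
\[
I(M^{nR_n};\, y^n\mid\theta^n)\;\ge\; nR_n\bigl(1-h_b(\epsilon_n)\bigr)-o(n).
\]

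The heart of the proof is the construction of an information-lossless finite-state source encoder $\mathcal{C}_n$ for $z^n$ of normalized length at most $1 - R_n(1-h_b(\epsilon_n)) + o(1)$. Fixing a favorable realization of $(M^\infty,\theta^\infty)$ (which exists on the probability-$p$ event), $\mathcal{C}_n$ reads $z^n$ symbol by symbol, simulating the channel encoder and decoder: the feedback $y_{i-1}=x_{i-1}+z_{i-1}$ is always available to $\mathcal{C}_n$ since the encoder's previous output $x_{i-1}$ is computable from $(M,\theta,y^{i-2})$ and $z_{i-1}$ has just been read. In this way $\mathcal{C}_n$ obtains $y^n$ and the decoder estimate $\widehat{M}^{nR_n}$. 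It then outputs two pieces: an efficient description of the Hamming-error pattern $M^{nR_n}\oplus\widehat{M}^{nR_n}$ (length $\approx nR_nh_b(\epsilon_n)$, matching Fano), and a residual index pinning $y^n$ down within the decoder's $\widehat{M}^{nR_n}$-decoding region (length $\approx n(1-R_n)$). Decompression inverts these steps: recover $\widehat{M}^{nR_n}$ from $M^{nR_n}$ and the error pattern, then $y^n$ from the residual index, and finally $z^n = y^n-x^n$ by re-simulating the encoder.

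Applying the Ziv--Lempel finite-state compressibility converse (Theorem~1 of~\cite{ZivL1978}) to $\mathcal{C}_n$, together with the $\liminf$ form of Lemma~\ref{lem:compressibility}, and sending $n_i\to\infty$ and the compressor state count to infinity, yields
\[
\underline{\rho}(z^\infty)\;\le\; 1 - R\bigl(1-h_b(\epsilon)\bigr),\qquad\text{i.e.,}\qquad R\bigl(1-h_b(\epsilon)\bigr)\;\le\;\overline{\sigma}(z^\infty).
\]
Adding the trivial inequality $Rh_b(\epsilon)\le h_b(\epsilon)$ (which uses $R\le\log_2|\mathcal{X}|=1$) gives $R\le h_b(\epsilon)+\overline{\sigma}(z^\infty)$, as claimed.

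The main obstacle is the finite-state realization of $\mathcal{C}_n$, specifically the residual-index step: a decoding region is a global object with no inherent bounded-state description, and indexing $y^n$ inside it naively would blow up the state count. The intended remedy is to use the decoder's own state trajectory as a streaming enumeration of the region's elements, so that $\mathcal{C}_n$ can be implemented with state space polynomial in $|S|$ and the Ziv--Lempel converse can be invoked in the limit of large state. A secondary technicality is the passage from the random-subsequence statement of best-case achievability to a deterministic subsequence along which Fano's inequality can be applied directly, handled by a standard extraction argument on the positive-probability event.
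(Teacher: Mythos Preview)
Your proposed compressor $\mathcal{C}_n$ is not finite-state in the Ziv--Lempel sense, and this is a genuine gap rather than a technicality. To simulate the channel encoder at step $i$, $\mathcal{C}_n$ must produce $M_{p_i}^{p_i+\ell}$ and $\theta_i$; once you have fixed the realization $(M^\infty,\theta^\infty)$, these are read off from fixed infinite sequences at positions $p_i$ and $i$ that grow without bound. Tracking those positions already requires unbounded state, before you even get to the residual-index enumeration you flagged as the obstacle. The ZL converse you want to invoke applies to a single $s$-state information-lossless encoder whose performance is then examined as $n\to\infty$; a family $\{\mathcal{C}_n\}$ whose state count grows (here, linearly) with $n$ does not qualify. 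Your proposed remedy of streaming along the decoder's state trajectory cannot fix this, since the encoder simulation is where the blow-up first occurs. A secondary issue is that you use Fano's inequality with $M$ random to motivate the length budget, but then fix $M^\infty$ to build the compressor; for a fixed realization the error pattern has a specific empirical composition, not an $h_b(\epsilon_n)$-length description, so the two halves of the argument are not actually connected.

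The paper avoids these problems by never invoking the ZL finite-state converse. Instead it works block-by-block: for each $k$, it looks at the $k$-partition $(M^{L_i},z^k,\theta^k)_i$ induced by the scheme, passes to a limiting empirical distribution $p$ on a suitable subsequence, and builds a variable-length code for $(M^L,z^k)$ with side information $\theta^k$ whose codeword is simply $(s^{(\mathrm e)},s^{(\mathrm d)},M_{L+1}^{L+\ell},y^k,g(e^L))$. Shannon's source-coding converse then forces the expected length to exceed $H_p(M^L,z^k\mid\theta^k)$. The two points where your approach stalls are handled by dedicated lemmas: the variable-length source $M^L$ is controlled by the Selection Lemma (Lemma~\ref{lem:selection}) together with a measure-zero argument (Lemma~\ref{lem:limiting-distribution} and Corollary~\ref{cor:limiting-distribution}) showing that, almost surely over the i.i.d.\ source, $H_p(M^L\mid z^k,\theta^k)\ge \E{L}_p-1$; and the link to compressibility is made through the block-entropy characterization of Lemma~\ref{lem:compressibility}, not through a finite-state compressor at all.
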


\begin{theorem}
\label{thm:worst-case-converse}
Suppose an FS scheme worst-case $p$-achieves $(R,\epsilon)$. 
If $p>0$, then 
\[
R \leq h_b(\epsilon) + \underline{\sigma}(z^\infty) \mbox{.}
\]
\end{theorem}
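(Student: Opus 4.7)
The plan is to establish a single pointwise estimate that underlies both converse theorems, and then specialize the limit operations to the worst-case regime. The pointwise bound I would aim for is: for any FS scheme, any block length $k$, and any noise prefix $z^n$,
\[
R_n \;\leq\; 1 - \frac{1}{k}\,\hat{H}^k(z^n) + h_b(\epsilon_n) + \delta(n,k),
\]
with $\delta(n,k) \to 0$ as $n \to \infty$ for each fixed $k$. The right-hand side couples the scheme's instantaneous performance to the $k$-block empirical statistics of the noise, which is exactly the quantity Lemma \ref{lem:compressibility} lets us convert into $\underline{\sigma}(z^\infty) = 1 - \overline{\rho}(z^\infty)$.

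To prove the pointwise bound I would fix $k$, partition $z^n$ into $\lfloor n/k\rfloor$ blocks, and randomize over a uniform permutation of these blocks. Such a permutation preserves the empirical $k$-block distribution $\hat{p}^k[z^n]$ while making the permuted noise exchangeable, and therefore well-approximated (for large $n$) by an i.i.d.\ $k$-block source of per-symbol entropy $\tfrac{1}{k}\hat{H}^k(z^n)$. Applying a standard Fano/mutual-information converse to the induced stationary additive-noise channel, then relating the performance of the scheme on the permuted sequence back to its performance on $z^n$ via a symmetry argument, yields the displayed inequality; the residual $\delta(n,k)$ collects state-initialization terms, the one-bit Fano slack, and the approximation error between sampling $k$-blocks without replacement and i.i.d.\ sampling from $\hat{p}^k$.

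Given the pointwise bound, the worst-case theorem falls out by a direct rearrangement of limits. On the positive-probability event that the scheme worst-case achieves $(R,\epsilon)$, for every $\eta>0$ we eventually have $R_n \geq R-\eta$ and $\epsilon_n \leq \epsilon+\eta$. Fix $k$ and extract a subsequence $n_i$ along which $\tfrac{1}{k}\hat{H}^k(z^{n_i}) \to \limsup_n \tfrac{1}{k}\hat{H}^k(z^n)$; substituting into the pointwise bound and sending $i\to\infty$ yields
\[
R - \eta \;\leq\; 1 - \limsup_{n\to\infty} \frac{1}{k}\hat{H}^k(z^n) + h_b(\epsilon+\eta).
\]
Sending $k\to\infty$ and invoking Lemma \ref{lem:compressibility} identifies the limsup with $\overline{\rho}(z^\infty)$; sending $\eta\to 0$ gives $R \leq h_b(\epsilon) + \underline{\sigma}(z^\infty)$. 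The analogous derivation of Theorem \ref{thm:best-case-converse} differs only in the composition of limit operations: the subsequence is chosen along the scheme's best-case achieving times, the bound involves $\liminf_n \tfrac{1}{k}\hat{H}^k(z^n)$, and Lemma \ref{lem:compressibility} identifies this with $\underline{\rho}(z^\infty)$.

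The hard part is the pointwise bound. The finite-state dynamics of the encoder and decoder trace out a specific trajectory dictated by the actual noise sequence $z^n$, so it is not automatic that a permutation-averaging or stationary-coupling argument faithfully preserves both the scheme's rate/error and the empirical $k$-block statistics. Calibrating the averaging so that the scheme on the permuted sequence performs, on average, no better than it does on $z^n$, while simultaneously making the induced noise distribution close (in a sense strong enough to apply a probabilistic converse) to a stationary ergodic process of per-symbol entropy $\tfrac{1}{k}\hat{H}^k(z^n)$, is the principal technical obstacle; once this is in place, the worst-case converse reduces to the limit manipulation above.
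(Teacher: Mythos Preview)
Your limit manipulation from the pointwise bound to the final statement is fine and matches the paper's last step. The gap is in the pointwise bound itself, and specifically in the permutation strategy you outline. The FS scheme under consideration is designed with full knowledge of $z^\infty$: the state machines, the encoding and decoding maps, and the lookahead can all be tailored to this one sequence. Permuting the $k$-blocks of $z^n$ yields a different sequence with the same $k$-type, but the scheme's performance on that new sequence bears no controlled relation to its performance on $z^n$ --- it may be far worse or far better. The scheme is not equivariant under block permutations, so averaging over permutations tells you about the scheme's \emph{average} behavior over a type class, not about its behavior on the specific $z^n$ the designer targeted. You flag this as the principal obstacle, but the sketch gives no mechanism to bridge it, and I do not see one along this route: the ``symmetry argument'' you invoke does not exist here.

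The paper avoids this entirely by a source-coding reduction rather than a channel-converse-after-randomization. From the given FS scheme it builds a lossless variable-length code for the pair $(M^L,z^k)$ with side information $\theta^k$: the codeword records the initial encoder and decoder states ($2\log s$ bits), the $\ell$ lookahead source bits, the channel output block $y^k$ ($k$ bits), and a Huffman description of the error pattern $e^L$ (about $k\,h_b(\epsilon)$ bits). To decode, one simulates the FS decoder on $(s^{(d)},y^k,\theta^k)$ to get $\widehat{M}^L$, corrects via $e^L$ to get $M^L$, then simulates the FS encoder on $(s^{(e)},M^{L+\ell},y^k,\theta^k)$ to recover $x^k$ and hence $z^k=y^k-x^k$. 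Shannon's lossless converse then forces the expected length to dominate $H_p(M^L,z^k\mid\theta^k)\geq H_p(z^k)+H_p(M^L\mid z^k,\theta^k)$, and a separate measure-zero lemma (for i.i.d.\ Bernoulli$(1/2)$ source, the event that some limiting empirical distribution satisfies $\E{L}>H(M^L\mid z^k,\theta^k)+1$ has probability zero) lower-bounds the last term by $kR-1$. Rearranging gives the pointwise inequality you were aiming for, with explicit slack $(2\log s+\ell+2)/k$ coming from the finite-state and finite-lookahead parameters --- no averaging over noise sequences is needed, because the argument works directly on $z^n$ through the constructed source code.
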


\begin{theorem}
\label{thm:achievability}
There exists a sequence of schemes $\overline{\mathcal{F}^m}$,
which for an iid Bernoulli$(1/2)$ source $M^\infty$ and every noise sequence $z^\infty$
best-case achieves
\[ \overline{\sigma}(z^\infty) - \underline{\delta}_m(z^\infty) \; , \; \epsilon_m/(\underline{\sigma}(z^\infty) - \overline{\delta}_m(z^\infty)) \mbox{,}\]
and worst-case achieves
\[
\underline{\sigma}(z^\infty) - \overline{\delta}_m(z^\infty) \; , \;
\epsilon_m/(\underline{\sigma}(z^\infty) - \overline{\delta}_m(z^\infty)) \mbox{,}
\]
with probability one, where $\epsilon_m$, $\underline{\delta}_m$, and $\overline{\delta}_m$
all go to zero.
\end{theorem}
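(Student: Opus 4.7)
The plan is to build $\overline{\mathcal{F}^m}$ by lifting the Shayevitz--Feder empirical-capacity scheme to super-symbols of length $m$. View each consecutive block of $m$ channel uses as one symbol in the alphabet $\{0,1\}^m$, so the induced channel is an additive-noise channel over the $2^m$-ary alphabet with noise super-sequence $\tilde{z}^\infty$ whose $i$-th symbol is $z_{(i-1)m+1}^{im}$. Applied to this blocked channel, a Shayevitz--Feder-type variable-rate fixed-blocklength code attains, with high probability, a per-super-symbol rate arbitrarily close to $m-\hat{H}^1(\tilde{z}^N)=m-\hat{H}^m(z^{mN})$, which corresponds to a per-original-channel-use rate of $1-\frac{1}{m}\hat{H}^m(z^n)$. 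By Lemma \ref{lem:compressibility}, pushing $n\to\infty$ and then $m\to\infty$ drives this quantity to $\underline{\sigma}(z^\infty)$ in the $\limsup$-over-$n$ sense and to $\overline{\sigma}(z^\infty)$ along some subsequence in the $\liminf$-over-$n$ sense, which is precisely what is required.

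First I would fix, for each $m$, a blocklength $N_m$ (in super-symbols) large enough that the blocked Shayevitz--Feder scheme attains empirical capacity within $\epsilon_m/2$ with error probability at most $\epsilon_m$, with $\epsilon_m\downarrow 0$. I would then define $\overline{\mathcal{F}^m}$ as a ``repeated finite-extent'' scheme in the sense of Sec.~\ref{sec:achievability-classes-of-schemes}: run the blocked Shayevitz--Feder code for $mN_m$ channel uses, reset all state variables, and repeat on a fresh epoch. Because one epoch of the blocked code has a bounded lookahead (at most $N_m$ super-symbols of $M^\infty$) and a bounded state set (arithmetic-coding registers of precision depending only on $m$ and $N_m$), the reset-and-repeat construction falls squarely into the finite-state class of Sec.~\ref{sec:ProblemSetup}; the common-randomness tape $\theta^\infty$ supplies the codebook randomization required by Shayevitz--Feder.

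Next I would carry out the rate and error analysis via an epoch decomposition. Along a length-$n$ prefix there are $K_n=\lfloor n/(mN_m)\rfloor$ complete epochs, and in epoch $k$ the scheme decodes at least $mN_m\bigl(1-\tfrac{1}{m}\hat{H}^m(z_{(k-1)mN_m+1}^{kmN_m})-\epsilon_m/2\bigr)$ source bits with probability $\geq 1-\epsilon_m$. Summing and normalizing yields
\[
R_n \;\geq\; \Bigl(1-\tfrac{1}{m}\widehat{H}^m(z^n)\Bigr)-\tfrac{\epsilon_m}{2}-o_n(1),
\]
up to fluctuations that vanish when the bad-epoch fraction is controlled. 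Because successive epochs use independent fresh realizations of $\theta$, the per-epoch failures are independent; choosing $N_m$ so that $\sum_m\epsilon_m<\infty$ and invoking Borel--Cantelli converts the probabilistic Shayevitz--Feder guarantee into an almost-sure bound on the fraction of failed epochs, driving both the rate penalty and the bit-error contribution to $\epsilon_m/\underline{\sigma}(z^\infty)$ after normalization. Taking $\liminf_n$ and $\limsup_n$ and applying Lemma \ref{lem:compressibility} yields the worst-case and best-case conclusions, with $\overline{\delta}_m$ and $\underline{\delta}_m$ absorbing the $\epsilon_m/2$ coding loss, the finite-precision arithmetic loss in the Shayevitz--Feder encoder, and the boundary correction from the incomplete final epoch.

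The principal obstacle I anticipate is twofold. First, one must genuinely confine the state: a faithful Shayevitz--Feder/Horner encoder uses cumulative-probability registers whose precision grows with the blocklength, so I would have to specify a fixed-precision implementation whose rounding error contributes only $o_m(1)$ to $\overline{\delta}_m,\underline{\delta}_m$. Second, Shayevitz--Feder's guarantees are stated per block and are probabilistic; promoting them to the almost-sure, simultaneously-best-and-worst-case statement required here hinges on the independence of $\theta$ across epochs and on Borel--Cantelli, which in turn constrains how fast $N_m,1/\epsilon_m$ may grow with $m$ while still yielding finite-state schemes indexed only by $m$. Balancing these two requirements is the delicate part of the argument.
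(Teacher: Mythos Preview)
Your overall architecture---lift Shayevitz--Feder to $m$-super-symbols, then repeat the resulting finite-extent block---is exactly the construction the paper uses (Corollary~\ref{cor:FEConstruction} and the repetition scheme $\overline{\mathcal{F}^m}$), and the appeal to Lemma~\ref{lem:compressibility} to pass from $\hat H^m$ to $\underline\rho,\overline\rho$ is also the same. The finite-state concern you flag is not an issue: a finite-extent scheme is finite by definition, and the paper shows (just after defining repetition schemes) that any repetition scheme lies in class FSAF, hence in FS, so no fixed-precision argument is needed.

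The genuine gap is your independence claim. Successive epochs are \emph{not} independent: the source block $M_{(i)}^{N}$ fed to epoch $i$ begins at position $1+\sum_{j<i}L_j$, and since $L_j$ depends on the randomness and on $M_{(j)}^{N}$ (which in turn can overlap $M_{(j+1)}^{N}$ whenever $L_j<N$), the indicators $E_i$ and $T_i$ are correlated across epochs even though the common randomness is fresh. Consequently your Borel--Cantelli step does not apply; and in any case a summability condition $\sum_m\epsilon_m<\infty$ over the scheme index $m$ says nothing about the almost-sure fraction of failed epochs \emph{within a fixed} $\overline{\mathcal{F}^m}$, which is what you need.

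The paper closes this gap in two steps. First, Lemma~\ref{lem:markov-relation} establishes the Markov relations $E_i-M_{(i)}^{N}-E^{i-1}$ and $T_i-M_{(i)}^{N}-T^{i-1}$; combined with the fact that the Shayevitz--Feder guarantees of Corollary~\ref{cor:FEConstruction} hold uniformly over the source block, this yields the \emph{uniform conditional} bounds $\mathbf{P}(E_i=1\mid E^{i-1})\le\epsilon_m$ and $\mathbf{P}(T_i=1\mid T^{i-1})\ge 1-\epsilon_m$. Second, Lemma~\ref{lem:limitBounds} couples any binary process with such one-sided conditional bounds to an i.i.d.\ Bernoulli process and invokes Kolmogorov's SLLN (Lemma~\ref{lem:lawOfLargerNumbers}) to obtain the almost-sure $\limsup/\liminf$ control. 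This is the substitute for the SLLN/Borel--Cantelli argument you sketched, and without it (or an equivalent martingale/Azuma argument) your proof does not go through. You will also need the concavity-of-entropy step (step~(d) in the rate chain) to pass from the average of per-epoch empirical entropies $\frac{1}{k}\sum_i\hat H^m(z_{(i)}^{N(m)})$ to $\hat H^m(z^{kN(m)})$; your displayed lower bound on $R_n$ implicitly assumes this without justification.
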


Theorems \ref{thm:best-case-converse} and \ref{thm:worst-case-converse} are
proven in Section \ref{sec:converse}.  In Section \ref{sec:achievability},
we introduce the schemes $\{\overline{\mathcal{F}^m}\}_{m=1}^\infty$ and
prove Theorem \ref{thm:achievability}.

\section{Proof of Converse}
\label{sec:converse}
\subsection{Definitions and Lemmas}
In order to prove the converse theorems, a series of definitions
and lemmas is first required.  

\begin{lemma}
\label{lem:selection}
(Selection Lemma)
Suppose $X^\infty$ is iid Bernoulli$(1/2)$ and
$L$ is a random positive integer with arbitrary conditional distribution
$p_{L |X^\infty}$ with respect to $X^\infty$.  Then $H(X^L) \geq \E{L}$.
\end{lemma}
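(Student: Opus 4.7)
The plan is to derive a pointwise upper bound on the pmf of $X^L$ and then take expectations to obtain the entropy inequality. The essential ingredient is the Bernoulli$(1/2)$ assumption, which supplies a uniform $2^{-\ell}$ bound on the marginal probability that $X^\ell$ equals any particular binary string of length $\ell$, independently of how $L$ was chosen.

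First I would observe that $X^L$, viewed as a random element of $\{0,1\}^*$, encodes both its length and its contents, so $L$ is a deterministic function of $X^L$. For any $\ell \geq 1$ and $x^\ell \in \{0,1\}^\ell$, the event $\{X^L = x^\ell\}$ is therefore the intersection of $\{L = \ell\}$ with $\{X^\ell = x^\ell\}$, which immediately gives $\P{X^L = x^\ell} \leq \P{X^\ell = x^\ell} = 2^{-\ell}$. Taking negative logs on the support of $X^L$ and averaging against the distribution of $X^L$ then yields
\[
H(X^L) \; = \; \E{-\log \P{X^L}} \; \geq \; \E{L},
\]
as desired.

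Since the mechanics are so short, the ``main obstacle'' is conceptual rather than technical: one must recognize that no stopping-time or causality property of $L$ is needed, only the marginal uniformity of $X^\ell$. Because the pointwise bound depends exclusively on the marginal distribution of the prefix, the argument is blind to whether $L$ is computed from the past of $X^\infty$, its future, or additional external randomization, which is exactly the level of generality asked for in the statement. The same inequality also gracefully handles the edge case $\E{L} = \infty$, since then $H(X^L) = \infty$ as well.
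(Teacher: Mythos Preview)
Your argument is correct and is essentially identical to the paper's proof: both use the inclusion $\{X^L = x^\ell\} \subseteq \{X^\ell = x^\ell\}$ to obtain $\P{X^L = x^\ell} \leq \P{X^\ell = x^\ell} = 2^{-\ell}$, then plug this pointwise bound into the definition of entropy to get $H(X^L) \geq \E{L}$.
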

\begin{proof}
\begin{eqnarray*}
H(X^L) & = & \sum_{x^l} p(X^L = x^l) \log \frac{1}{p(X^L = x^l)} \\
& \geqlabel{a} & \sum_{x^l} p(X^L = x^l) \log \frac{1}{p(X^l = x^l)} \\
& \geqlabel{b} & \sum_{x^l} p(X^L = x^l)l \\
& = & \E{L}
\end{eqnarray*}
where step (a) follows because if $X^L = x^l$ then $X^l$ must
necessarily equal $x^l$.  Step (b) follows from the iid Bernoulli $(1/2)$
distribution of $X^\infty$.
\end{proof}

\begin{definition}
Let $\{L_i\}_{i=1}^{\infty}$ be a bounded sequence of nonnegative integers,
and let $M^\infty$ and $z^\infty$ as usual denote binary sequences.  
The $k$-partition of $(M^\infty,z^\infty)$ according to $\{L_i\}$
is the sequence of blocks
\[
(M^{L_i},z^k)_i = M_{\sum_{j=1}^{i-1}L_j + 1}^{ \sum_{j=1}^i L_j }, z_{(i-1)k+1}^{ik} \mbox{.}
\]
In this context, $\{L_i\}$ are referred to as the partition lengths.
\end{definition}

\begin{definition}
\label{def:limiting-distribution}
Let $x^\infty$ be a sequence of symbols drawn from a finite alphabet $\mathcal{X}$.
If there exists a series of sample points $\{n_i\}_{i=1}^\infty$ such that
the sequence $\hat{p}^1(x)[x^{n_i}]$ converges to a distribution $\hat{p}(x)$,
$\hat{p}(x)$ is said to be a limiting distribution for $x^\infty$.
\end{definition}
Observe that for any finite-alphabet sequence $x^\infty$ at least one limiting
distribution exists: $\hat{p}^1(x)[x^n]$ is an infinite sequence in a compact
set, so at least one convergent subsequence must exist.

\begin{definition}
Let $z^\infty$ be a finite-alphabet sequence.  The set $\mathcal{M}_k(z^\infty)$ 
consists of all binary sequences $M^\infty$ such that there exist partition lengths
$\{L_i\}$, a resulting $k$-partition $\{(M^{L_i},z^k)_i\}$,
and a limiting distribution $\hat{p}(L,M^L,z^k)$ for the sequence $\{L_i,(M^{L_i},z^k)_i\}$ such that
\beq
\E{L}_{\hat{p}} > H_{\hat{p}}(M^L | z^k) + 1 \mbox{.} \label{eq:m-definition}
\eeq
\end{definition}
We may interpret the set in the following manner.  Suppose first that a
``genie'' partitions the source sequence $M^\infty$ into an arbitrary series 
of variable-length blocks $\{(M^{L_i})_i\}$.  Each block $(M^{L_i})_i$, of length $L_i$, is
then source-coded with side information $(z^k)_i$ at average rate less than $H_{\hat{p}}(M^L | z^k) + 1$.  The set $\mathcal{M}_k(z^\infty)$
consists of all source sequences that, in a sense, allow such a genie/side-information-source-coding
setup to compress strictly better than one bit per source symbol.  One would expect that the occurrence of such a set is a rare
event when the source is drawn iid Bernoulli$(1/2)$.  This is formalized with the following lemma.

\begin{lemma}
\label{lem:limiting-distribution}
Let $z^\infty$ be a fixed finite-alphabet sequence, and let $M^\infty$
be drawn from an iid Bernoulli$(1/2)$ process.
Then the probability that $M^\infty \in \mathcal{M}_k(z^\infty)$
is zero for all $k$.
\end{lemma}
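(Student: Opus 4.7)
My plan is a Borel-Cantelli argument driven by a method-of-types count of compatible partition/source pairs.

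Since every qualifying partition $\{L_i\}$ is bounded, I would first decompose
\[
\mathcal{M}_k(z^\infty) = \bigcup_{\ell=1}^\infty \mathcal{M}_k^\ell(z^\infty),
\]
where $\mathcal{M}_k^\ell$ requires $L_i \leq \ell$ for every $i$. This countable union reduces the problem to showing $P(\mathcal{M}_k^\ell(z^\infty)) = 0$ for each fixed $\ell$. With $\ell$ fixed, joint distributions on $(L, M^L, z^k)$ with $L \le \ell$ live in a compact finite-dimensional simplex $\Delta$, and
\[
B = \{\hat{p} \in \Delta : \E{L}_{\hat{p}} > H_{\hat{p}}(M^L \mid z^k) + 1 \}
\]
is an \emph{open} subset, since both $\E{L}$ and $H(M^L\mid z^k)$ are continuous in $\hat{p}$.

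Let $\hat{p}_n(\vec{L}, M^\infty)$ denote the empirical distribution of $\{(L_i, M^{L_i}, z^k_i)\}_{i=1}^n$, and
\[
A_n = \{M^\infty : \exists (L_1,\ldots,L_n) \in \{0,\ldots,\ell\}^n \text{ with } \hat{p}_n(\vec{L}, M^\infty) \in B\}.
\]
If $M^\infty \in \mathcal{M}_k^\ell$, then some infinite partition $\vec{L}$ and indices $n_j \to \infty$ satisfy $\hat{p}_{n_j}(\vec{L}) \to \hat{p}^\star \in B$; openness of $B$ forces $\hat{p}_{n_j}(\vec{L}) \in B$ eventually, so $M^\infty \in A_{n_j}$ for infinitely many $j$. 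Hence $\mathcal{M}_k^\ell \subseteq \limsup_n A_n$, and by Borel-Cantelli it is enough to verify $\sum_n P(A_n) < \infty$.

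For the per-step bound I would count pairs $(\vec{L}, M^{N_q})$ with $N_q = n\E{L}_q$ that realize a fixed type $q \in B$. Because $z^k_i$ is fixed by block position, placing the $L_i$'s to match the marginal $q(L\mid z^k)$ admits at most $2^{n H_q(L\mid z^k)}$ assignments, and for each such $\vec{L}$ the block contents matching $q(M^L\mid L, z^k)$ number at most $2^{n H_q(M^L\mid L, z^k)}$ by standard type-class bounds. The chain rule $H_q(L\mid z^k) + H_q(M^L\mid L,z^k) = H_q(M^L\mid z^k)$ collapses the product to $2^{n H_q(M^L\mid z^k)}$. Since $M^{N_q}$ is uniform on $\{0,1\}^{N_q}$,
\[
P(\exists \vec{L}: \hat{p}_n(\vec{L}) = q) \le 2^{n(H_q(M^L\mid z^k) - \E{L}_q)} \le 2^{-n}
\]
by the defining inequality of $B$. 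Summing over the $\mathrm{poly}(n)$ types then gives $P(A_n) \le \mathrm{poly}(n) \cdot 2^{-n}$, a summable bound, so Borel-Cantelli and the countable union over $\ell$ close the argument.

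The main obstacle I expect is precisely the adversarial dependence of $\vec{L}$ on $M^\infty$: naively union-bounding over the $(\ell+1)^n$ possible partition prefixes would cost a factor of $\ell^n$ and overwhelm the $2^{-n}$ concentration as soon as $\ell \ge 2$. The chain-rule collapse above charges the partition's combinatorial cost to $n H_q(L\mid z^k)$ rather than $n\log(\ell+1)$, and the unit-bit margin in the definition of $B$ buys exactly the $2^{-n}$ needed for Borel-Cantelli. One can view this as an empirical, repeated-selection analogue of Lemma \ref{lem:selection}.
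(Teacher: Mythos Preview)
Your argument is correct and takes a genuinely different route from the paper's proof. The paper argues by contradiction via Shannon's source-coding converse: for each $M^\infty\in\mathcal{M}_k$ it chooses a data-dependent stopping index $n_M(M^\infty)$ and builds a variable-length encoder for $M^{n_M}$ by first describing the conditional Huffman code for the limiting distribution $\hat p(M^L\mid z^k)$ and then applying it block by block; the Selection Lemma (Lemma~\ref{lem:selection}) lower-bounds $H(M^{n_M})$ by $\E{n_M}$, and the defining inequality of $\mathcal{M}_k$ then forces expected code length strictly below entropy unless $\P{\mathcal{M}_k}=0$. Your approach is combinatorial rather than operational: the countable decomposition over the length bound $\ell$, the openness of $B$, and the method-of-types union bound replace the coding construction entirely, and your chain-rule collapse $H_q(L\mid z^k)+H_q(M^L\mid L,z^k)=H_q(M^L\mid z^k)$ (valid because $L$ is the length of $M^L$) plays exactly the role the Selection Lemma plays in the paper, absorbing the cost of the adversarial partition into the conditional entropy. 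The paper's proof is shorter and avoids type bookkeeping, while yours is self-contained (no auxiliary coding scheme, no Selection Lemma) and makes transparent why the unit-bit margin in the definition of $\mathcal{M}_k$ is precisely what the Borel--Cantelli step needs.
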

\begin{proof}
See Appendix \ref{app:limiting-distribution-proof}.
\end{proof}

We may easily expand this lemma to allow for common randomness.
\begin{corollary}
\label{cor:limiting-distribution}
Let $z^\infty$ be a fixed binary sequence, let $M^\infty$ 
be drawn from an iid Bernoulli$(1/2)$ process,
and let $\theta^\infty$ be a finite-alphabet sequence of arbitrary distribution that is independent of $M^\infty$.
Then the probability that $M^\infty \in \mathcal{M}_k((z_i,\theta_i)_{i=1}^\infty)$
is zero for any $k$.
\end{corollary}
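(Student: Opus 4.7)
The plan is to reduce this corollary to Lemma \ref{lem:limiting-distribution} by conditioning on a realization of $\theta^\infty$ and treating the side-information as the joint symbol $\tilde{z}_i = (z_i, \theta_i)$ drawn from the product alphabet.

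First I would fix an arbitrary realization $\theta^\infty = \vartheta^\infty$. The sequence $\tilde{z}^\infty$ defined by $\tilde{z}_i = (z_i, \vartheta_i)$ is then a fixed, deterministic sequence over the finite alphabet $\mathcal{X} \times \Theta$, which is exactly the form required by Lemma \ref{lem:limiting-distribution}. Since $\theta^\infty$ is independent of $M^\infty$, the conditional distribution of $M^\infty$ given $\theta^\infty = \vartheta^\infty$ is still iid Bernoulli$(1/2)$. All the ingredients of the lemma's hypothesis are thus satisfied in this conditional world.

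Next I would invoke Lemma \ref{lem:limiting-distribution} directly to conclude that
\[
\P{M^\infty \in \mathcal{M}_k(\tilde{z}^\infty) \,\big|\, \theta^\infty = \vartheta^\infty} = 0
\]
for every realization $\vartheta^\infty$ and every $k$. Integrating this conditional probability against the distribution of $\theta^\infty$ (i.e., taking expectation over $\vartheta^\infty$) immediately yields
\[
\P{M^\infty \in \mathcal{M}_k((z_i,\theta_i)_{i=1}^\infty)} = 0,
\]
which is the claim.

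There is no real obstacle here; the only point that requires any care is noting that the definition of $\mathcal{M}_k$ and the notion of a limiting distribution (Definition \ref{def:limiting-distribution}) are stated for finite-alphabet sequences and therefore apply verbatim when the side-information alphabet is enlarged from $\mathcal{X}$ to the finite product $\mathcal{X} \times \Theta$. The independence of $\theta^\infty$ from $M^\infty$ is what makes the conditioning argument legitimate, and without it the reduction would fail, since $\mathcal{M}_k$ is defined in terms of $M^\infty$.
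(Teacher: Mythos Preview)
Your proposal is correct and follows essentially the same approach as the paper: define the surrogate sequence $\widetilde{z}_i=(z_i,\vartheta_i)$ over the finite product alphabet, apply Lemma~\ref{lem:limiting-distribution} for each fixed realization $\vartheta^\infty$, and then use the independence of $\theta^\infty$ from $M^\infty$ to pass from the conditional statement to the unconditional one. Your write-up is in fact slightly more explicit than the paper's about the conditioning-and-averaging step.
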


\begin{proof}
First observe that Lemma \ref{lem:limiting-distribution} does not require
that $z^\infty$ be a binary sequence --- only that it be of a finite
alphabet.  As such, for a given sequence $\theta^\infty$ we may define the surrogate $z$-sequence 
$\widetilde{z}^\infty = (z_i,\theta_i)_{i=1}^\infty$.

Applying Lemma \ref{lem:limiting-distribution} for this surrogate sequence,
we have that for any fixed $z^\infty$ and $\theta^\infty$,  the probability of drawing an element of
$\mathcal{M}_k((z_i,\theta_i)_1^\infty)$ from a Bernoulli$(1/2)$ process is zero.  Since
$\theta^\infty$ is independent from $M^\infty$, the corollary follows.
\end{proof}

\subsection{Converse Lemma}
Although the converse results are presented as two distinct theorems,
at their heart is the same argument.  We present this core result
in the following lemma.

\begin{lemma}
\label{lem:converse}
Suppose an $s$-state $\ell$-lookahead FS scheme achieves $(R,\epsilon)$
 on points $\{n_i\}$ for a specific source sequence 
$M^\infty$, a specific channel noise sequence $z^\infty$, and a specific encoder/decoder common information
sequence $\theta^\infty$.  If for some $k \in \mathbb{Z}^+$ $M^\infty$ is not a member of $\mathcal{M}_k((z_i,\theta_i)_{i=1}^\infty)$,
and if $\hat{H}^k(z^n) + \hat{H}^k(\theta^n) - \hat{H}^k((z_i,\theta_i)_{i=1}^n) \rightarrow_{n\rightarrow\infty} 0$, then
\[
R \leq \frac{2 \log s + \ell + 2}{k} + h_b(\epsilon)
+ 1 - \limsup_{i\rightarrow\infty}\frac{1}{k} \widehat{H}^k(z^{n_i}) \mbox{.}
\]
\end{lemma}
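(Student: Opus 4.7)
The plan is to partition time into blocks of length $k$, study the resulting block empirical distribution, and combine two entropy bounds on the same quantity $H_{\hat{p}}(M^\Lambda \mid z^k, \theta^k)$: a lower bound from the non-membership hypothesis $M^\infty \notin \mathcal{M}_k((z_i,\theta_i))$, and an upper bound obtained from Fano's inequality together with the encoder/decoder structure. For each block index $i$ let $\Lambda_i = \sum_{j=(i-1)k+1}^{ik} L_j$ denote the number of source symbols decoded during block $i$, and consider the joint block empirical distribution of $(s_i^{(e)}, s_i^{(d)}, \Lambda_i, M^{\Lambda_i+\ell}_{p_i}, z^k_i, \theta^k_i, y^k_i)$ across the first $\lfloor n/k \rfloor$ blocks. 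Since $L_j$ is bounded for any FS scheme, these tuples take values in a compact set, so along $\{n_i\}$ we may pass to a further subsequence along which the empirical distribution converges to a limiting joint distribution $\hat{p}$. Its marginal on $(\Lambda, M^\Lambda, z^k, \theta^k)$ is then a valid limiting distribution in the sense of Definition 3, with side information $(z_i, \theta_i)_{i \ge 1}$ and partition lengths $\{\Lambda_i\}$.

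By the non-membership hypothesis, this limiting distribution satisfies $E_{\hat{p}}[\Lambda] \le H_{\hat{p}}(M^\Lambda \mid z^k, \theta^k) + 1$, while by the second hypothesis $I_{\hat{p}}(z^k; \theta^k) = 0$, so $H_{\hat{p}}(z^k \mid \theta^k) = H_{\hat{p}}(z^k) = \limsup_{i\to\infty} \hat{H}^k(z^{n_i})$. Adding this quantity to both sides and applying the chain rule yields
\[
E_{\hat{p}}[\Lambda] + H_{\hat{p}}(z^k) \;\le\; H_{\hat{p}}(M^\Lambda, z^k \mid \theta^k) + 1.
\]
For the upper bound, the key observation is that, conditioned on the block-start encoder state $s^{(e)}$, on $\theta^k$, and on the source lookahead $M^{\Lambda+\ell}$, the blocks $z^k$ and $y^k$ are in deterministic bijection, since one can run the encoder forward from either one using the feedback link. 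Introducing both the encoder and decoder start-states (each contributing at most $\log s$) and expanding,
\[
H_{\hat{p}}(M^\Lambda, z^k \mid \theta^k) \;\le\; 2\log s + k + \ell + H_{\hat{p}}(M^\Lambda \mid y^k, s^{(d)}, \theta^k),
\]
and since $\widehat{M}^\Lambda$ is a deterministic function of $(y^k, s^{(d)}, \theta^k)$, a bitwise Fano bound together with concavity of $h_b$ to average over bits and blocks gives $H_{\hat{p}}(M^\Lambda \mid y^k, s^{(d)}, \theta^k) \le E_{\hat{p}}[\Lambda]\,h_b(\epsilon) + 1$. Chaining these, dividing by $k$, identifying $E_{\hat{p}}[\Lambda]/k$ with $R$, and absorbing $R\,h_b(\epsilon) \le h_b(\epsilon)$ via $R \le 1$ produces the stated inequality.

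The main obstacle will be the bookkeeping between empirical block quantities and information-theoretic entropies under $\hat{p}$. States across consecutive blocks are not independent (the end-state of block $i$ equals the start-state of block $i+1$), so one has to argue carefully that the block empirical distribution does converge along a further subsequence without losing the joint structure needed. Likewise, applying Fano inside a block of random length $\Lambda$ and then averaging across blocks requires two invocations of Jensen's inequality in order to aggregate bit-level error probabilities into the single cumulative constant $\epsilon$ that appears in the lemma. Finiteness of the state set keeps the state corrections inside the $(2\log s + \ell + 2)/k$ slack, and boundedness of $L_j$ keeps the block-length variables tight, so these effects remain controlled.
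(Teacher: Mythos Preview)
Your argument follows the paper's architecture almost exactly: the same $k$-block partition, the same passage to a limiting block empirical distribution $\hat p$, the same lower bound on $H_{\hat p}(M^{\Lambda},z^k\mid\theta^k)$ via the non-membership hypothesis $M^\infty\notin\mathcal{M}_k((z_i,\theta_i))$, and the same use of the asymptotic independence of $z^k$ and $\theta^k$. The one substantive difference is in how the matching upper bound is obtained. The paper builds an explicit variable-length lossless code for $(M^L,z^k)$ given $\theta^k$---the codeword $(s^{(e)},s^{(d)},M_{L+1}^{L+\ell},y^k,g(e^L))$, with $g$ a Huffman code for the error-indicator string---and invokes Shannon's lossless-coding converse. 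You instead run an entropy chain: introduce the states and the lookahead bits, use the deterministic bijection $z^k\leftrightarrow y^k$ given $(s^{(e)},M^{\Lambda+\ell},\theta^k)$, and finish with a bitwise Fano bound. These are two faces of the same coin; your route is slightly more streamlined and avoids constructing the auxiliary code.

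Two points need tightening. First, the equality $H_{\hat p}(z^k)=\limsup_i \hat H^k(z^{n_i})$ does not follow from passing to \emph{an arbitrary} convergent subsequence of $\{n_i\}$; a generic limit point only gives $H_{\hat p}(z^k)\le \limsup_i \hat H^k(z^{n_i})$, which would weaken the final inequality in the wrong direction. You must first restrict to a subsequence $\{n_i^*\}\subset\{n_i\}$ along which $\hat H^k(z^{n_i^*})\to\limsup_i \hat H^k(z^{n_i})$, and only \emph{then} extract a further subsequence on which the full joint empirical distribution converges---exactly as the paper does. Second, your closing step ``absorb $R\,h_b(\epsilon)\le h_b(\epsilon)$ via $R\le 1$'' is not justified: nothing in the FS definition forces $R_n\le 1$ (the decoding-length function $d_L$ may output more than one bit per channel use). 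The paper sidesteps this by bounding the error-description length by $k\,h_b(\epsilon)+1$ rather than $E[\Lambda]\,h_b(\epsilon)+1$, so that the $h_b(\epsilon)$ term enters additively from the outset; if you stay with Fano you will need a separate argument to dispose of the multiplicative factor, or else accept the slightly weaker conclusion $R(1-h_b(\epsilon))\le (2\log s+\ell+2)/k + 1 - \tfrac{1}{k}\limsup_i \hat H^k(z^{n_i})$.
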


The general idea in proving this lemma is to turn any given FS scheme into a \emph{source} encoding/decoding scheme.
Consider an FS decoder that achieves $(R,\epsilon)$ on some points $\{n_i\}$, and ignore the minor complication
of common randomness $\theta^\infty$.
Given only the channel output $y^\infty$, the decoder produces an estimate of the source sequence $M^\infty$.
Knowing the source sequence and the channel output, the decoder is technically capable of ``simulating'' the encoder
and thereby obtaining both the channel input sequence $x^\infty$ and the noise sequence $z^\infty$.
One may therefore interpret the channel output $y^\infty$ as an encoding of the joint source sequence $(M^\infty,z^\infty)$.
The following proof utilizes a rigorous argument inspired by this intuition.

\begin{proof}
Let $e^\infty$ denote the error indication sequence $e_i = \mathds{1}_{M_i \neq \widehat{M}_i}$.
First, consider the $k$-partition of $(M^\infty,z^\infty,\theta^\infty, e^\infty)$
according to the given FS scheme:
\[
(M^{L_i},e^{L_i}, z^k,\theta^k)_{i=1}^\infty = \left( M_{p_{(i-1)k+1}}^{p_{ik+1}-1}, e_{p_{(i-1)k+1}}^{p_{ik+1}-1},  z_{(i-1)k+1}^{ik},\theta_{(i-1)k+1}^{ik} \right)_{i=1}^\infty \mbox{.}
\]
In other words, let $(M^{L_i},e^{L_i}, z^k,\theta^k)_i$ enumerate $k$-blocks of channel noise and common information, along with the source bits
that are estimated during each such block and the error indicators for these source bits.  Let $L_i = p_{ik+1} - p_{(i-1)k+1}$ denote the partition lengths.

Now define the sequence of points $\{n_i^*\} \subset \{n_i\}$ so that
\beq
\lim_{i\rightarrow\infty} \frac{1}{k} \widehat{H}^k(z^{n_i^*}) =
\limsup_{i\rightarrow\infty}\frac{1}{k}\widehat{H}^k(z^{n_i}) \mbox{,}
\label{eq:worstCaseLimit}
\eeq
and let $p(M^L,e^L,z^k,\theta^k)$ be a limiting distribution
of $(M^L,z^k,\theta^k)_i$ on these points $\{n_i^*\}$.  Recall
from Def. \ref{def:limiting-distribution} that such a limiting distribution always exists.

Suppose random variables $(M^L,e^L,z^k,\theta^k)$ are distributed according to $p(M^L,z^k,\theta^k)$.
We first use the FS scheme given in the lemma statement to construct a lossless source encoder/decoder for $(M^L,z^k)$, with
$\theta^k$ as side information.  By later requiring that the rate of this encoding exceed $H_p(M^L,z^k | \theta^k)$, the lemma may be proven.


\begin{description}
\item[E1]  Let $j\in \Z^+$.  We construct
a codeword for the source block $(M^{L_j},z^k)_j$
given side information block $(\theta^k)_j$ as follows:

\begin{enumerate}
\item To reduce clutter, we remove some of the unnecessary indices.
Denote the source bits used by the FS encoder during this $j$th block as 
$M^{\ell} = M_{p_{(j-1)k+1}}^{p_{(j-1)k +1} + \ell-1}$.
Similarly, let the source bits estimated by the FS decoder be referred to as 
$\widehat{M}^L = \widehat{M}_{p_{(j-1)k+1}}^{p_{jk+1}-1}$ and
the error indicators as $e^L = e_{p_{(j-1)k+1}}^{p_{jk+1}-1}$.
Additionally, let $s^{\text{(e)}} = s_{(j-1)k+1}^{\text{(e)}}$ and 
$s^\text{(d)} = s_{(j-1)k+1}^{\text{(d)}}$ indicate the 
initial encoder and decoder states, and let
$x^k = x_{(j-1)k+1}^{jk}$ and $y^k = y_{(j-1)k+1}^{jk}$
denote the channel inputs and outputs during the block.

%
%
%
\item Apply a binary Huffman code to $e^L$ to create the compressed representation $g(e^L) \in \{0,1\}^*$.
The Huffman code is designed according to the limiting empirical distribution $p(e^L)$.

\item Add the codeword $(s^{\text{(e)}}, s^{\text{(d)}}, M_{L+1}^{L+\ell}, y^k, g(e^L))$
to the codebook.

\item Observe that $(s^{\text{(e)}}, s^{\text{(d)}}, M_{L+1}^{L+\ell}, g(e^L))$
decodes uniquely into $(M^L,z^k)_j$ given side information block $(\theta^k)_j$:
\begin{itemize}
	\item Simulate the channel decoding operation with 
	initial state $s^{\text{(d)}}$, common information $\theta^k$, and channel output
	$y^k$.  This yields
	$\widehat{M}^L$.  Correcting for errors with the correctional
	information embedded in $g(e^L)$, we have $M^L$.
	\item Simulate the channel encoding operation using initial
	state $s^{\text{e}}$, feedback $y^k$, 
	common information $\theta^k$, source
	$M^L$ from the previous step, and $M_{L+1}^{L+\ell}$ from the codeword.
	This yields the channel input $x^k$, which produces $z^k$ when modulo-2 added
	to $y^k$.
\end{itemize}
Refer to this decoding operation on a codeword $c$ as $\mathcal{C}^{-1}(c,\theta^k)$.
\end{enumerate}

\item[E2]  Build the codebook by repeating step {\bf E1} for every
block $(M^L,z^k,\theta_k)_j$, $j\in \Z^+$.  Note that each codeword is of
length $2 \log s + \ell + k + \text{length}(g(e^L))$ and losslessly decodes into its
source block.  Call this codebook $\widetilde{\mathcal{C}}$. 

\item[E3] Define the codebook encoding function $F$
of a source sample $(M^L,z^k | \theta^k)$ as mapping to the shortest codeword in the set
$\{c \in \widetilde{\mathcal{C}}: \widetilde{\mathcal{C}}^{-1}(c,\theta^k) = (M^L,z^k) \}$.
\end{description}

We now establish the expected length of this code when applied to the probabilistic
source $(M^L,z^k | \theta^k)$.  As mentioned in step E2, a given codeword is
of length $2 \log s + \ell + k + \text{length}(g(e^L))$, where $e^L$ is the error sequence.
According to the assumptions of the lemma, the bit error rate on points $\{n_i^*\}$
is upper-bounded by $\epsilon$.  From this and from $p(e^L)$ being a limiting
distribution on $\{n_i^*\}$, the expected frequency of $1$ in $e^L$ may be upper-bounded
by $\epsilon$.  Therefore, the expected length of $g(e^L)$ is upper-bounded by $k h_b(\epsilon) + 1$,
and
\[ 
\E{\text{length}(F(M^L,z^k | \theta^k))}_p \leq 2 \log s + \ell + k(1+h_b(\epsilon))+1 \mbox{.}
\]


Since $F$ is a lossless variable-length encoder for sources drawn from
$p(M^L,z^k|\theta^k)$, the expected codeword length must exceed
the conditional entropy according to $p$:
\begin{eqnarray*}
2 \log s + \ell + k\left(1+h_b(\epsilon) \right) + 1&\geq & H_p(M^L,z^k|\theta^k) \\
&\geq & H_p(z^k|\theta^k) + H_p(M^L|z^k,\theta^k) \\
& \eqlabel{a} & H_p(z^k) + H_p(M^L|z^k,\theta^k) \\
&\geqlabel{b} & H_p(z^k) + \E{L}_p - 1 \\
& \eqlabel{c} & \limsup_{i\rightarrow\infty} \widehat{H}^k(z^{n_i}) + Rk - 1
\end{eqnarray*}
where (a) holds because of the final assumption in the theorem statement, (b) follows from Corollary \ref{cor:limiting-distribution},
and (c) is due to both \eqref{eq:worstCaseLimit} and the lemma's assumption about rate $R$ being achieved on points
$\{n_i^*\}$.  Rearranging terms proves the lemma.
\end{proof}

\subsection{Proof of Converses}
Armed with Lemma \ref{lem:converse}, it is a relatively straightforward matter
to prove Theorems \ref{thm:best-case-converse} and \ref{thm:worst-case-converse}.

\subsubsection*{Proof of Theorem \ref{thm:best-case-converse}}
We first note that because $\theta^\infty$ is drawn iid and $z^\infty$ is fixed,
 $\hat{H}^k(z^n) + \hat{H}^k(\theta^n) - \hat{H}^k((z_i,\theta_i)_{i=1}^n) \rightarrow 0$ 
with probability one for every $k$.
Furthermore, by Corollary \ref{cor:limiting-distribution}, $M^\infty \notin \mathcal{M}_k((z_i,\theta_i)_{i=1}^\infty)$
with probability one for every k.  Therefore, if $(R,\epsilon)$ is best-case-achieved with positive probability,
it must then be achieved for some specific
 $(M^\infty,\theta^\infty)$ such that $M^\infty \notin \mathcal{M}_k((z_i,\theta_i)_{i=1}^\infty$
 and $\hat{H}^k(z^n) + \hat{H}^k(\theta^n) - \hat{H}^k((z_i,\theta_i)_{i=1}^n) \rightarrow 0$
 for every $k$.
Let $\{n_i\}$ be the subsequence on which it is achieved.

Applying Lemma \ref{lem:converse},
\[
R \leq \frac{2 \log s + \ell + \sqrt{k}}{k} + h_b(\epsilon) 
+ 1 - \frac{1}{k} \limsup_{i\rightarrow\infty}\widehat{H}^k(z^{n_i}) \mbox{,}
\]
for any $k$. 

Taking the limit supremum as $k\rightarrow\infty$,
\begin{eqnarray*}
R & \leq & h_b(\epsilon) 
+ 1 - \liminf_{k\rightarrow\infty} \frac{1}{k} \limsup_{i\rightarrow\infty}\widehat{H}^k(z^{n_i}) \\
& \leq & h_b(\epsilon) 
+ 1 - \liminf_{k\rightarrow\infty} \frac{1}{k} \liminf_{i\rightarrow\infty}\widehat{H}^k(z^{n_i}) \\
& \leq & h_b(\epsilon) 
+ 1 - \liminf_{k\rightarrow\infty} \frac{1}{k} \liminf_{n\rightarrow\infty}\widehat{H}^k(z^{n}) \\
& = &h_b(\epsilon) + 1 - \underline{\rho}(z^\infty) \mbox{.}
\end{eqnarray*}

\subsubsection*{Proof of Theorem \ref{thm:worst-case-converse}}
If $(R,\epsilon)$ is worst-case-achieved with positive probability $p$,
then it must be worst-case achieved for some $(M^\infty,\theta^\infty)$
such that $M^\infty \notin \mathcal{M}_k((z_i,\theta_i)_{i=1}^\infty$ (because by Corollary \ref{cor:limiting-distribution} this occurs with probability one)
and $\hat{H}^k(z^n) + \hat{H}^k(\theta^n) - \hat{H}^k((z_i,\theta_i)_{i=1}^n) \rightarrow 0$ 
(because this occurs with probability one when $\theta^\infty$ is chosen iid).

We may therefore apply Lemma \ref{lem:converse} with $\{n_i\} = \mathbb{Z}^+$.  For
any $k$, we have that
\[
R \leq \frac{2 \log s + \ell + \sqrt{k}}{k} + h_b(\epsilon) 
+ 1 - \frac{1}{k} \limsup_{n\rightarrow\infty}\widehat{H}^k(z^{n}) \mbox{.}
\]
Since this holds for arbitrary $k$, we may take the limit infimum of the expression
with $k\rightarrow \infty$:
\begin{eqnarray*}
R & \leq & \liminf_{k\rightarrow\infty} \left( \frac{2 \log s + \ell + \sqrt{k}}{k} + h_b(\epsilon) 
+ 1 - \frac{1}{k} \limsup_{n\rightarrow\infty}\widehat{H}^k(z^{n}) \right) \\
& = & h_b(\epsilon) + 1 - \limsup_{k\rightarrow\infty}\frac{1}{k} \limsup_{n\rightarrow\infty}\widehat{H}^k(z^{n}) \\
& = & h_b(\epsilon) + 1 - \overline{\rho}(z^\infty) \mbox{.}
\end{eqnarray*}


\section{Proof of Achievability}
\label{sec:achievability}
In this section, a sequence of FS schemes is constructed
and guaranteed to achieve both the best-case
and worst-case bounds for any channel noise sequence
$z^\infty$.  This ``universal achievability'' is
analogous to the universal source coding achievability scheme
introduced by Ziv and Lempel \cite{ZivL1978}.

\subsection{Some Classes of Schemes}
\label{sec:achievability-classes-of-schemes}
To begin, several additional classes of schemes are introduced,
and their relationships with each other and with class FS are clarified.
This will prove useful in constructing the universal achievability schemes.

A \emph{finite-state active feedback} (FSAF) scheme
is a variation of the class FS
that allows for active feedback (Fig. \ref{fig:fs-schemes}).  It consists of the following:
\begin{enumerate}
\item An encoder state variable $s_i^{\text{(e)}}$, a decoder state variable $s_i^{\text{(d)}}$,
and a feedback state variable $s_i^{\text{(f)}}$, all taking values in a finite set.
\item A source pointer $p_i$ and a finite lookahead constant $\ell$.
\item An iid common randomness source $\theta_i \sim p_\theta$.
\item A finite-state feedback channel whose output at time $i$ is distributed
according to $U_i \sim p_{u|y,s}(u_i | y_i, s_i^{\text{(f)}})$.
\item An encoding function $x_i = e(s_i^{\text{(e)}}, M_{p_i}^{p_i+\ell}, \theta_i, u_i)$.
\item A decoding length function $L_i = d_L(s_i^{\text{(d)}},y_i, \theta_i, u_i)$ that also determines the update of the source pointer:
$p_{i+1} = p_i + L_i$.
\item A decoding function $\widehat{M}_{p_i}^{p_i + L_i - 1} = d_M(s_i^{\text{(d)}},y_i, \theta_i, u_i)$.
\item State-update functions for  the encoder 
$s_{i+1}^{\text{(e)}} = f_{\text{(e)}}(s_i^{\text{(e)}},M_{p_i}^{p_i+\ell}, \theta_i, u_i)$,
 decoder $s_{i+1}^{\text{(d)}} = f_{\text{(d)}}(s_i^{\text{(d)}}, y_i, \theta_i, u_i)$,
 and feedback channel $s_{i+1}^{\text{(f)}} = f_{\text{(f)}}(s_{i}^{\text{(f)}}, y_i, \theta_i)$.
\end{enumerate}

\begin{figure}
 \centering
  \psfrag{c}[cc][cc]{\small Unknown Channel}
  \psfrag{m}[cc][cc]{$\displaystyle M_1,\ldots,\underbrace{M_{p_i},M_{p_i+1},\ldots,M_{p_i+\ell}},M_{p_i+\ell+1},\ldots$}
  \psfrag{e}[cc][cc]{E}
  \psfrag{d}[cc][cc]{D}
  \psfrag{m2}[cc][cc]{$\widehat{M}_{p_i}^{p_i+L_i-1}$}
  \psfrag{x}[cc][cc]{$x_i$}
  \psfrag{y}[cc][cc]{$y_i$}
  \psfrag{z}[cc][cc]{$\displaystyle p_{u_i|y_{i-1},s^{\text{(f)}}}$}
  \psfrag{y-1}[cc][cc]{$u_i$}
  \psfrag{2}[cc][cc]{$2$}
  \psfrag{noise}[cc][cc]{$z^{\infty}$}
  \psfrag{+}[cc][cc]{$+$}
  \psfrag{t}[cc][cc]{$\theta_i$}
  \includegraphics[width=4.8in]{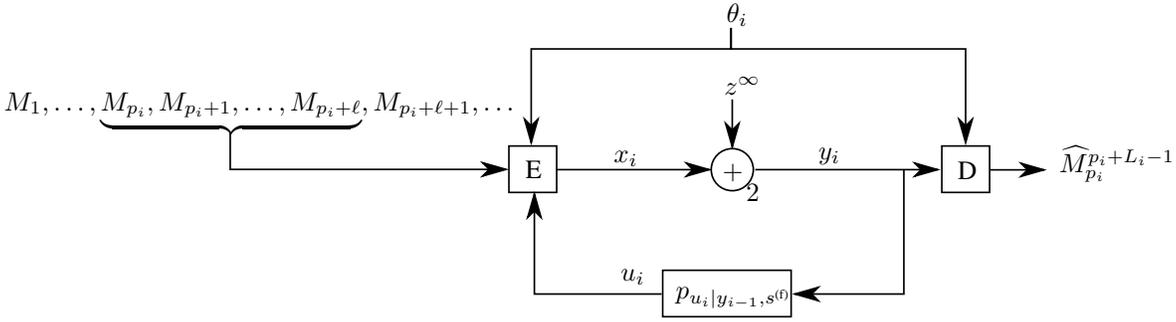}
  \caption{A finite-state active feedback scheme for a modulo-additive channel.}
 \label{fig:fsaf-schemes}
\end{figure}

\begin{lemma}
\label{lem:FSAFSisFSS}
The class of schemes FSAF is equivalent to class FS.
\end{lemma}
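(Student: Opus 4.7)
The plan is to prove the two inclusions $\text{FS} \subseteq \text{FSAF}$ and $\text{FSAF} \subseteq \text{FS}$. The first direction is immediate: any FS scheme can be viewed as an FSAF scheme whose feedback channel is the identity, i.e.\ $p_{u|y,s}$ puts all mass on $u_i = y_i$, with $s_i^{\text{(f)}}$ taking a single value. The substantive direction is therefore the second one, and my approach would be to show that an FSAF scheme can be realized as an FS scheme by absorbing both the feedback state and the stochastic feedback channel into the encoder/decoder internals.

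First I would augment the state spaces: replace $s_i^{\text{(e)}}$ by the pair $(s_i^{\text{(e)}}, s_i^{\text{(f)}})$ at the encoder and $s_i^{\text{(d)}}$ by $(s_i^{\text{(d)}}, s_i^{\text{(f)}})$ at the decoder. Both augmented states still take values in a finite set since the feedback state space is finite. The encoder already receives $y_i$ in the FS model, and the decoder observes $y_i$ as the channel output, so both sides can run the deterministic feedback-state update $s_{i+1}^{\text{(f)}} = f_{\text{(f)}}(s_i^{\text{(f)}}, y_i, \theta_i)$ in lockstep provided they are initialized identically.

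Next I would simulate the random draw $u_i \sim p_{u|y,s}(\cdot \mid y_i, s_i^{\text{(f)}})$ using the common randomness. The key observation is that the common-randomness source $\theta_i \sim p_\theta$ in the definition of FS is allowed to be any distribution on any finite alphabet; in particular, by taking $\theta_i$ to be a vector indexed by $(y, s^{\text{(f)}})$ pairs whose $(y,s)$-coordinate is an independent draw from $p_{u|y,s}(\cdot \mid y, s)$, there is a deterministic function $g$ with $u_i = g(y_i, s_i^{\text{(f)}}, \theta_i)$ whose induced distribution matches $p_{u|y,s}$. Because $\theta_i$ is shared, encoder and decoder obtain the same $u_i$. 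I would then define the FS encoding function by $e'((s^{\text{(e)}}, s^{\text{(f)}}), M, y, \theta) = e(s^{\text{(e)}}, M, \theta, g(y, s^{\text{(f)}}, \theta))$, and analogously for $d_L$, $d_M$, $f_{\text{(e)}}$, $f_{\text{(d)}}$, each of which simply invokes the FSAF counterpart after computing $u_i$ internally. Verifying that the joint distribution of $(x^\infty, y^\infty, \widehat{M}^\infty, L^\infty)$ produced by the resulting FS scheme equals that of the original FSAF scheme is then a straightforward induction on time.

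The only real subtlety, and the step I would flag as the main obstacle, is the internal sampling of $u_i$: one must confirm that the arbitrary conditional distribution $p_{u|y,s}$ can indeed be reproduced by a deterministic function of $(y, s^{\text{(f)}}, \theta)$ when $\theta$ has a finite alphabet. The vector construction sketched above sidesteps this cleanly because each coordinate is drawn from the exact target distribution; one could alternatively take $\theta_i$ uniform on a sufficiently fine partition and apply an inverse-CDF style lookup, which gives an exact reduction whenever $p_{u|y,s}$ has rational weights and a vanishing-error approximation otherwise. Either way, finiteness of the feedback alphabet and of the conditioning space is what makes the simulation possible within class FS.
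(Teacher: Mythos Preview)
Your proposal is correct and follows essentially the same route as the paper: the identity feedback channel gives $\text{FS}\subseteq\text{FSAF}$, and for the reverse inclusion both you and the paper augment the encoder and decoder states by the feedback state and simulate $u_i$ via a vector-valued common randomness whose $(y,s)$-coordinate is an independent draw from $p_{u|y,s}(\cdot\mid y,s)$. One small slip worth fixing: the new common randomness should be the pair $(\theta_i,\mathbf{U}_i)$ rather than just the lookup vector $\mathbf{U}_i$, since your definition $e'((s^{\text{(e)}},s^{\text{(f)}}),M,y,\theta)=e(s^{\text{(e)}},M,\theta,g(y,s^{\text{(f)}},\theta))$ still needs to supply the original FSAF common randomness $\theta_i$ as the third argument to $e$.
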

\begin{proof}
By setting $U_i = y_i$, we find that FS is a special case of FSAF.
To show the other direction, first assume we are given an FSAF scheme.
By the arguments that follow,
we will construct an FS scheme that simulates it.  
Quantities relating to the constructed $FS$ scheme will be notated with
a ``hat,'' e.g. $\hat{\theta},\hat{s}_i^{\text{e}}$, etc.

First, let ${\bf U}$ be a random vector with components
indexed by $y \in \mathcal{X}$ and $s \in \mathcal{S}^{\text{(f)}}$,
and let the $(y,s)$th component be distributed according to 
$U_{y,s} \sim p_{U|y,s}$, where $p_{U|y,s}$ is the given FSAF scheme's
feedback channel transition matrix.  Furthermore, let the components
be independently distributed.  We then define the
FS scheme's common randomness as $\hat{\theta}_i = (\theta_i,{\bf U}_i)$
distributed iid according to $p_\theta p_{U}$, where $p_\theta$ is
the common randomness distribution for the given FSAF scheme.
Let $(U_{y,s})_i$ denote the $(y,s)-$th component of the $i$-th
random vector ${\bf U}_i$.

The equivalent FS scheme is assigned encoder and decoder state
variables $\hat{s}^{\text(e)}_i = (s^{\text(e)}_i, s^{\text(f)}_i)$
and $\hat{s}^{\text(d)}_i = (s^{\text(d)}_i, s^{\text(f)}_i)$, where
$s^{\text(e)}_i$, $s^{\text(d)}_i$, and $s^{\text(f)}_i$ are 
the state variables for the FSAF scheme.  The encoder/decoder update
functions are given by
\[
\hat{f}_{\text{(e)}} (\hat{s}_i^{\text{(e)}},
											M_{p_i}^{p_i+\ell},
											y_i,
											\hat{\theta}_i)
 = 
\left( f_{\text{(e)}}( s_i^{(e)},
											 M_{p_i}^{p_i+\ell}, 
											 \theta_i, 
											 (U_{y_i,s_i^{\text{(f)}}})_i
										  ),
f_{\text{f}} (s_i^{\text{(f)}},
							y_i,
						  \theta_i)
\right) \mbox{,}
\]
and
\[
\hat{f}_{\text{(d)}} (\hat{s}_i^{\text{(d)}},
											y_i,
											\hat{\theta}_i)
 = 
\left( f_{\text{(d)}}( s_i^{(e)},
											 y_i, 
											 \theta_i, 
											 (U_{y_i,s_i^{\text{(f)}}})_i
										  ),
f_{\text{f}} (s_i^{\text{(f)}},
						  y_i,
						  \theta_i)
\right) \mbox{.}
\]
Observe how the randomness of the FSAF scheme's active feedback channel
is simulated by means of the common randomness $\hat{\theta}_i = (\theta_i,{\bf U}_i)$.

In this manner the FSAF encoder/decoder/feedback state machines
are simulated by the FS encoder/decoder state machines.  The FSAF
encoding and decoding functions may be implemented in a similar manner:
\[
\hat{e}(\hat{s}_i^{\text{(e)}},
  M_{p_i,p_i+\ell},
  y_i,
  \hat{\theta}_i) = 
e(s_i^{\text{(e)}},
  M_{p_i,p_i+\ell},
  \theta_i,
  (U_{y_i,s_i^{\text{(f)}}})_i)
\mbox{,}\]
and
\[
\hat{d}_M(\hat{s}_i^{\text{(d)}},
				y_i,
				\hat{\theta}_i) =
d_M(s_i^{\text{(d)}},
  y_i,
  \theta_i,
  (U_{y_i,s_i^{\text{(f)}}})_i)
\mbox{.}
\]
Since this constructed FS scheme is identical to the given
FSAF scheme, FSAF is a special case of FS. 
\end{proof}

A \emph{finite-extent} (FEex) scheme $\mathcal{F}$ for a channel
with alphabet $\mathcal{X}$ --- as depicted in Fig. \ref{fig:fh-schemes} consists of:

\begin{enumerate}
\item A extent $n$.
\item A feedback channel with transition probabilities given by
$U_i \sim p_{u_i}(u_i | u^{i-1},y^i)$ and taking values in $\mathcal{X}$, for
$i\in\{1,\ldots,n\}$.
\item A common randomness variable $\theta$ drawn from a finite alphabet, independent of the source, and provided to both encoder and decoder.
\item Encoding functions $x_1= e_1(M^\infty, \theta),x_2 =e_2(M^\infty,\theta, u_1),\ldots,x_n = e_n(M^\infty,\theta,u^{n-1})$.
\item A decoding length function $L = d_L(y^n,\theta, u^{n-1})$, upper bounded by 
$n \log |\mathcal{X}|$.
\item A decoding function $\widehat{M}^L = d_M(y^n,\theta,u^{n-1})$.
\end{enumerate}

\begin{figure}
 \centering
  \psfrag{c}[cc][cc]{\small Unknown Channel}
  \psfrag{m}[cc][cc]{$M^\infty$}
  \psfrag{e}[cc][cc]{E}
  \psfrag{d}[cc][cc]{D}
  \psfrag{m2}[cc][cc]{$\widehat{M}^L$}
  \psfrag{x}[cc][cc]{$x_i$}
  \psfrag{y}[cc][cc]{$y_i$}
  \psfrag{z}[cc][cc]{$\displaystyle p_{u_i|y^{i-1}}$}
  \psfrag{y-1}[cc][cc]{$u_i$}
  \psfrag{2}[cc][cc]{$2$}
  \psfrag{noise}[cc][cc]{$z^{\infty}$}
  \psfrag{+}[cc][cc]{$+$}
  \psfrag{t}[cc][cc]{$\theta$}
  \includegraphics[width=3.8in]{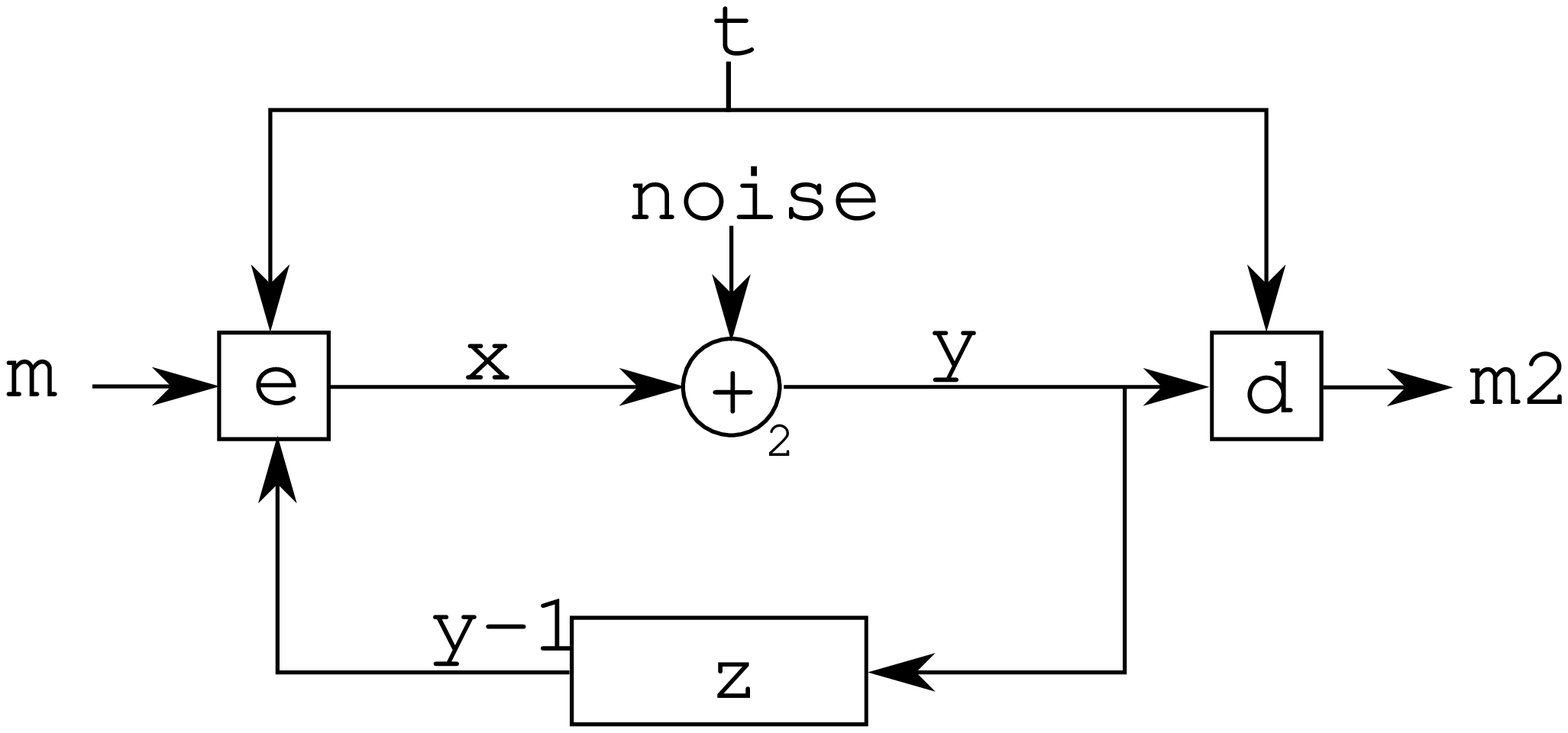}
  \caption{A finite-extent scheme.}
 \label{fig:fh-schemes}
\end{figure}

A \emph{repetition scheme} is constructed from an 
$n$-extent FE scheme $\mathcal{F}$.
Let $\mathcal{F}(M^\infty,z^n)$ describe the application
of scheme $\mathcal{F}$ to source $M^\infty$ and noise block
$z^n$.  Then the repetition scheme $\overline{\mathcal{F}}$
consists of repeated independent uses of $\mathcal{F}$, i.e.
\[
\overline{\mathcal{F}}(M^\infty,z^\infty) \equiv
\left\{ \mathcal{F}\left( (M_1^n,0^\infty),
													z_1^n 
									 \right),
				\mathcal{F}\left( (M_{L_1+1}^{L_1+n},0^\infty),
													z_{n+1}^{2n}
									 \right),
				\ldots
\right\}
\mbox{.} \]
In each block, $\mathcal{F}$ is applied to a ``virtual source''
consisting of the first $n$
bits of the source that have yet to be transmitted and a string of
0s.

\begin{proposition}
The class of repetition schemes is a subclass of FSAF schemes (and therefore
of FS schemes).
\end{proposition}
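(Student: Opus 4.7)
The plan is to exhibit, for every repetition scheme $\overline{\mathcal{F}}$ built from an $n$-extent FE scheme $\mathcal{F}$, an FSAF scheme that produces the same joint process of channel inputs, channel outputs, and source estimates. Once that is done, Lemma \ref{lem:FSAFSisFSS} immediately upgrades the inclusion to FS. To simulate one repetition of $\mathcal{F}$ within FSAF, I would add to every state variable a within-block position counter $c \in \{1, \ldots, n\}$ together with enough memory to hold the FE scheme's per-block common randomness $\theta$ and the relevant prefixes of $y^c$ and $u^{c-1}$. The source pointer is then made to advance only at block boundaries by setting $L_i = 0$ whenever $c < n$ and $L_i = d_L(y^n, \theta, u^{n-1})$ when $c = n$, with $\widehat{M}^{L_i}$ set analogously. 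Choosing lookahead $\ell = n - 1$ hands the FSAF encoder precisely the $n$ source bits that a repetition of $\mathcal{F}$ could meaningfully read from the virtual source $(M_{p_i}^{p_i+n-1}, 0^\infty)$.

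The hard part is reproducing the FE active feedback law $p(u_i \mid u^{i-1}, y^i)$ inside the FSAF framework, where $u_i$ must be drawn from $p_{u|y,s}(u_i \mid y_i, s_i^{(f)})$ and the feedback state update $f_{(f)}$ does not take $u_i$ as an input, so there is no direct way to remember past $u$'s inside $s_i^{(f)}$. My proposal is to move all of the FE feedback randomness into a per-block common-randomness token $\theta$ carried in $s_i^{(f)}$. Because all alphabets involved are finite, a standard functional representation argument supplies a finite alphabet for $\theta$ and deterministic functions $g_1, \ldots, g_n$ such that, with $\theta$ drawn from the prescribed distribution, setting $u_c = g_c(\theta, y^c)$ makes $(u_1, \ldots, u_n)$ have the joint law induced by $\prod_c p(u_c \mid u^{c-1}, y^c)$. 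The FSAF feedback state would then be $s_i^{(f)} = (c, \theta, y^{c-1})$, the feedback channel $p_{u|y,s}$ would be the point mass at $g_c(\theta, (y^{c-1}, y_i))$, and the update rule $f_{(f)}(s_i^{(f)}, y_i, \theta_i)$ would append $y_i$ to the stored history and, only when $c$ rolls over from $n$ back to $1$, replace $\theta$ with a fresh value extracted from the iid common randomness $\theta_i$.

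The remaining encoder, decoder, and state-update pieces are bookkeeping. The FSAF encoding function outputs $e_c(M_{p_i}^{p_i+n-1}, \theta, u^{c-1})$; at $c = n$ the decoder outputs $d_L$ and $d_M$ evaluated on $(y^n, \theta, u^{n-1})$; and because $u_i$ is passed to both $f_{(e)}$ and $f_{(d)}$, the encoder and decoder states can simply append past $u$'s as they are produced, with no need for the functional-representation workaround used on the feedback side. Matching the constructed FSAF scheme to $\overline{\mathcal{F}}$ is then a block-by-block induction: within a block the functions $g_c$ reproduce the FE feedback law and the encoder and decoder operate exactly as in $\mathcal{F}$, while across blocks the independent $\theta$'s pulled from $\{\theta_i\}$ supply the independent-repetition semantics of $\overline{\mathcal{F}}$.
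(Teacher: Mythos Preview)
Your proposal is correct and follows the same route as the paper---showing that the block structure of a repetition scheme can be implemented by finite-state machines with finite lookahead---but the paper's own argument is only a two-bullet sketch (block structure $\Rightarrow$ finite state; $n$-extent $\Rightarrow$ lookahead $n$), whereas you actually carry out the construction.

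In doing so you correctly isolate and resolve a real subtlety that the paper elides: the FSAF feedback-state update $f_{\text{(f)}}(s_i^{\text{(f)}},y_i,\theta_i)$ does not receive $u_i$, so the FE feedback law $p(u_c\mid u^{c-1},y^c)$ cannot be reproduced simply by storing past $u$'s in $s_i^{\text{(f)}}$. Your functional-representation fix---packaging all of the block's feedback randomness into a finite per-block token $\theta$, storing $(c,\theta,y^{c-1})$ in the feedback state, making the feedback channel the point mass at $g_c(\theta,y^c)$, and refreshing $\theta$ from the iid stream $\{\theta_i\}$ at block boundaries---is exactly the right idea and works because all alphabets are finite. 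Two small remarks: you should also carry $\theta$ in the encoder and decoder states (which you in fact indicate at the outset), and the very first block needs its $\theta$ seeded, which can be handled either by a one-step warm-up or by allowing a random initial state. Your lookahead $\ell=n-1$ (yielding the $n$ bits $M_{p_i}^{p_i+n-1}$) is in fact tighter than the paper's stated ``lookahead $n$.''
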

This follows directly from two properties of repetition schemes:
\begin{itemize}
\item The block-based structure allows for implementation with finite-state
machines.
\item A repetition scheme constructed from an $n$-extent FE scheme
has finite lookahead constant $n$.
\end{itemize}

\subsection{Universal Scheme Construction}
At the core of the achievability scheme is a lemma, 
introduced and proven by Shayevitz and Feder \cite{ShayevitzF2009}:

\begin{lemma}
\label{lem:SF2009}
(Shayevitz and Feder, 2009)
Let $\mathcal{X}$ be a finite alphabet with an addition operation.
Then there exists a sequence of $n$-extent FE schemes $\mathcal{F}_n(\mathcal{X})$
with the following worst-case performance guarantees over all additive noise
sequences $z^\infty \in \mathcal{X}^\infty$ and source sequences $M^\infty \in \{0,1\}^\infty$
\beq
\sup_{z^n \in \mathcal{X}^n, M^\infty \in \{0,1\}^\infty} \P{\widehat{M}^L \neq M^L} \leq \epsilon(n)
\label{eq:errorGuarantee}
\eeq
and
\beq
\inf_{z^n \in \mathcal{X}^n, M^\infty \in \{0,1\}^\infty} 
\P{ \frac{L}{n} > 1 - \widehat{H}^1(z^n)-\epsilon(n)} > 1 - \epsilon(n)
\mbox{,}
\label{eq:rateGuarantee}
\eeq
where $\epsilon(n) \rightarrow 0$.
\end{lemma}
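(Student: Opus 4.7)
The plan is to construct the schemes $\mathcal{F}_n(\mathcal{X})$ as posterior-matching, arithmetic-coding-style schemes driven by the \emph{empirical} first-order noise statistics, in the spirit of Horstein coding for the BSC. Map the source $M^\infty$ to a real number $\mu \in [0,1)$ via its binary expansion. Encoder and decoder maintain a common interval $I_i \subseteq [0,1)$ known to contain $\mu$; via the noiseless feedback, both parties also know $z^{i-1} = y^{i-1} - x^{i-1}$ and hence the empirical type $\hat{q}_i = \hat{p}^1(z^{i-1})$. At time $i$, the encoder partitions $I_i$ into two sub-intervals in proportions $(\hat{q}_i, 1-\hat{q}_i)$, dithers the split using the common randomness $\theta$ to symmetrize the effective position of $\mu$ within $I_i$, and transmits the bit indicating which sub-interval contains $\mu$. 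Upon observing $y_i$, both parties retain and rescale the sub-interval consistent with $y_i$, obtaining $I_{i+1}$. After $n$ steps the decoder outputs $\widehat{M}^L$ as the longest prefix of $\mu$ pinned down by $I_{n+1}$, with $L$ chosen with a small guardband.

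For the rate bound~\eqref{eq:rateGuarantee}, each step shrinks $|I_{i+1}|/|I_i|$ to either $\hat{q}_i$ or $1-\hat{q}_i$ depending on the realized noise bit. Because the split proportions at time $i$ are matched to $\hat{q}_i$, the empirical average of the per-step log-contractions is exactly $-\hat{H}^1(z^n)$, up to a vanishing discrepancy between using $\hat{q}_i$ and $\hat{p}^1(z^n)$. An Azuma--Hoeffding concentration argument on $\log_2 |I_{n+1}|$ then yields $-\log_2 |I_{n+1}| \geq n\bigl(1 - \hat{H}^1(z^n) - \epsilon(n)\bigr)$ on an event of probability at least $1-\epsilon(n)$, which through the decoder's length rule gives the claimed bound on $L/n$. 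For the error bound~\eqref{eq:errorGuarantee}, an error requires $\mu \notin I_{n+1}$ (after the final truncation); by construction the update keeps the sub-interval consistent with $y_i$ at each step, so the only way to lose $\mu$ is through the final prefix truncation, which is controlled by the guardband together with a union bound over the $n$ steps.

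The main obstacle is upgrading what is essentially a probabilistic posterior-matching analysis to a \emph{worst-case} guarantee uniform over all, possibly adversarial, noise sequences and source sequences. Classical Horstein/posterior-matching analyses assume stationary channel statistics, whereas here the empirical type $\hat{q}_i$ itself drifts over time and the noise is chosen adversarially. The common randomness $\theta$ is essential: by dithering the splits it makes the effective position of $\mu$ uniform on $I_i$, thereby decoupling the rate and error concentration from the particular values of $M^\infty$ and $z^n$ and permitting a supremum over these sequences to be pulled inside the probability. Quantifying a single $\epsilon(n) \to 0$ for which \eqref{eq:errorGuarantee} and \eqref{eq:rateGuarantee} hold \emph{simultaneously} on one high-probability event is the technical crux, and is the part for which the detailed construction of Shayevitz and Feder is required.
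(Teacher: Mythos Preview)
The paper does not prove this lemma at all: it is stated as a result ``introduced and proven by Shayevitz and Feder'' and cited from~\cite{ShayevitzF2009}. Your sketch is therefore not comparable to anything in the paper; what you have written is an outline of the external result being invoked. As such an outline it is broadly faithful to the Shayevitz--Feder construction (posterior matching in the Horstein style, with common randomness used to dither the message point so as to decouple the analysis from the particular $M^\infty$ and $z^n$), and you have correctly identified the main technical difficulty, namely obtaining uniform-in-$z^n$ guarantees from an analysis that is naturally probabilistic.

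Two small caveats on your sketch itself. First, the lemma is stated for a general finite alphabet $\mathcal{X}$, whereas your description partitions $I_i$ into \emph{two} sub-intervals and ``transmits the bit''; the actual construction partitions into $|\mathcal{X}|$ pieces with proportions matched to the empirical noise histogram over $\mathcal{X}$. Second, your claim that ``the empirical average of the per-step log-contractions is exactly $-\hat{H}^1(z^n)$'' is not quite the mechanism: the per-step contraction at time $i$ is governed by the evolving sequential estimate $\hat{q}_i$ and the realized $z_i$, and one needs a Krichevsky--Trofimov/sequential-mixture type bound (not merely Azuma--Hoeffding) to relate the cumulative log-contraction to $n\hat{H}^1(z^n)$ with an $o(n)$ redundancy uniformly in $z^n$. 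These are precisely the ingredients supplied in~\cite{ShayevitzF2009}, which is why the present paper simply imports the lemma rather than reproving it.
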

Note that the only randomness in the above probabilistic statements is due
to the randomness in the feedback channel.

Observe that Lemma \ref{lem:SF2009} concerns itself with only the
first-order empirical entropy $\widehat{H}^1(z^n)$.  By specializing
to binary sequences, this may be replaced by higher-order empirical entropies.  
\begin{corollary}
\label{cor:FEConstruction}
For binary additive noise channels with feedback, there exists a sequence
of finite extent schemes $\mathcal{F}^m$ with extents $N(m)\rightarrow \infty$
and the following performance
guarantees:
\[
\sup_{z^{N(m)} \in \mathcal{X}^{N(m)}, M^\infty \in \{0,1\}^\infty} \P{\widehat{M}^L \neq M^L} \leq \epsilon_m
\]
and
\[
\inf_{z^{N(m)} \in \mathcal{X}^{N(m)}, M^\infty \in \{0,1\}^\infty} 
\P{ \frac{L}{N(m)} > 1 - \frac{1}{m}\widehat{H}^m(z^{N(m)})-\epsilon_m} > 1 - \epsilon_m
\mbox{,}
\]
where $\epsilon_m \rightarrow 0$.
\end{corollary}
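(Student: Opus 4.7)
The plan is to reduce to Lemma \ref{lem:SF2009} by working over a super-alphabet. Let $\mathcal{X}' = \{0,1\}^m$, regarded as a finite group under componentwise XOR, and recast the binary additive-noise channel as a $\mathcal{X}'$-ary modulo-additive channel by grouping $m$ consecutive binary channel uses into a single super-use. The super-noise at super-time $j$ is $z'_j = (z_{(j-1)m+1},\ldots,z_{jm}) \in \mathcal{X}'$, and a super-feedback symbol is the concatenation of the $m$ binary feedback symbols observed during block $j$. Since $\mathcal{X}'$ is a finite alphabet with an addition operation, the hypotheses of Lemma \ref{lem:SF2009} are satisfied.

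Applying the lemma to this $\mathcal{X}'$-ary channel produces a sequence of $n'$-extent FE schemes $\mathcal{F}_{n'}(\mathcal{X}')$ with uniform error probability at most $\epsilon(n')$ and, in its natural form for non-binary alphabets, the rate guarantee
\[
\P{\frac{L}{n'} > \log_2 |\mathcal{X}'| - \hat{H}^1_{\mathcal{X}'}(z'^{n'}) - \epsilon(n')} > 1 - \epsilon(n'),
\]
where the ``$1$'' appearing in the statement of Lemma \ref{lem:SF2009} is read in its natural alphabet-size-dependent form $\log_2|\mathcal{X}|$; this is the form actually proved in \cite{ShayevitzF2009}. The key identity connecting the two views is that the first-order block-empirical entropy of the super-noise is exactly the $m$th-order block-by-block empirical entropy of the original binary noise, $\hat{H}^1_{\mathcal{X}'}(z'^{n'}) = \hat{H}^m(z^{mn'})$, by direct comparison of the defining sums.

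Choose $n'(m) \to \infty$ along which $\epsilon(n'(m)) \to 0$ (for instance $n'(m)=m$), set $N(m) = m \cdot n'(m)$, and interpret $\mathcal{F}_{n'(m)}(\mathcal{X}')$ as a binary-channel FE scheme $\mathcal{F}^m$ of extent $N(m)$. Dividing the super-rate guarantee by $m$ to convert from bits per super-use to bits per binary use yields
\[
\P{\frac{L}{N(m)} > 1 - \frac{1}{m}\hat{H}^m(z^{N(m)}) - \frac{\epsilon(n'(m))}{m}} > 1 - \epsilon(n'(m)),
\]
and setting $\epsilon_m = \max\{\epsilon(n'(m)), \epsilon(n'(m))/m\}$ gives both the rate and error claims, with $\epsilon_m \to 0$.

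The only step requiring care is verifying that the super-scheme realizes a valid binary FE scheme in the sense of Section \ref{sec:achievability-classes-of-schemes}: the FE definition demands encoding functions $e_1,\ldots,e_{N(m)}$ acting on per-binary-use feedback, whereas the super-scheme operates at super-time resolution. This is a bookkeeping matter: at binary time $i$ lying inside super-block $j$, the binary encoder emits the $((i-1)\bmod m + 1)$-th component of the super-input prescribed by $\mathcal{F}_{n'(m)}(\mathcal{X}')$ on the basis of the super-feedback vectors $u'_1,\ldots,u'_{j-1}$ already assembled from past binary feedback symbols, and the binary feedback channel is defined so that $u'_j$ is the concatenation $(u_{(j-1)m+1},\ldots,u_{jm})$. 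This is the only subtlety; once the reduction is in place, the rate and error conclusions transfer immediately.
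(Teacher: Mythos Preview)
Your proposal is correct and follows essentially the same approach as the paper: apply Lemma~\ref{lem:SF2009} over the super-alphabet $\{0,1\}^m$, then reinterpret the resulting scheme as a binary FE scheme and diagonalize to obtain $\epsilon_m\to 0$. You are in fact more explicit than the paper about the entropy identity $\hat H^1_{\mathcal{X}'}(z'^{n'})=\hat H^m(z^{mn'})$, the division by $m$, and the bookkeeping that embeds the super-scheme as a binary FE scheme. One small caveat: the parenthetical ``for instance $n'(m)=m$'' is not safe, since the $\epsilon(\cdot)$ furnished by Lemma~\ref{lem:SF2009} implicitly depends on the alphabet $\mathcal{X}'=\{0,1\}^m$ (the paper writes $\epsilon_{n,m}$), and nothing rules out $\epsilon_{m,m}\not\to 0$; the diagonalization you state just before it is the correct move.
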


\begin{proof}
For a given $m$, consider the $m$-tuple supersymbol channel characterized
by inputs $X_i = x_{(i-1)m+1}^{im}$, noise $Z_i = z_{(i-1)m+1}^{im}$,
and outputs $Y_i = (x_{(i-1)m+1}+ z_{(i-1)m+1}, x_{(i-1)m+2}+ z_{(i-1)m+2}, \ldots, x_{im} + z_{im})$.
Applying Lemma \ref{lem:SF2009} to channels of this alphabet yields
a sequence of schemes $\mathcal{F}_n(\{0,1\}^m)$
with 
\beq
\epsilon_{n,m} \underset{n\rightarrow\infty}{\rightarrow} 0 \mbox{.}
\label{eq:epsilonNMGoesToZero}
\eeq 
Observe that $\mathcal{F}_n(\{0,1\}^m)$ may be seen as a finite-extent scheme
both for the supersymbol alphabet $\{0,1\}^m$ additive noise channel as well as for the (fundamental)
binary alphabet $\{0,1\}$ additive noise channel.

By \eqref{eq:epsilonNMGoesToZero} we may choose $N(m)$ so that 
$\epsilon_{N(m),m} \underset{m\rightarrow\infty}{\rightarrow} 0$.  Denoting
$\mathcal{F}^m = \mathcal{F}_{N(m)}(\{0,1\}^m)$, this proves the lemma.
\end{proof}

The sequence of finite-extent schemes $\{\mathcal{F}^m\}_{m=1}^\infty$
form the basis of the universal achievability construction.
\begin{definition}
The universal achievability scheme of order $m$ is the repetition scheme
$\overline{\mathcal{F}^m}$ formed from the $N(m)$-extent scheme $\mathcal{F}^m$.
\end{definition}

We end with an important lemma regarding repetition schemes.

\begin{lemma}
\label{lem:markov-relation}
Let $\mathcal{F}$ be an $N$-extent FE scheme and let
$\overline{\mathcal{F}}$ be the corresponding repetition scheme.
Define $E_i$ be the error indicator for the $i$th block and
define $T_i = \mathds{1}_{L_i \geq a_i}$ so as to indicate if in the $i$th block
the number of bits transmitted exceeds a fixed threshold $a_i$.
Then the Markov relations 
\beq 
E_i - M_{(i)}^N - E^{i-1}
\label{eq:error-markov}
\eeq
and
\beq
T_i - M_{(i)}^N - T^{i-1}
\label{eq:rate-markov}
\eeq
both hold, where $M_{(i)}^N \equiv M_{\sum_{j=1}^{i-1} L_i + 1}^{\sum_{j=1}^{i-1} L_i + N}$
denotes the $N$ source samples used in the $i$th block by the repetition scheme.
\end{lemma}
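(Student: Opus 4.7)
The proof turns on the block-independence structure baked into the repetition construction. First, I would unpack the definition of $\overline{\mathcal{F}}$: at each block $i$, the scheme runs an independent copy of the $N$-extent FE scheme $\mathcal{F}$ on the virtual source $(M_{(i)}^N, 0^\infty)$ and the (deterministic) noise block $z_{(i-1)N+1}^{iN}$. Crucially, each copy carries its own common-randomness variable $\theta^{(i)}$ and its own feedback-channel randomness $u^{(i),1},\ldots,u^{(i),N}$; bundling these into $\Xi_i = (\theta^{(i)}, u^{(i),1},\ldots,u^{(i),N})$, the collection $\{\Xi_i\}_{i\ge 1}$ is i.i.d.\ and jointly independent of the source process $M^\infty$.

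Second, I would observe that because $\mathcal{F}$ is a finite-extent scheme acting only on the $i$-th source window and the $i$-th noise block, $L_i$ and the error indicator $E_i$ are \emph{deterministic} functions of $(M_{(i)}^N, \Xi_i)$ once the (fixed) noise block $z_{(i-1)N+1}^{iN}$ is supplied as a parameter. Write
\[
E_i \;=\; \phi\bigl(M_{(i)}^N,\Xi_i; z_{(i-1)N+1}^{iN}\bigr), \qquad T_i \;=\; \psi\bigl(M_{(i)}^N,\Xi_i; z_{(i-1)N+1}^{iN}\bigr),
\]
where $\psi$ is obtained from $\phi$'s sibling length-function by thresholding at $a_i$. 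Dually, $E^{i-1}$ (respectively $T^{i-1}$) is a deterministic function of the prefix $M_1^{S_i}$ of the source (where $S_i = \sum_{j<i} L_j$) together with $\Xi^{i-1}$. Here the only subtlety is that the window $M_{(i)}^N$ begins at the random position $S_i+1$, but this position is itself a function of earlier blocks' inputs, so it does not introduce any fresh dependence on $\Xi_i$.

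Third, I would close the Markov chain by the standard conditional-independence argument. Since $\Xi_i$ is independent of $(M^\infty,\Xi^{i-1})$, it is in particular independent of the pair $(M_{(i)}^N, E^{i-1})$. Hence for every value $m$ of $M_{(i)}^N$ and every vector $\vec e$ of prior error indicators,
\[
\Pr\!\bigl(E_i = e \,\big|\, M_{(i)}^N = m,\, E^{i-1} = \vec e\bigr)
\;=\; \Pr\!\bigl(\phi(m,\Xi_i; z_{(i-1)N+1}^{iN}) = e\bigr)
\;=\; \Pr\!\bigl(E_i = e \,\big|\, M_{(i)}^N = m\bigr),
\]
which is exactly \eqref{eq:error-markov}. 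Replacing $\phi$ by $\psi$ verbatim yields \eqref{eq:rate-markov}.

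The main obstacle I anticipate is the one alluded to above: the index range defining $M_{(i)}^N$ depends on $L_1,\ldots,L_{i-1}$, and so on first reading it may seem that conditioning on $M_{(i)}^N$ could entangle $\Xi_i$ with the past. I plan to dispel this by emphasizing that, as random variables, $M_{(i)}^N$ and $E^{i-1}$ are both measurable with respect to $(M^\infty,\Xi^{i-1})$, against which $\Xi_i$ is independent by construction; the location at which we read the $N$-bit window off $M^\infty$ is immaterial to the independence argument, because the output of block $i$ is a function only of the \emph{contents} of the window and the block-$i$ randomness $\Xi_i$.
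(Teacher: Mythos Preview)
Your proposal is correct and follows essentially the same route as the paper: both arguments rest on the observation that block~$i$'s outcome is a deterministic function of $M_{(i)}^N$ together with block-$i$ randomness that is independent of everything generated by earlier blocks, so that conditioning on $M_{(i)}^N$ severs $E_i$ (resp.\ $T_i$) from the past. The only cosmetic difference is that the paper works directly with the realized feedback $u_{(i)}^N$ and manipulates joint distributions, whereas you bundle the block-$i$ randomness into a seed $\Xi_i$ and invoke conditional independence; your formulation is arguably cleaner, and it handles the full vector $E^{i-1}$ in one stroke rather than pairwise.
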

\begin{proof}
See Appendix \ref{app:markov-relation-proof}.
\end{proof}

\subsection{Proving Achievability (Theorem \ref{thm:achievability})}
Two additional lemmas, regarding the limiting behavior of
random binary sequences, are required in order to prove that 
$\overline{\mathcal{F}^m}$ achieves the performance promised
in Theorem \ref{thm:achievability}.

\begin{lemma}
\label{lem:lawOfLargerNumbers}
Suppose $\{X_i\}$ is a sequence of iid Bernoulli$(p)$ random
variables, and suppose $\{\alpha_i\}$ is a bounded sequence
of real numbers.  Then with probability one, 
\[
\limsup_{n\rightarrow\infty}\frac{1}{n} \sum_{i=1}^n X_i \alpha_i = 
p \limsup_{n\rightarrow\infty}\frac{1}{n} \sum_{i=1}^n \alpha_i
\]
and
\[
\liminf_{n\rightarrow\infty}\frac{1}{n} \sum_{i=1}^n X_i \alpha_i = 
p \liminf_{n\rightarrow\infty}\frac{1}{n} \sum_{i=1}^n \alpha_i \mbox{.}
\]
\end{lemma}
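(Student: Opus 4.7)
The plan is to reduce the pair of statements to a single almost-sure convergence: specifically, I would show that
\[
\frac{1}{n}\sum_{i=1}^n X_i \alpha_i \;-\; p \cdot \frac{1}{n}\sum_{i=1}^n \alpha_i \longrightarrow 0 \quad \text{almost surely as } n\to\infty.
\]
Once this is established, both conclusions of the lemma follow immediately, since adding a sequence that tends to zero preserves $\limsup$ and $\liminf$, and because scalar multiplication by $p \geq 0$ commutes with $\limsup$ and $\liminf$ of the deterministic bounded sequence $\frac{1}{n}\sum \alpha_i$ (whose $\limsup$ and $\liminf$ are therefore finite real numbers).

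To obtain this reduction, I would introduce the independent, mean-zero summands $Y_i := (X_i - p)\alpha_i$. Since $\{\alpha_i\}$ is bounded, say $|\alpha_i|\leq B$ for all $i$, each $Y_i$ is bounded in magnitude by $B$, and Hoeffding's inequality applied to the partial sums yields, for every $\epsilon>0$,
\[
\P{\left|\tfrac{1}{n}\sum_{i=1}^n Y_i\right| > \epsilon} \;\leq\; 2\exp\!\left(-\frac{n\epsilon^2}{2B^2}\right).
\]
The right-hand side is summable in $n$, so the Borel--Cantelli lemma implies that the event $\{|\frac{1}{n}\sum_{i=1}^n Y_i| > \epsilon\}$ occurs only finitely often, almost surely. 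Intersecting these probability-one events over a countable sequence $\epsilon_k \downarrow 0$ gives a single probability-one event on which $\frac{1}{n}\sum_{i=1}^n Y_i \to 0$, which is precisely the required reduction.

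The main obstacle, such as it is, is really only cosmetic: the statement is essentially a strong law of large numbers for independent, bounded, mean-zero but non-identically-distributed summands, and the Hoeffding plus Borel--Cantelli route makes the a.s. convergence nearly immediate under the boundedness hypothesis on $\{\alpha_i\}$. The only point requiring a little care is to make sure the exceptional null set does not depend on $\epsilon$, which is handled by the standard countable-$\epsilon_k$ intersection described above; no genuinely new ideas are needed.
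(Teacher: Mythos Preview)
Your proposal is correct and follows essentially the same strategy as the paper: both arguments establish the single almost-sure statement
\[
\frac{1}{n}\sum_{i=1}^n \alpha_i X_i - p\cdot\frac{1}{n}\sum_{i=1}^n \alpha_i \to 0
\]
and then read off the $\limsup$ and $\liminf$ identities. The only difference is in the tool used to obtain this convergence: the paper sets $Y_i=\alpha_i X_i$, notes that boundedness of $\{\alpha_i\}$ gives $\sum_i \mathrm{var}(Y_i)/i^2<\infty$, and invokes Kolmogorov's strong law for independent non-identically-distributed summands directly; you instead center first and use Hoeffding plus Borel--Cantelli. Your route is a bit more self-contained (no named SLLN needed), while the paper's is a one-line citation; neither buys anything the other does not, and no gap is present in your argument.
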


\begin{proof}
Let $Y_i = \alpha_i X_i$.  Since $\alpha_i$ is bounded, there exists
constant $A$ such that $|\alpha_i| < A$.  We then have that
$\E{Y_i^2} \leq A^2$ is bounded and
$\sum_{k=1}^\infty \frac{1}{k^2} \var{Y_i} \leq A^2 \sum_{k=1}^\infty \frac{1}{k^2} < \infty$.
These two statements qualify the use of Kolmogorov's strong law,
which states that 
$\frac{1}{n}\sum_{i=1}^nY_i - \frac{1}{n}\sum_{i=1}^n \alpha_i p \underset{n\rightarrow\infty}{\rightarrow} 0$
with probability one.  This proves the lemma.
\end{proof}

\begin{lemma}
\label{lem:limitBounds}
Let $\{\alpha_i\}$ be a bounded real-valued sequence, and let
$\{X_i\}$ be a random binary process.  
If $\P{X_i=1 | X^{i-1}=x^{i-1}} \leq p$ for any $x^{i-1} \in \{0,1\}^{i-1}$,
then with probability one
\beqa
\label{eq:limitBoundsUpperSup}
\limsup_{n\rightarrow\infty} \frac{1}{n} \sum_{i=1}^n \alpha_i X_i \leq p\overline{\alpha} \\
\label{eq:limitBoundsUpperInf}
\liminf_{n\rightarrow\infty} \frac{1}{n} \sum_{i=1}^n \alpha_i X_i \leq p\underline{\alpha}
\eeqa
where $\overline{\alpha} = \limsup_{n\rightarrow\infty} \sum_{i=1}^n \alpha_i$ and
$\underline{\alpha} = \liminf_{n\rightarrow\infty} \sum_{i=1}^n \alpha_i$.
Similarly, if $\P{X_i=1|X^{i-1}=x^{i-1}} \geq p$ for any $x^{i-1}\in \{0,1\}^{i-1}$,
\beqa
\limsup_{n\rightarrow\infty} \frac{1}{n} \sum_{i=1}^n \alpha_i X_i & \geq & p\overline{\alpha} 
\label{eq:limitBoundsLowerSup}\\
\liminf_{n\rightarrow\infty} \frac{1}{n} \sum_{i=1}^n \alpha_i X_i & \geq & p\underline{\alpha} 
\label{eq:limitBoundsLowerInf}
\eeqa
\end{lemma}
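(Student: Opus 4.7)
The plan is to prove Lemma \ref{lem:limitBounds} by coupling $\{X_i\}$ with an iid Bernoulli$(p)$ process and invoking Lemma \ref{lem:lawOfLargerNumbers}. Specifically, I would enlarge the sample space to carry an iid Uniform$[0,1]$ sequence $\{U_i\}$ independent of everything else. Writing $Q_i = \P{X_i = 1 \mid X^{i-1}}$, a standard Skorokhod-style construction realizes $X_i$ as $\mathds{1}_{U_i \leq Q_i}$ (preserving its joint law); setting $Y_i = \mathds{1}_{U_i \leq p}$ then produces an iid Bernoulli$(p)$ sequence with $X_i \leq Y_i$ on every sample path under the first hypothesis ($Q_i \leq p$), and $X_i \geq Y_i$ on every sample path under the second ($Q_i \geq p$).

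Assuming first that $\alpha_i \geq 0$, the pointwise inequality $\alpha_i X_i \leq \alpha_i Y_i$ gives $\frac{1}{n}\sum_{i=1}^n \alpha_i X_i \leq \frac{1}{n}\sum_{i=1}^n \alpha_i Y_i$ for every $n$, and taking $\limsup$ (resp.\ $\liminf$) of both sides, together with Lemma \ref{lem:lawOfLargerNumbers} applied to the iid sequence $\{Y_i\}$, yields \eqref{eq:limitBoundsUpperSup} (resp.\ \eqref{eq:limitBoundsUpperInf}) with probability one. The reverse coupling under the second hypothesis produces \eqref{eq:limitBoundsLowerSup} and \eqref{eq:limitBoundsLowerInf} by the same chain of inequalities with the directions flipped.

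The main obstacle is handling a general sign of $\alpha_i$, since the deterministic sandwich $\alpha_i X_i \leq \alpha_i Y_i$ relies on $\alpha_i \geq 0$. I would address this by decomposing $\alpha_i = \alpha_i^+ - \alpha_i^-$ and running the coupling argument separately on the positive and negative parts, then combining the resulting limsup/liminf bounds to recover the stated inequality for the original $\alpha_i$. In the achievability applications where this lemma is invoked (with $\alpha_i$ playing the role of a nonnegative quantity such as a block length or a conditional entropy), the decomposition is vacuous, so the one subtle point reduces to bookkeeping rather than an obstruction.
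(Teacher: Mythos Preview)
Your coupling construction is exactly the paper's proof: both introduce iid uniforms $\{U_i\}$, realize $X_i$ as $\mathds{1}_{U_i \leq Q_i}$, set $Y_i = \mathds{1}_{U_i \leq p}$, and invoke Lemma~\ref{lem:lawOfLargerNumbers} on $\{Y_i\}$. The paper is silent on the sign of $\alpha_i$; its step from $\widetilde{X}_i \leq \widetilde{Y}_i$ to the inequality on partial sums tacitly assumes $\alpha_i \geq 0$, precisely as you flag.

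Your proposed $\alpha_i^+/\alpha_i^-$ decomposition does not, however, rescue the general signed case: under $X_i \leq Y_i$ the bounds on $\sum \alpha_i^+ X_i$ and on $\sum \alpha_i^- X_i$ point in opposite directions and cannot be recombined into the stated inequality. Indeed the lemma as stated is false without a sign assumption (take $\alpha_i \equiv -1$, $X_i \equiv 0$, any $p>0$: then \eqref{eq:limitBoundsUpperSup} reads $0 \leq -p$). Your parenthetical is the right resolution: in the two places the paper applies the lemma the weights are nonnegative (constant $1$ in the error argument; $1-\tfrac{1}{m}\hat H^m - \epsilon_m$ in the rate argument, nonnegative up to an $\epsilon_m$ slack that is absorbed in the final bound), so the coupling proof you and the paper give is adequate for the intended use.
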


\begin{proof}
Let $\{Y_i\}$ be a Bernoulli-$p$ iid process.  We will construct correlated binary
processes $\{\widetilde{X}_i,\widetilde{Y}_i\}$ whose marginal distributions are
identical to those of $\{X_i\}$ and $\{Y_i\}$, and use these to prove the lemma.
\begin{itemize}
\item Let $\{U_i\}$ be a sequence of independent uniform $[0,1]$ random variables.
\item For each $i$ let 
\[
\widetilde{X}_i = \left\{ \begin{array}{ll}
														1 & \text{if } U_i < \P{X_i = 1 | X^{i-1}=\widetilde{X}^{i-1}} \\
														0 & \text{otherwise}
													\end{array}
									\right.
\]
and
\[
\widetilde{Y}_i = \left\{ \begin{array}{ll}
														1 & \text{if } U_i \leq p \\
														0 & \text{otherwise}
													\end{array}
									\right.
\]
\item By Lemma \ref{lem:lawOfLargerNumbers}, 
$\limsup_{n\rightarrow\infty} \frac{1}{n} \sum_{i=1}^n \alpha_i \widetilde{Y}_i = p \overline{\alpha}$
and $\liminf_{n\rightarrow\infty} \frac{1}{n} \sum_{i=1}^n \alpha_i \widetilde{Y}_i = p \underline{\alpha}$.

\item Consider the case where $\P{X_i=1|X^{i-1}=x^{i-1}} \leq p$.
Then since $\widetilde{X}_i \leq \widetilde{Y}_i$  
we have that
\[ \limsup_{n\rightarrow\infty} \frac{1}{n} \sum_{i=1}^n \alpha_i \widetilde{X}_i \leq p \overline{\alpha}\]
and 
\[\liminf_{n\rightarrow\infty} \frac{1}{n} \sum_{i=1}^n \alpha_i \widetilde{X}_i = p \underline{\alpha}\]
with probability one.
Since $\{X_i\}$ has the same marginal distribution as $\{\widetilde{X}_i\}$, this proves
\eqref{eq:limitBoundsUpperSup} and \eqref{eq:limitBoundsUpperInf}.

\item Similarly, in the case where $\P{X_i=1|X^{i-1}=x^{i-1}} \geq p$, 
the relation $\widetilde{X}_i \geq \widetilde{Y}_i$ always holds and we have
\eqref{eq:limitBoundsLowerSup} and \eqref{eq:limitBoundsLowerInf}.
\end{itemize}
\end{proof}

\subsubsection*{Proof of Theorem \ref{thm:achievability}}
Suppose $\overline{\mathcal{F}^m}$ is applied to a source $M^\infty$
and noise sequence $z^\infty$.  Let $\{L_i\}$ be the number of source bits
decoded in each block.  Consider the $i$th block --- i.e.
$\mathcal{F}\left( (M_{\sum_{j=1}^{i-1}L_i + 1}^{\sum_{j=1}^{i-1}L_i+N(m)},0^\infty),
z_{(i-1)N(m)+1}^{iN(m)}\right)$.
Let $M_{(i)}^{N(m)} = M_{\sum_{j=1}^{i-1}L_i + 1}^{\sum_{j=1}^{i-1}L_i+N(m)}$ indicate the source
 bits used by the encoder, 
 let $\widehat{M}_{(i)}^{L_i} = \widehat{M}_{\sum_{j=1}^{i-1}L_i + 1}^{\sum_{j=1}^{i}L_i}$
indicate the estimate produced, 
let $R_i = \frac{L_i}{N(m)}$ indicate the rate of the block,
and let $z_{(i)}^{N(m)} = z_{(i-1)N(m)+1}^{iN(m)}$ denote the noise
of the block. Finally, let $u_{(i)}^{N(m)}=u_{(i-1)N(m)+1}^{iN(m)}$ denote the feedback
during the block. \\
\\

\noindent {\bf Rate Guarantees}

Recalling that $\epsilon_m$ denotes the rate-error guarantee in Corollary \ref{cor:FEConstruction},
define the ``rate indicator'' $T_i$ as the indication of whether the
rate in the $i$th block exceeds the threshold $1-\frac{1}{m}\hat{H}^m((z^{N(m)}_{(i)})) - \epsilon_m$,
i.e. 
\beq
T_i = \mathds{1}_{R_i \geq 1-\frac{1}{m}\hat{H}^m((z^{N(m)}_{(i)})) - \epsilon_m} \mbox{.}
\label{eq:rate-indicator}
\eeq
One may demonstrate that for any $i$ and any $t^{i-1} \in \{0,1\}^{i-1}$, 
$\P{T_i | T^{i-1}=t^{i-1}} \geq 1-\epsilon_m$:
\beqa
\P{T_i=1 | T^{i-1} = t^{i-1}} & = & 
\sum_{m^{N(m)} \in \{0,1\}^{N(m)}}
\left[ \P{T_i=1 | T^{i-1} = t^{i-1}, M^{N(m)}_{(i)} = m^{N(m)}} \cdot \right.\nonumber \\
& & \left. \P{M^{N(m)}_{(i)} = m^{N(m)} | T^{i-1} = t^{i-1}} \right] \nonumber \\
& \eqlabel{a} & \sum_{m^{N(m)} \in \{0,1\}^{N(m)}}
\left[ \P{T_i=1 | M^{N(m)}_{(i)} = m^n}  \P{M^{N(m)}_{(i)} = m^{N(m)} | T^{i-1} = t^{i-1}} \right] \nonumber \\
& \geqlabel{b} & 1-\epsilon_m \label{eq:error-probability-bound} \mbox{,}
\eeqa
where step (a) follows from the Markov relation (Lemma \ref{lem:markov-relation})
and step (b) is due to the rate guarantee in Corollary \ref{cor:FEConstruction}.

Applying Lemma \ref{lem:limitBounds} to $\{T_i\}$, \eqref{eq:limitBoundsLowerSup} bounds
the limit supremum rate of the scheme.
\beqa
\limsup_{k\rightarrow \infty} \frac{1}{k} \sum_{i=1}^k R_i & \geqlabel{a} & 
\limsup_{k\rightarrow \infty} \frac{1}{k} \sum_{i=1}^k R_i T_i \nonumber \\
& \geqlabel{b} & \limsup_{k\rightarrow\infty} 
\frac{1}{k} \sum_{i=1}^k T_i \left(1-\hat{H}^{m}(z_{(i)}^{N(m)}) - \epsilon_m \right) \nonumber \\
& \geqlabel{c} & (1-\epsilon_m) \limsup_{k\rightarrow\infty} 
\frac{1}{k} \sum_{i=1}^k  \left( 1 - \hat{H}^m(z_{(i)}^{N(m)}) - \epsilon_m \right) \text{ w.p.}1 \nonumber \\
& \geqlabel{d} & (1-\epsilon_m) \limsup_{k\rightarrow\infty} 
\left( 1 - \hat{H}^m(z^{kN(m)}) - \epsilon_m \right)  \text{ w.p.}1 \nonumber \\
& \geqlabel{e} & (1-\epsilon_m)(1-\underline{\delta}_m(z^\infty) - \underline{\rho}(z^\infty))  \text{ w.p.}1 
\mbox{,}
\label{eq:achieveRateGuarantee}
\eeqa
where step (a) is due to $T_i \leq 1$, step (b) follows from the definition of $T_i$,
step (c) is an application of \eqref{eq:limitBoundsLowerSup} from Lemma \ref{lem:limitBounds},
step (d) comes from the concavity $\cap$ of the entropy function,
and step (e) involves the definition 
\[ \underline{\delta}_m(z^\infty) = \epsilon_m + \liminf_{k\rightarrow\infty} \frac{1}{m}\hat{H}^m(z^{kN(m)})- \underline{\rho}(z^\infty)
\mbox{.}
\]
Since 
$\liminf_{k\rightarrow\infty} \hat{H}^m(z^{kN(m)}) =
\liminf_{k\rightarrow\infty} \hat{H}^m(z^{k})$
and by Lemma \ref{lem:compressibility}
$\underline{\rho}(z^\infty) = \lim_{m\rightarrow\infty}
\liminf_{k\rightarrow\infty} \frac{1}{m}\hat{H}^m(z^{k})$,
we have that $\underline{\delta}_m(z^\infty)$ vanishes with increasing $m$.

A similar line of logic can demonstrate the limit infimum rate bound:
\beqa
\liminf_{k\rightarrow \infty} \frac{1}{k} \sum_{i=1}^k R_i & \geqlabel{a} & 
\liminf_{k\rightarrow \infty} \frac{1}{k} \sum_{i=1}^k R_i T_i 
\nonumber 
\\
& \geq & \liminf_{k\rightarrow\infty} 
\frac{1}{k} \sum_{i=1}^k T_i \left(1-\hat{H}^{m}(z_{(i)}^{N(m)}) - \epsilon_m \right) 
\nonumber 
\\
& \geq & (1-\epsilon_m) \liminf_{k\rightarrow\infty} 
\frac{1}{k} \sum_{i=1}^k  \left( 1 - \hat{H}^m(z_{(i)}^{N(m)}) - \epsilon_m \right)  \text{ w.p.}1 
\nonumber 
\\
& \geq & (1-\epsilon_m) \liminf_{k\rightarrow\infty} 
\left( 1 - \hat{H}^m(z^{kN(m)}) - \epsilon_m \right)   \text{ w.p.}1 
\nonumber 
\\
& \geq & (1-\epsilon_m)
\left(1- \overline{\rho}(z^\infty)- \overline{\delta}_m(z^\infty) \right)  \text{ w.p.}1 
\mbox{,}
\label{eq:overachieveRateGuarantee}
\eeqa

where in the final step we define
\[ \overline{\delta}_m(z^\infty) = \epsilon_m + \limsup_{k\rightarrow\infty} \hat{H}^m(z^{kN(m)})- \overline{\rho}(z^\infty)
\mbox{.}
\]
As in the infimum case, since 
$\limsup_{k\rightarrow\infty} \hat{H}^m(z^{kN(m)}) =
\limsup_{k\rightarrow\infty} \hat{H}^m(z^{k})$
and, by Lemma \ref{lem:compressibility}, \\
$\overline{\rho}(z^\infty) = \lim_{m\rightarrow\infty}
\limsup_{k\rightarrow\infty} \frac{1}{m}\hat{H}^m(z^{k})$,
we have that $\overline{\delta}_m(z^\infty)$ vanishes as $m\rightarrow\infty$.

{\bf Error Guarantees}
Let $E_i$ indicate the presence of an error in the $i$th block,
i.e. $E_i = \mathds{1}_{\widehat{M}_{(i)}^{L_i}= M_{(i)}^{L_i}}$.
The limit-supremum bit-error rate may be written in terms of $\{E_i\}$
and $\{L_i\}$ as
\beqa
\limsup_{n\rightarrow\infty} \frac{\sum_{i=1}^n E_i L_i}{\sum_{i=1}^n L_i}
& = & 
\limsup_{n\rightarrow\infty} \frac{\frac{1}{n}\sum_{i=1}^n E_i L_i}{\frac{1}{n}\sum_{i=1}^n L_i} 
\nonumber \\
& \leq & \frac{\limsup_{n\rightarrow\infty} \frac{1}{n}\sum_{i=1}^n E_i L_i}
{\liminf_{n\rightarrow\infty} \frac{1}{n}\sum_{i=1}^n L_i} 
\nonumber \\
& \leqlabel{a} & \frac{\limsup_{n\rightarrow\infty} \frac{1}{n}\sum_{i=1}^n E_i N(m)}
{\liminf_{n\rightarrow\infty} \frac{1}{n}\sum_{i=1}^n L_i} 
\nonumber \\
& \leqlabel{b} & \frac{N(m) \limsup_{n\rightarrow\infty} \frac{1}{n}\sum_{i=1}^n E_i }
{N(m)\left(1-\overline{\rho}(z^\infty) - \overline{\delta}_m(z^\infty) \right)} \text{ w.p.1} \mbox{,}
\label{eq:errorBoundPartway}
\eeqa
where (a) holds because $L_i \leq N(m)$ for an $N(m)$-horizon
repetition scheme. (b) follows with probability one from \eqref{eq:overachieveRateGuarantee}.

As was done with $\{T_i\}$, one may demonstrate
that for any $i$ and $e^{i-1} \in \{0,1\}^{i-1}$, the bound
$\P{E_i=1 | E^{i-1} = e^{i-1}} \leq \epsilon_m$ holds:
\beqa
\P{E_i=1 | E^{i-1} = e^{i-1}} & = & 
\sum_{m^{N(m)} \in \{0,1\}^{N(m)}}
\left[ \P{E_i=1 | E^{i-1} = e^{i-1}, M^{N(m)}_{(i)} = m^{N(m)}} \cdot \right.\nonumber \\
& & \left. \P{M^{N(m)}_{(i)} = m^{N(m)} | E^{i-1} = e^{i-1}} \right] \nonumber \\
& \eqlabel{a} & \sum_{m^{N(m)} \in \{0,1\}^{N(m)}}
\left[ \P{E_i=1 | M^{N(m)}_{(i)} = m^n}  \P{M^{N(m)}_{(i)} = m^{N(m)} | E^{i-1} = e^{i-1}} \right] \nonumber \\
& \leqlabel{b} & \epsilon_m \mbox{,}
\eeqa
where step (a) follows from the Markov relation (Lemma \ref{lem:markov-relation})
and step (b) is due to the error bound in Corollary \ref{cor:FEConstruction}.
This allows for the application of Lemma \ref{lem:limitBounds} to $\{E_i\}$ with constant weights
$\alpha_i = 1$, establishing that
\beq
\limsup_{n\rightarrow\infty} \frac{1}{n} \sum_{i=1}^n E_i \leq \epsilon_m \mbox{.}
\label{eq:block-error-rate}
\eeq
This in turn may be inserted into \eqref{eq:errorBoundPartway}, proving
\[
\limsup_{n\rightarrow\infty} \frac{ \sum_{i=1}^n E_i L_i}{\sum_{i=1}^n L_i} \leq
\frac{\epsilon_m}{1-\overline{\rho}(z^\infty) - \overline{\delta}_m(z^\infty)} \text{ w.p.}1\mbox{.}
\]
Therefore scheme $\overline{\mathcal{F}^m}$ worst-case (and best-case) achieves bit-error rate $\epsilon_m$.

\section{Predictability and a Simpler Sub-Optimal Scheme}
\label{sec:predictability}
The finite-state \emph{predictability} was introduced by Feder et al. in \cite{FederMG1992}
as an analog of compressibility in the context of universal prediction,
just as porosity is an analog in the context of modulo-additive channels.
We explore the relationship between porosity and predictability, but we do so
with fairly pragmatic motivations.

\subsection{Practicality of $\{\overline{\F^m}\}$}
While the achievability schemes $\{\overline{\mathcal{F}^m}\}$ manage to asymptotically achieve
porosity for any sequence, they are not particularly simple to implement.  The complexity of $\overline{\mathcal{F}^m}$
is hidden within the Shayevitz-Feder empirical-capacity-achieving scheme at its core (Corollary \ref{cor:FEConstruction}).
At each time instant, the Shayevitz-Feder decoder is required to compute the posterior of the message given
all the channel outputs in the block so far.
This computation is linear in the alphabet size, but because $\overline{\mathcal{F}^m}$ applies
Shayevitz-Feder to binary $m$-tuples (Corollary \ref{cor:FEConstruction}), it is exponential in $m$.

Although it grows in complexity quite rapidly, this Horstein-based approach of Shayevitz and Feder is actually quite
efficient for small alphabets, e.g. binary.  The only reason it is applied to $m$-tuples in our construction
is to account for memory and correlation within the noise sequence.  Alternatively stated, we seek to 
achieve the $m$th-order empirical capacity for arbitrarily large $m$.  The simpler repetition schemes suggested
in this section take a layered approach, wherein memory and correlation is first ``removed'' from the noise sequence,
after which the binary-alphabet Shayevitz-Feder scheme is used to communicate with the decoder.


\subsection{Construction of layered scheme}
In defining the more practical repetition
schemes $\overline{\G_n}$, we first describe the finite-extent schemes $\{\G_n\}$ at their heart. 
As shown in Fig. \ref{fig:predictive-schemes}, in its inner layer $\G_n$ consists of a predictor that forms an estimate 
$\hat{z}_i$ for the noise $z_i$ at each time $i \in \{1,\ldots,n\}$.
By subtracting this prediction from the encoder's output, a ``surrogate'' channel is created with
``effective'' noise $\widetilde{z}^n = (z_i - \hat{z_i})_{i=1}^n$.  In other words,
the noise $z^n$ is replaced by a sequence of error indications for the predictor (Fig. \ref{fig:predictive-effective}).
The first-order finite-extent scheme $\mathcal{F}_n(\{0,1\})$ (defined in Lemma \ref{lem:SF2009}),
which we will refer to simply as $\mathcal{F}_n$,
is then applied to this surrogate channel. Roughly speaking, the prediction step exploits
the memory and correlation in $z^n$ in order to reduce its first-order empirical entropy,
which then serves to boost the performance of $\mathcal{F}_n$.

\begin{figure}
 \centering
  \psfrag{c}[cc][cc]{\small Unknown Channel}
  \psfrag{m}[cc][cc]{$M^\infty$}
  \psfrag{e}[cc][cc]{E}
  \psfrag{d}[cc][cc]{D}
  \psfrag{m2}[cc][cc]{$\widehat{M}^L$}
  \psfrag{x}[cc][cc]{$x_i$}
  \psfrag{y}[cc][cc]{$y_i$}
  \psfrag{z}[cc][cc]{$z^{-1}$}
  \psfrag{zhat}[cc][cc]{$-\hat{z}_i$}
  \psfrag{y-1}[cc][cc]{Predictor}
  \psfrag{2}[cc][cc]{$2$}
  \psfrag{noise}[cc][cc]{$z_i$}
  \psfrag{+}[cc][cc]{$+$}
  \psfrag{t}[cc][cc]{$\theta$}
  \includegraphics[width=3.8in]{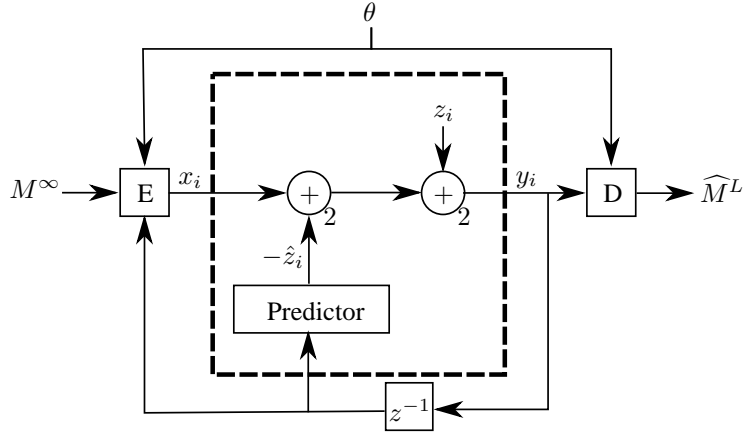}
  \caption{Block diagram for the scheme $\G_n$.}
 \label{fig:predictive-schemes}
\end{figure}

\begin{figure}
 \centering
  \psfrag{c}[cc][cc]{\small Unknown Channel}
  \psfrag{m}[cc][cc]{$M^\infty$}
  \psfrag{e}[cc][cc]{E}
  \psfrag{d}[cc][cc]{D}
  \psfrag{m2}[cc][cc]{$\widehat{M}^L$}
  \psfrag{x}[cc][cc]{$x_i$}
  \psfrag{y}[cc][cc]{$y_i$}
  \psfrag{z}[cc][cc]{$z^{-1}$}
  \psfrag{y-1}[cc][cc]{$y_{i-1}$}
  \psfrag{2}[cc][cc]{$2$}
  \psfrag{noise}[cc][cc]{$\tilde{z}_i$}
  \psfrag{+}[cc][cc]{$+$}
  \psfrag{t}[cc][cc]{$\theta$}
  \includegraphics[width=3.8in]{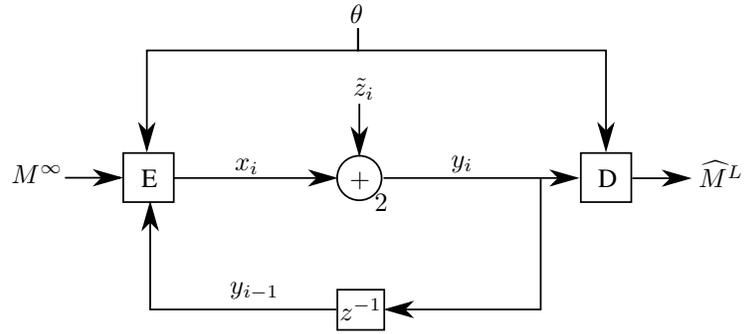}
  \caption{An alternative block diagram for scheme $\G_n$, where
  the prediction loop is represented as a surrogate noise sequence $\tilde{z}^\infty$.}
 \label{fig:predictive-effective}
\end{figure}

To aid in formally constructing $\{\G_n\}$, define the following:
\begin{itemize}
\item $\{e_i^{\F}\}$ refers to the encoding functions for $\mathcal{F}_n$.
\item $d_L^\F$ and $d_M^\F$ refer to the decoding-length and decoding-message functions of $\F_n$.
\item $\{p_{i}(u_i | u^{i-1},y^i)\}$ is the set of feedback conditional distributions for $\mathcal{F}_n$, and $\theta_{\mathcal{F}}$
is the common randomness.
\item $\hat{z}_i = f(z^{i-1})$ is the estimation function implemented by the prediction scheme (which has yet to be described).
\end{itemize}
The prediction-based FE scheme $\G_n$ can then be defined in terms of the six components listed in the definition of FE schemes:
\begin{enumerate}
\item The extent is $n$.
\item The feedback channel is a simple delay-one noiseless feedback: $u_i = y_{i-1}$.
\item The common randomness is given by $\theta_i = (\theta_{\mathcal{F}}, U^{(1)}, U^{(2)}, \ldots, U^{(n)})$.
$U^{(i)}$ is a $\mathcal{X}^{2^{2i-1}}$-valued vector indexed by $(y^{i},u^{i-1})$.  The $(y^i,u^{i-1})$-th component 
$U^{(i)}_{y^i,u^{i-1}}$ is distributed according to $p_{i}(\cdot | y^i,u^{i-1})$.  Note that
$U^{(i)}$ is introduced only to simulate the feedback channel of scheme $\F_n$ at both encoder and decoder.
This is similar to the technique used in the proof of Lemma \ref{lem:FSAFSisFSS}.
\item To define the $i$th encoding function, first let $u_i^{\F} = U^{(i)}_{y^{i},u^{i-1,\F}}$ be
the simulated feedback channel.  Note that both encoder and decoder may compute $u_i^{\F}$
at the end of the $i$th time step.  The encoding functions are then given by
\[ e_i(M^\infty,\theta,u^{i-1}) = 
e_i^\F(M^\infty, \theta_{\F}, u^{i-1,\F}) - f(y^{i-1} - x^{i-1})
\mbox{.}
\]
\item The decoding length function is unchanged: $L = d_L^\F(y^n,\theta,u^{n-1,\F})$.
\item The decoding function is also unchanged: $\widehat{M}^L = d_M^\F(y^n,\theta,u^{n-1,\F})$.
\end{enumerate}

The prediction scheme at the heart of $\G_n$ is the \emph{incremental parsing} (IP) algorithm
introduced by Feder et al. \cite{FederMG1992}.  This is an elegant and simple algorithm that is based on the same parsing procedure
as Lempel and Ziv's compression scheme.  Rather than describe its operation in detail, we point the
reader towards the exposition in Sec. V of \cite{FederMG1992}.

\subsection{Analysis of performance}
The schemes $\{\overline{\G_n}\}$ have been constructed to simplify the encoding and decoding process.
Here, this notion is quantified.

As a layered scheme, $\overline{\G_n}$ consists of two machines running in parallel: the IP
predictor of Feder et al. \cite{FederMG1992} and the actual communication scheme $\F_n$
of Shayevitz and Feder \cite{ShayevitzF2009}.  The operations-per-time-step required by
$\F_n$ do not scale appreciably with $n$, so the bottleneck is the prediction operation.  
Observe that the complexity bottleneck for $\F^m$ also arises from accounting for the noise
sequence's memory and correlation.  As previously mentioned, accounting for memory with $\F^m$ requires an exponential number of
operations-per-time-step. The IP predictor, on the other hand, requires only a linear number of operations 
in order to produce an estimate.  Specifically, at each time step the Lempel-Ziv parsing tree must be extended.

The rates achieved by schemes $\G_n$ however do not quite reach porosity.
To illustrate this, we start by repeating the definition of predictability as given in \cite{FederMG1992}.
\begin{definition}
The finite-state predictability of a sequence $x^\infty$ is the minimum limit-supremum
fraction of errors that a finite-state predictor can attain when operating on $x^\infty$.
Just as with compressibility, one may define a limit infimum version of this quantity.  
We term the former the worst-case predictability $\overline{\pi}$ and the latter the
best-case predictability $\underline{\pi}$.
\end{definition}

In Theorem 4 of \cite{FederMG1992} it is shown that the IP predictor achieves
the worst-case predictability $\overline{\pi}(x^\infty)$ of any sequence $x^\infty$.
Though it is not stated in the theorem, the proof that is given also demonstrates that
the IP predictor achieves the best-case predictability $\underline{\pi}(x^\infty)$.
Therefore the limit supremum (or infimum) first-order empirical entropy of the 
surrogate noise sequence approaches $h_b(\overline{\pi}(x^\infty))$ (or $h_b(\underline{\pi}(x^\infty))$). 
By applying the FS schemes $\overline{\F_n}$ to this noise sequence, the 
performance approaches rate $1-h_b(\overline{\pi}(x^\infty))$ (or $1-h_b(\underline{\pi}(x^\infty))$)
with vanishing error.

In Sec. VI of \cite{FederMG1992}, the worst-case predictability $\overline{\pi}(x^\infty)$ is bounded in terms of the 
 compressibility:
\[
h_b^{-1}(\overline{\rho}(x^\infty)) \leq \overline{\pi}(x^\infty) \leq \frac{1}{2} \overline{\rho}(x^\infty) \mbox{.}
\]
 An identical set of bounds exist between the best-case predictability and best-case compressibility.
 When a noise sequence satisfies the lower bound with equality, one may observe that the asymptotic performance of 
 $\{\overline{\G_n}\}$ matches that of $\{\overline{\F^m}\}$.  However, this is usually not the case, and one must settle
 for the guarantee of worst-case rate $1-h_b(\overline{\rho}(z^\infty)/2)$ and
 best-case rate $1-h_b(\underline{\rho}(z^\infty)/2)$.  Each falls strictly below worst- and best-case porosity
 unless the noise sequence is either completely redundant or incompressible.

\section{Summary}
\label{sec:Conclusion}
In this work, the best-case/worst-case porosity $\overline{\sigma}( \cdot )/\underline{\sigma}(\cdot) $
 of a binary noise sequence is defined as one minus the best-case/worst-case compressibility 
 $1-\underline{\rho}(\cdot ) / 1-\overline{\rho}(\cdot )$.  Porosity may be seen as an individual sequence
 property, analogous to compressibility or predictability, that identifies the ease of communication through a
 modulo-additive noise sequence.  Two results regarding porosity are at the core of this work.  First,
 porosity is identified as the maximum achievable rate within the class of finite-state communication schemes.
 Second, it is shown that porosity may be universally achieved within this class.  Together, these results
 parallel those of Lempel and Ziv in the source coding context \cite{ZivL1978}.
 Furthering this analogy, the achievability schemes given here complement those of
 Lomnitz and Feder \cite{LomnitzF2011} in similar manner as the infinite-state and finite-state schemes of
 \cite{ZivL1978}.
 
 In addition to the above, a more practical universal communication architecture is introduced, built upon 
 prediction.  Rather than communicate using blocks of channel uses --- which contributes to an exponentially
 growing complexity --- a layered approach is taken.  A prediction algorithm first ``removes'' the memory from the noise,
 and then a simple first-order communication scheme is employed.  While the resulting algorithm is suboptimal, 
 it reduces complexity considerably, and also draws an operational connection between predictability and porosity.

\appendices
\section{Proof of Lemma \ref{lem:compressibility}}
\label{app:compressibility-proof}
We show that the distinction between block-by-block and sliding-window
empirical entropy computations vanishes in the limit of large blocks and long blocklengths.
First, a few definitions that simplify notation:
\begin{definition}
A $k$-block code $C$ maps $k$-tuples from an alphabet $\mathcal{X}^k$
into binary strings of arbitrary but finite length.
\end{definition}
\begin{definition}
\label{def:extension-code}
The $(\theta, \tilde{k})$-\emph{extension code} for a $k$-block code
$C$ is a $\kt$-block code $\Ct_\theta$ whose encoding of a block $X^{\kt}$
is given by
\[
\Ct_{\theta}(X^\kt) = 
		\left(X_1^\theta, 
					C(X_{\theta+1}^{\theta+k}),
					\ldots,
					C(X_{\lfloor \frac{\kt-\theta}{k}\rfloor k - k - 1 + \theta}^{\lfloor \frac{\kt-\theta}{k}\rfloor k + \theta}),
					X_{\lfloor \frac{\kt-\theta}{k}\rfloor k + \theta+1}^{\kt}
		\right)
\mbox{.}
\]
The $\kt$-\emph{extension code} is a $\kt$-block code whose
encoding is given by
\[
\Ct(X^\kt) = 
\underset{\Ct_{\theta}(X^\kt), \theta \in \{0,\ldots,k-1\}} {\operatorname{argmin}}
\ell\left( \Ct_{\theta} (X^\kt) \right) 
\mbox{,}
\]
where $\ell(\cdot)$ returns the length of a binary string.
\end{definition}
In both the $\kt$-extension and the $(\theta,\kt)$-extension,
the initial segment of $X_1^\theta$ is referred to as the $\emph{uncoded prefix}$,
while the punctuating segment $X_{\lfloor \frac{\kt-\theta}{k}\rfloor k + \theta+1}^{\kt}$
is the \emph{uncoded suffix}.
The encoded segments in the middle are called the \emph{encoded subblocks}.

We start by demonstrating that the block-by-block empirical entropy limits exist.
\begin{lemma}
\label{lem:block-by-block-limit-exists}
Let $x^\infty$ be a finite-alphabet sequence.  Then the limits
\[
\lim_{k\rightarrow\infty}\limsup_{n\rightarrow\infty} \frac{1}{k} \hat{H}^k(x^n)
\]
and
\[
\lim_{k\rightarrow\infty}\liminf_{n\rightarrow\infty} \frac{1}{k} \hat{H}^k(x^n)
\]
both exist.
\end{lemma}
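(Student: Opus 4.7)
The plan is to reduce existence of both limits to showing that the quantities $g(k) := \limsup_{n\to\infty} \frac{1}{k}\hat{H}^k(x^n)$ and $\underline{g}(k) := \liminf_{n\to\infty}\frac{1}{k}\hat{H}^k(x^n)$ are asymptotically non-increasing in $k$. Specifically, I will argue that for every fixed $k$ one has $\limsup_{\kt\to\infty} g(\kt) \le g(k)$, and similarly $\limsup_{\kt\to\infty} \underline{g}(\kt) \le \underline{g}(k)$. Since both functions are uniformly bounded by $\log|\mathcal{X}|$, this sandwich forces $\limsup_{\kt} g(\kt) \le \inf_k g(k) \le \liminf_{\kt} g(\kt)$, so the limits exist.

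The asymptotic monotonicity comes from the extension-code construction of Def.~\ref{def:extension-code}. Fix $k$ and let $C$ be a prefix-free (e.g.\ Huffman) code for the distribution $\hat{p}^k[x^n]$, so that $\E_{\hat{p}^k}[\ell(C(X^k))] \le \hat{H}^k(x^n) + 1$. For each $\kt > k$, form the $\kt$-extension code $\Ct$, prefixing $\lceil\log_2 k\rceil$ bits to identify the chosen offset $\theta$; this is a valid prefix code for $\kt$-tuples, so its expected length under $\hat{p}^\kt[x^n]$ upper-bounds $\hat{H}^\kt(x^n)$. Using $\min_\theta \le \mathrm{avg}_\theta$, the expected length of $\Ct$ is at most $\frac{1}{k}\sum_\theta \E_{\hat{p}^\kt}[\ell(\Ct_\theta(X^\kt))]$. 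The uncoded prefix and suffix contribute at most $2k\log|\mathcal{X}|$ bits per $\kt$-block, while the encoded subblocks, after interchanging the sums over offset $\theta$ and subblock index $j$, reduce up to boundary terms of size $O(k)$ to $\sum_{i=1}^{\kt-k+1}\E_{\hat{p}^\kt}[\ell(C(X_i^{i+k-1}))]$.

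For $n \gg \kt$, the position-averaged marginal of $k$-substrings inside a random $\kt$-block is within total variation $O(k/n)$ of the sliding-window distribution $\hat{p}^k_{\text{sw}}[x^n]$, which itself is within $O(k/n)$ of the block-by-block distribution $\hat{p}^k[x^n]$. Substituting these approximations, replacing $\E_{\hat{p}^k}[\ell(C(X^k))]$ by its bound $\hat{H}^k(x^n)+1$, and dividing by $\kt$ yields
\[
\frac{1}{\kt}\hat{H}^\kt(x^n) \;\le\; \frac{1}{k}\hat{H}^k(x^n) \;+\; O\!\left(\frac{k\log|\mathcal{X}|}{\kt}\right) + o_n(1).
\]
Taking $\limsup_n$ of both sides gives $g(\kt) \le g(k) + O(k\log|\mathcal{X}|/\kt)$, and letting $\kt\to\infty$ delivers $\limsup_{\kt} g(\kt) \le g(k)$ for every $k$, closing the sandwich. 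The argument for $\underline{g}$ is identical with $\liminf_n$ replacing $\limsup_n$ throughout.

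The main obstacle is the bookkeeping in the second paragraph: tracking how averaging over offsets $\theta$ and over subblock positions $j$ converts a collection of position-specific within-$\kt$-block empirical distributions into the global sliding-window distribution, and then passing from the sliding-window distribution to the block-by-block distribution via total-variation estimates of order $O(k/n)$. Once those approximations are absorbed into the vanishing error terms, the monotonicity sandwich immediately yields existence of both limits.
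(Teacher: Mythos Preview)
Your overall sandwich strategy is sound, and the extension-code idea is the right tool. However, there is a genuine gap: the claim that the sliding-window distribution $\hat{p}^k_{\text{sw}}[x^n]$ is within total variation $O(k/n)$ of the block-by-block distribution $\hat{p}^k[x^n]$ is false. For the periodic sequence $x^\infty = 0101\cdots$ with $k=2$, the block-by-block distribution places all mass on the pattern $01$, whereas the sliding-window distribution splits its mass evenly between $01$ and $10$; the total-variation distance is $1/2$ for every $n$. More damagingly, your Huffman code $C$ is designed for $\hat{p}^k$ and need not assign any codeword to $k$-tuples of zero block-by-block probability, so the extension $\Ct_\theta$ is not even well-defined for offsets $\theta$ that produce such subblocks. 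Even if one patched $C$ to be total, the expected length $\E{\ell(C(X^k))}_{\hat{p}^k_{\text{sw}}}$ can exceed $\hat{H}^k(x^n)+1$ by an amount that does not vanish, so the substitution you make in the last displayed inequality is unjustified.

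The fix is to abandon the averaging over $\theta$ and instead, for each $\kt$-block $x_{\kt M+1}^{\kt(M+1)}$, choose the single offset $\theta = (-M\kt)\bmod k$ that aligns the encoded subblocks with the global $k$-block partition, so that every subblock is of the form $x_{km+1}^{k(m+1)}$. Then every subblock fed to $C$ already lies in the support of $\hat{p}^k[x^n]$, and the bound $\E{\ell(C(X^k))}_{\hat{p}^k} \le \hat{H}^k(x^n)+1$ applies directly, with no detour through the sliding-window distribution. This is exactly the paper's route; it yields $\hat{H}^{\kt}(x^n) \le 2k + \lfloor n/k\rfloor \frac{\kt}{n-k}\bigl(\hat{H}^k(x^n)+1\bigr)$, after which your sandwich closes as written.
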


\begin{proof}
Let $\kt > k$, and let $C_{n,k}$ denote the k-block Huffman code
for the block-by-block empirical distribution $\hat{p}(X^k)[x^n]$.
Observe that by optimality of the Huffman code, 
\beq
\hat{H}^k(x^n) \leq \E{\ell(C_{n,k}(X^k))}_{\hat{p}(X^k)[x^n]} = \E{\ell(C_{n,k}(x_{kN+1}^{kN+k}))} \leq \hat{H}^k(x^n)+1
\label{eq:HuffmanGuarantee}
\eeq
where $N$ is distributed uniformly over the set $\{0,\ldots,\lfloor \frac{n}{k} \rfloor - 1\}$.
Let $\Ct_{n,\kt,\theta}$ and $\Ct_{n,\kt}$ be the $(\theta,\kt)$- and $\kt$-extensions of $C_{n,k}$.

By expressing the expected length of $\Ct_{n,\kt}$ in
terms of the expected length of $C_{n,k}$, we can show that both
limits in the lemma statement exist.

Allowing $M$ to be uniformly distributed over the set
$\{0,\ldots,\lfloor \frac{n}{\kt} \rfloor - 1\}$, we have that
\beqan
\widehat{H}^\kt(x^n) & \leqlabel{a} & \E{ \ell \left( \Ct_{n,\kt}(x_{\kt M+1}^{\kt M + \kt}) \right)} \\
& \eqlabel{b} & \E{ \min_{\theta \in \{0,\ldots,k-1\}} \ell \left( \Ct_{n,\kt,\theta}(x_{\kt M+1}^{\kt M+\kt}) \right)} \\
& \leqlabel{c} & \E{ \ell \left( \Ct_{n,\kt,-M\kt\bmod k}(x_{\kt M+1}^{\kt M+\kt}) \right)} \\
& \leqlabel{d} & \E{2k + \sum_{m=0}^{\lfloor\frac{n}{k}\rfloor- 1}
													\ell\left( C_{n,k}(x_{km+1}^{km+k})\right) 
													{\bf 1}_{km+1\geq \kt M+1} 
													{\bf 1}_{km+k\leq \kt M+\kt}
													} \\
& = & 2k + \sum_{m=0}^{\lfloor\frac{n}{k}\rfloor- 1}
								\P{km +1\geq \kt M+1, km + k \leq \kt M + \kt}
								\ell\left(C_{n,k}(x_{km+1}^{km+k})\right) \\
& \leqlabel{e} & 2k + \sum_{m=0}^{\lfloor\frac{n}{k}\rfloor- 1}
											 \frac{\frac{\kt}{k}}{\frac{n}{k}-1}
								\ell\left(C_{n,k}(x_{km+1}^{km+ k})\right) \\
& \eqlabel{f} & 2k + \left\lfloor \frac{n}{k} \right\rfloor \frac{\kt}{n-k} \E{\ell(C_{n,k}(x_{kN+1}^{kN+k}))} \\
& \leqlabel{g} & 2k + \left\lfloor \frac{n}{k} \right\rfloor \frac{\kt}{n-k}\left(\widehat{H}^k(x^n) +1 \right)
\eeqan
Step (a) follows by Shannon's source coding converse.
(b) is from the definition of $\Ct_{n,\kt}$.
(c): Observe that by setting $\theta = -M\kt\bmod k$, every encoded subblock within
$\Ct_{n,\kt,\theta}$ is aligned so that it will be of the form $C_{n,k}(x_{km+1}^{k(m+1)})$ for some integer $m$.
Step (d) involves first upper-bounding the length of the unencoded
prefix and suffix components of $\Ct_{n,k}(x_{kM+1}^{k(M+1)})$ at 
$k$ bits each,
and then summing the lengths of each of the encoded subblocks.
Note that an encoded subblock $C_{n,k}(x_{km+1}^{km+k})$ only appears in
$\Ct_{n,\kt,-M\kt \bmod k} (x_{\kt M+ 1}^{\kt M+\kt})$
if $x_{km+1}^{km+k}$ is fully contained within the $\kt$-block being encoded
$x_{\kt M+1}^{\kt M + \kt}$.  The indicator functions ensure that only these encoded
subblocks contribute to the length summation.
(e) follows from recognizing that $M$ takes $\lfloor n/k\rfloor$ 
values uniformly, and that at most $\kt/k$ of these positions satisfy the conditions
$mk +1 \geq M\kt+1$ and $mk+k \leq M\kt + \kt$. 
(f) replaces the summation with an expectation, where the random variable
$N$ is uniformly distributed over the set $\{0,\ldots,\lfloor\frac{n}{k}\rfloor-1 \}$.
(g) invokes \eqref{eq:HuffmanGuarantee}.

Dividing both sides of this resulting inequality by $\kt$
and taking the limit supremum with respect to $n$, we have that
\[
\limsup_{n\rightarrow\infty} \frac{1}{\kt} \hat{H}^\kt(x^n) \leq 
\frac{2k}{\kt} + \frac{1}{k} + \limsup_{n\rightarrow\infty} \frac{1}{k} \hat{H}^k(x^n)
\mbox{.}
\]
Taking the limit supremum of both sides of this expression with $\kt$, we have that
\[
\limsup_{\kt \rightarrow\infty} \limsup_{n\rightarrow\infty} \frac{1}{\kt} \hat{H}^\kt(x^n) \leq 
\frac{1}{k} + \limsup_{n\rightarrow\infty} \frac{1}{k} \hat{H}^k(x^n)
\mbox{.}
\]
Finally, taking the limit infimum with respect to $k$,
\[
\limsup_{\kt \rightarrow\infty} \limsup_{n\rightarrow\infty} \frac{1}{\kt} \hat{H}^\kt(x^n) \leq 
\liminf_{k \rightarrow\infty} \limsup_{n\rightarrow\infty} \frac{1}{k} \hat{H}^k(x^n)
\mbox{.}
\]
This proves that $\lim_{k\rightarrow\infty}\limsup_{n\rightarrow\infty} \frac{1}{k} \hat{H}^k(x^n)$
exists.

Repeating this last set of arguments with the limit infimum with respect to $n$
proves that
$\lim_{k\rightarrow\infty}\liminf_{n\rightarrow\infty} \frac{1}{k} \hat{H}^k(x^n)$
exists.
\end{proof}

Next, we prove that the sliding-window compressibility can be no greater than the block-by-block
compressibility.
\begin{lemma}
\label{lem:block-by-block-upper-bound}
Let $x^\infty$ be a finite-alphabet sequence.  Then the following two statements hold:
\[
\underline{\rho}(x^\infty)
\leq \lim_{k\rightarrow\infty} \liminf_{n\rightarrow\infty} \frac{1}{k} \hat{H}^k(x^n)
\]
and
\[
\overline{\rho}(x^\infty)
\leq \lim_{k\rightarrow\infty} \limsup_{n\rightarrow\infty} \frac{1}{k} \hat{H}^k(x^n) \mbox{.}
\]
\end{lemma}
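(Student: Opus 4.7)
The plan is to adapt the $\tilde{k}$-extension-code argument from Lemma \ref{lem:block-by-block-limit-exists}, but to evaluate its expected length under the sliding-window source rather than the block-by-block source. Concavity of entropy is the wrong tool here, since $\hat{p}^{\tilde{k}}_{\text{sw}}$ is essentially an average of block-by-block distributions over shifts, and Jensen would produce the wrong direction of inequality.

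Given $k \leq \tilde{k}$, I would start with the $k$-block Huffman code $C_{n,k}$ matched to $\hat{p}^k(x^n)$ (satisfying \eqref{eq:HuffmanGuarantee}) and build a $\tilde{k}$-block code $\hat{C}$ whose codeword for a $\tilde{k}$-tuple $Y^{\tilde{k}}$ consists of $\lceil\log_2 k\rceil$ header bits signalling a shift $\theta\in\{0,\ldots,k-1\}$, followed by an uncoded prefix $Y_1^\theta$, then $\lfloor(\tilde{k}-\theta)/k\rfloor$ consecutive $k$-subblocks encoded with $C_{n,k}$, and finally an uncoded suffix of length less than $k$. This code is uniquely decodable, and its total length is bounded by $2k + \log_2 k$ plus the sum of encoded-subblock lengths.

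The core step is to bound the expected length under $Y^{\tilde{k}}\sim\hat{p}^{\tilde{k}}_{\text{sw}}(x^n)$, i.e.\ $Y^{\tilde{k}}=x_{I+1}^{I+\tilde{k}}$ with $I$ uniform over $\{0,\ldots,n-\tilde{k}\}$. The trick is to choose $\theta(I)=(-I)\bmod k$, which forces every encoded subblock to coincide with an aligned $k$-block $x_{jk+1}^{(j+1)k}$ of $x^n$. A counting argument parallel to steps (d)--(f) in Lemma \ref{lem:block-by-block-limit-exists} shows each aligned $k$-block appears as a middle subblock for at most roughly $\tilde{k}-k$ of the $n-\tilde{k}+1$ offsets $I$, so combined with \eqref{eq:HuffmanGuarantee} the expected middle length is at most $\tfrac{\tilde{k}}{k}(\hat{H}^k(x^n)+1)$ up to lower-order terms. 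Shannon's source-coding converse applied to $\hat{C}$ against the sliding-window source then gives $\hat{H}^{\tilde{k}}_{\text{sw}}(x^n) \leq 2k + \log_2 k + \tfrac{\tilde{k}}{k}(\hat{H}^k(x^n)+1)$.

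Dividing by $\tilde{k}$ and passing to $\limsup_n$, then $\limsup_{\tilde{k}}$, and finally $\liminf_k$ (which coincides with $\lim_k$ by Lemma \ref{lem:block-by-block-limit-exists}) yields the $\overline{\rho}$ bound; the parallel chain with $\liminf_n$ and $\liminf_{\tilde{k}}$ yields the $\underline{\rho}$ bound. I expect the main obstacle to be the combinatorics underlying the middle-length bound: unlike in Lemma \ref{lem:block-by-block-limit-exists} where the outer index is quantized to a $\tilde{k}$-grid, here $I$ sweeps every position, so the count of how often each aligned $k$-block contributes must be done carefully, with attention to end-effects at the two boundaries of $x^n$ and to the fact that $\theta(I)$ is nearly (but not exactly) uniform on $\{0,\ldots,k-1\}$.
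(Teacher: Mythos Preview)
Your proposal is correct and follows essentially the same route as the paper: build the $\tilde{k}$-extension of the block-by-block Huffman code $C_{n,k}$, evaluate its expected length against the sliding-window source $x_{I+1}^{I+\tilde{k}}$ with $I$ uniform on $\{0,\ldots,n-\tilde{k}\}$, pick $\theta=(-I)\bmod k$ to align the inner subblocks with the $k$-grid, and then count how many offsets $I$ make a given aligned block $x_{mk+1}^{(m+1)k}$ contribute. The only cosmetic difference is that you spend $\lceil\log_2 k\rceil$ header bits to describe $\theta$, whereas the paper's $\tilde{k}$-extension code $\widetilde{C}$ simply takes the $\operatorname{argmin}$ over $\theta$ and then upper-bounds by the specific choice $\theta=(-M)\bmod k$; your version has the mild advantage of making unique decodability transparent, and the extra $\log_2 k$ term vanishes after dividing by $\tilde{k}$ anyway.
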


\begin{proof}
As the form of this proof is very similar to that of Lemma \ref{lem:block-by-block-limit-exists},
exposition will be limited.

Let $\kt > k$, let $C_{n,k}$ denote the k-block Huffman code
for the block-by-block empirical distribution $\hat{p}(X^k)[x^n]$,
and let $\Ct_{n,\kt,\theta}$ and $\Ct_{n,\kt}$ be the $(\theta,\kt)$- and $\kt$-extensions of $C_{n,k}$.  

Allowing $M$ to be uniformly distributed over the set $\{0,\ldots,n-\kt\}$, we have that
\beqan
\hat{H}^\kt_{\text sw} & \leq & \E{\ell\left(\Ct_{n,\kt}(x_{M+1}^{M+\kt})\right)} \\
& = & \E{ \min_{\theta \in \{0,\ldots,k-1\}}\ell\left(\Ct_{n,\kt,\theta}(x_{M+1}^{M+\kt})\right)} \\
& \leq & \E{ \ell\left(\Ct_{n,\kt,-M\bmod k}(x_{M+1}^{M+\kt})\right)} \\
& \leq & \E{2k + \sum_{m=0}^{\left\lfloor\frac{n}{k}\right\rfloor-1}
												  \ell\left(C_{n,k}(x_{km+1}^{km+k})\right)
												  {\bf 1 }_{km+1\geq M+1} 
												  {\bf 1 }_{km+k \leq M+\kt}} \\
& = & 2k + \sum_{m=0}^{\left\lfloor\frac{n}{k}\right\rfloor-1}
              \P{km+1\geq M+1,km+k \leq M+\kt}
              \ell\left(C_{n,k}(x_{km+1}^{km+k})\right) \\
&\leq & 2k + \sum_{m=0}^{\left\lfloor\frac{n}{k}\right\rfloor-1}
 							 \frac{\kt -k-1}{n-\kt-1}
 							\ell\left(C_{n,k}(x_{km+1}^{km+k})\right) \\
& \leq & 2k+ \left\lfloor\frac{n}{k}\right\rfloor \frac{\kt}{n-\kt-1}
 							\E{\ell\left(C_{n,k}(x_{kN+1}^{kN+k})\right)}, 
							N \sim \text{Unif}\left\{ 0,\ldots, \left\lfloor \frac{n}{k} \right\rfloor-1\right\} \\
& \leq & 2k + \frac{n}{k} \frac{\kt}{n-\kt-1} \left( \hat{H}^k(x^n)+1 \right)
\eeqan

Dividing both sides by $\kt$ and taking the limit supremum in $n$,
we have that
\[
\limsup_{n\rightarrow\infty}\frac{1}{\kt}\hat{H}^\kt_{\text sw}
\leq
\frac{2k}{\kt} + \frac{1}{k} \left( \hat{H}^k(x^n) +1 \right)
\mbox{.}
\]

Now taking the limit supremum in $\kt$ followed by the 
limit supremum in $k$, we have that
\beq
\limsup_{\kt\rightarrow\infty} \limsup_{n\rightarrow\infty}\frac{1}{\kt}\hat{H}^\kt_{\text sw}
\leq
\limsup_{k\rightarrow\infty} \limsup_{n\rightarrow\infty}\frac{1}{k}\hat{H}^k
\mbox{.}
\label{eq:supBound}
\eeq

We can identically show (by replacing all limits supremum with limits infimum) that 
\beq
\liminf_{\kt\rightarrow\infty} \liminf_{n\rightarrow\infty}\frac{1}{\kt}\hat{H}^\kt_{\text sw}
\leq
\liminf_{k\rightarrow\infty} \liminf_{n\rightarrow\infty}\frac{1}{k}\hat{H}^k
\mbox{.}
\label{eq:infBound}
\eeq

Equations \eqref{eq:supBound} and \eqref{eq:infBound} prove the lemma.
\end{proof}								

We now prove the opposite direction: that the sliding window compressibility can be no
smaller than the block-by-block compressibility.

\begin{lemma}
\label{lem:block-by-block-lower-bound}
Let $x^\infty$ be a finite-alphabet sequence.  Then
\[
\underline{\rho}(x^\infty)
\geq \lim_{k\rightarrow\infty} \liminf_{n\rightarrow\infty} \frac{1}{k} \hat{H}^k(x^n)
\]
and
\[
\overline{\rho}(x^\infty)
\geq \lim_{k\rightarrow\infty} \limsup_{n\rightarrow\infty} \frac{1}{k} \hat{H}^k(x^n) \mbox{.}
\]
\end{lemma}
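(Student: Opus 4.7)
The plan is to mirror the strategy of Lemma \ref{lem:block-by-block-upper-bound}, but with the construction reversed: starting from a sliding-window $\kt$-block Huffman code and extending it to a $k$-block code with $k \gg \kt$, I will use the resulting code to upper-bound the block-by-block empirical entropy $\hat{H}^k(x^n)$.

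Fix $\kt$ and let $\Ct_{n,\kt}$ denote the $\kt$-block Huffman code designed for the sliding-window empirical distribution $\hat{p}^\kt_{\text{sw}}(X^\kt)[x^n]$. For $k > \kt$, let $D_{n,k}$ be its $k$-extension in the sense of Definition \ref{def:extension-code}. Shannon's source-coding converse gives $\hat{H}^k(x^n) \leq \mathbf{E}_{\hat{p}^k}[\ell(D_{n,k}(X^k))]$. Upper-bounding the minimum over offsets $\theta$ by the average over $\theta \in \{0,\ldots,\kt-1\}$, the expected length decomposes into an $O(\kt)$ prefix/suffix overhead plus a triple sum over $(N,\theta,m)$ of lengths of $\Ct_{n,\kt}$ applied to sub-$\kt$-blocks $x_{kN+\theta+m\kt+1}^{kN+\theta+(m+1)\kt}$, where $N$ indexes the $k$-blocks of $x^n$.

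The crucial combinatorial observation is that as $(\theta,m)$ ranges over its admissible values, the in-block starting offsets $\theta+m\kt+1$ sweep the set $\{1,\ldots,k-\kt+1\}$ exactly once. The triple sum therefore collapses to $\sum_N \sum_{p=1}^{k-\kt+1} \ell\bigl(\Ct_{n,\kt}(x_{kN+p}^{kN+p+\kt-1})\bigr)$, which by nonnegativity of the summands is bounded above by the complete sliding-window sum $\sum_{j=1}^{n-\kt+1}\ell\bigl(\Ct_{n,\kt}(x_j^{j+\kt-1})\bigr)$; Huffman optimality then bounds this by $(n-\kt+1)(\hat{H}^\kt_{\text{sw}}(x^n)+1)$. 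Assembling the pieces yields
\[
\frac{\hat{H}^k(x^n)}{k} \leq \frac{O(\kt)}{k} + \frac{n-\kt+1}{k\kt\lfloor n/k\rfloor}\bigl(\hat{H}^\kt_{\text{sw}}(x^n)+1\bigr).
\]

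Since $(n-\kt+1)/(k\lfloor n/k\rfloor) \to 1$ as $n\to\infty$, taking $\liminf_n$ on both sides, then $\lim_k$ (which exists by Lemma \ref{lem:block-by-block-limit-exists} and kills the $O(\kt)/k$ overhead), and finally $\liminf_{\kt}$ produces $\lim_k \liminf_n \frac{1}{k}\hat{H}^k(x^n) \leq \underline{\rho}(x^\infty)$, the first claim. Replacing each $\liminf_n$ with $\limsup_n$ throughout yields the $\overline{\rho}$ statement by exactly the same chain of inequalities. I expect the main obstacle to be the combinatorial bookkeeping that verifies the tiling of $\{1,\ldots,k-\kt+1\}$ by the $(\theta,m)$ pairs and pins down the precise $O(\kt)$ overhead constant; once that is established, the remaining limit manipulations are routine and essentially parallel to those already carried out for Lemma \ref{lem:block-by-block-upper-bound}.
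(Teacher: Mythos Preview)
Your proposal is correct and follows essentially the same architecture as the paper's proof: design a Huffman code for the sliding-window empirical distribution at the small blocklength, extend it to the large blocklength, and use Shannon's converse on the extension to upper-bound the block-by-block empirical entropy there before passing to limits. The one noteworthy difference is in how the phase is handled: the paper selects a single good phase $\theta^*$ (arguing that the sliding-window distribution is a convex combination of the phased block-by-block distributions, so some phase must satisfy \eqref{eq:HuffmanGuarantee2}), whereas you average over all phases and use the tiling of $\{1,\ldots,k-\kt+1\}$ by the $(\theta,m)$ pairs to reconstitute the full sliding-window sum directly; both devices yield the same final inequality, and your tiling observation is indeed correct (each starting position $p$ corresponds bijectively to $\theta=(p-1)\bmod\kt$, $m=\lfloor(p-1)/\kt\rfloor$, admissible exactly when $p\le k-\kt+1$).
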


\begin{proof}
Let $\kt>k$ be two blocklengths, and let $C_{n,k}$ be the optimal $k$-block Huffman
code for the $k$th-order sliding window distribution 
$\hat{p}^k_{\text sw}(X^k)[x^n]$.  Then there must exist a phase
$\theta^* \in \{0,\ldots,k-1\}$ such that
\beq
\E{\ell \left( C_{n,k}(x_{Nk+1+\theta^*}^{(N+1)k+\theta^*}) \right)}
\leq \hat{H}^k_{\text sw}(x^n) + 1
\label{eq:HuffmanGuarantee2}
\eeq
where $N$ is distributed uniformly across the set
$\{0,\ldots,\left\lfloor \frac{n-\theta^*}{k} \right\rfloor - 1\}$.
This follows from the within-one-bit optimality of $C_{n,k}$ over the 
sliding-window distribution, and because the sliding-window distribution
may be expressed as a (nonnegative) linear combination of the 
block-by-block distributions $\hat{p}^k(X^k)[x_\theta^n]$ 
computed according to each phase $\theta$.

Define $\widetilde{C}_{n,k,\theta}$ and $\widetilde{C}_{n,k}$
as the $(\theta,\kt)$- and $\kt$-extensions of $C_{n,k}$.

  A familiar sequence of inequalities may then be applied.
Allowing $M$ to be uniformly distributed over the set $\{0,\ldots, \lfloor n/\kt \rfloor - 1\}$,
we have:  
\beqan
\hat{H}^{\kt}(x^n) & \leqlabel{a} & \E{\ell \left( \Ct_{n,\kt} (x_{\kt M+ 1}^{\kt(M+1)}) \right)} \\
& = & \E{ \min_{\theta \in \{0,\ldots,k-1\}} \ell \left( \Ct_{n,\kt,\theta} (x_{\kt M+ 1}^{\kt(M+1)}) \right)} \\
& \leqlabel{b} & \E{ \ell \left( \Ct_{n,\kt,(\theta^*-M)\bmod k} (x_{\kt M+ 1}^{\kt(M+1)}) \right)} \\
& \leqlabel{c} & \E{ 2k + \sum_{m=0}^{\left\lfloor\frac{n-\theta^*}{k}\right\rfloor- 1}
													\ell\left( C_{n,k}(x_{km+1+\theta^*}^{k(m+1)+\theta^*})\right) 
													{\bf 1}_{M\kt+1 \leq km+\theta^*+1} 
													{\bf 1}_{M\kt+\kt \geq mk +\theta^*+k}
										} \\
& \leq & 2k + \sum_{m=0}^{\left\lfloor\frac{n-\theta^*}{k}\right\rfloor- 1}
							\P{M\kt+1 \leq km+\theta^*+1, M\kt+\kt \geq mk +\theta^*+k}
							\ell\left( C_{n,k}(x_{km+1+\theta^*}^{k(m+1+\theta^*)})\right) 
							\\
& \leqlabel{d} & 2k + \sum_{m=0}^{\left\lfloor\frac{n-\theta^*}{k}\right\rfloor- 1}
											\frac{ \frac{\kt}{k}}{\frac{n-\theta^*}{k}-1}
											\ell \left( C_{n,k}(x_{km+1+\theta^*}^{k(m+1+\theta^*)})\right) 
							\\
& \leqlabel{e} & 2k + \frac{n-\theta^*}{k}\frac{\kt}{n-\theta^*-k}
											\E{\ell \left( C_{n,k}(x_{kN+1+\theta^*}^{k(N+1)+\theta^*})\right)} 
											\\
& \leqlabel{f} & 2k + \frac{n-\theta^*}{k}\frac{\kt}{n-\theta^*-k}
											\left(\hat{H}^k_{\text sw}(x^n) + 1\right)
\eeqan
Step (a) follows from Shannon's converse and the definition of 
the $\kt$-block empirical entropy $\hat{H}^{\kt}(x^n)$.
(b) sets $\theta = (\theta^*-M)\bmod k$ so that every encoded subblock within $\Ct_{n,\kt,\theta}$
 is of the form $C_{n,k}(x_{km+1+\theta^*}^{k(m+1)+\theta^*})$ for some integer $m$.
(c) upper-bounds the length of an encoding
first by bounding the suffix and prefix at $k$ bits each, and then by
summing the encoding lengths for each contributing encoded subblock.
Observe that an encoded subblock $C_{n,k}(x_{km+1+\theta^*}^{k(m+1+\theta^*)})$ only appears in
$\Ct_{n,\kt,(\theta^*-M)\bmod k} (x_{\kt M+ 1}^{\kt(M+1)})$
if $x_{km+1+\theta^*}^{k(m+1+\theta^*)}$ is fully contained within the $\kt$-block being encoded
$x_{\kt M+1}^{\kt(M+1)}$.
(d) follows from recognizing that $M$ takes $\lfloor (n-\theta^*)/k\rfloor$ 
values uniformly, and that at most $\kt/k$ of these positions satisfy the conditions
$M\kt+1 \leq km+\theta^*+1$ and $M\kt+\kt \geq mk +\theta^*+k$. 
(e) introduces $N$ as a random variable distributed uniformly over 
$\{0,\ldots,\left\lfloor \frac{n-\theta}{k}\right\rfloor-1\}$, and 
the sum is replaced by an expectation.
Finally, (f) is a direct application of \eqref{eq:HuffmanGuarantee2}.

Dividing both sides by $\kt$ and taking the limit infimum in $n$, we have that
\[
\liminf_{n\rightarrow\infty} \frac{1}{\kt} \hat{H}^\kt(x^n)
\leq
\frac{2k}{\kt} + \frac{1}{k} \left(\hat{H}^k_{\text sw}(x^n) + 1\right)
\mbox{.}
\]
Taking the limit infimum in $\kt$, followed by the limit infimum in $k$,
we have
\beq
\liminf_{\kt\rightarrow\infty} \liminf_{n\rightarrow\infty} \frac{1}{\kt} \hat{H}^\kt(x^n)
\leq
\liminf_{k\rightarrow\infty} \liminf_{n\rightarrow\infty} \frac{1}{k} \hat{H}_{\text sw}^k(x^n)
\mbox{.}
\label{eq:InfimumBound}
\eeq

By repeating these two steps with limits supremum instead of 
limits infimum, we find that
\beq
\limsup_{\kt\rightarrow\infty} \limsup_{n\rightarrow\infty} \frac{1}{\kt} \hat{H}^\kt(x^n)
\leq
\limsup_{k\rightarrow\infty} \limsup_{n\rightarrow\infty} \frac{1}{k} \hat{H}_{\text sw}^k(x^n)
\mbox{.}
\label{eq:SupremumBound}
\eeq
\end{proof}

Lemma \ref{lem:compressibility} now follows from lemmas \ref{lem:block-by-block-limit-exists}, 
\ref{lem:block-by-block-upper-bound}, and \ref{lem:block-by-block-lower-bound}.

\section{Proof of Lemma \ref{lem:limiting-distribution}}
\label{app:limiting-distribution-proof}
\begin{proof}
We begin by defining a function $n_M(M^\infty)$ that specifies
the truncated source $M^{n_M(M^\infty)}$.
\begin{enumerate}
\item If $M^\infty \notin \mathcal{M}_k(z^\infty)$ then
select only the first bit $n_M(M^\infty) = 1$.

\item Suppose $M^\infty \in \mathcal{M}_k(z^\infty)$.
Then let $\{L_i\}$, $\{n_i\}$, and $\widehat{p}(M^L,z^k)$ be specified so that
\eqref{eq:m-definition} is satisfied.  

Let $C_{M^\infty}$ be the optimal conditional Huffman code for $\widehat{p}(M^L | z^k)$,
and let $B_{M^\infty}$ be the number of bits required to describe
$C_{M^\infty}$ to a decoder.

We then choose $i(M^\infty) \in \Z^+$ sufficiently large so that the following
three conditions are satisfied:
\begin{description}
\item[C1] $ \left| \E{L}_{\widehat{p}} - \E{L}_{\widehat{p}_{n_{i(M^\infty)}}} \right| < \delta$,
where $\widehat{p}_{n_i}$ is the empirical distribution of $(M^L,z^k)_1^{n_i}$.  This can be satisfied
because $\widehat{p}$ is a limiting empirical distribution on the points $\{n_i\}$.
\item[C2] $ \E{\ell(C(M^L | z^k))}_{\widehat{p}_{n_{i(M^\infty)}}} \leq H_{\widehat{p}}(M^L | z^k) + 1$.
This is possible since we know that $\E{\ell(C(M^L | z^k))}_{\widehat{p}} \leq H_{\widehat{p}}(M^L | z^k) + 1$,
and that $\widehat{p}_{n_i} \rightarrow \widehat{p}$.
\item[C3] $ \frac{B_{M^\infty}}{n_{i(M^\infty)}} < \delta$.  This is possible simply
because $B_{M^\infty}$ is finite.
\end{description}

Armed with $i(M^\infty)$ we may define the following:
\begin{itemize}
\item $n(M^\infty) = n_{i(M^\infty)}$ is the relevant index in the partition sequence $\{(M^L,z^k)_i\}$.
\item $n_M(M^\infty) = \sum_{j=1}^{n_{i(M^\infty)}} L_j$ is the relevant index in the source sequence $M^\infty$.
\item $n_z(M^\infty) = k n_{i(M^\infty)}$ is the relevant index in the noise sequence $z^\infty$.
\end{itemize}
\end{enumerate}

We now describe a variable-length source coding scheme
for this constructed source $M_1^{n_M(M^\infty)}$ with
encoder-side-information $M_{n_M(M^\infty)+1}^\infty$:

\begin{enumerate}

\item If $M^\infty \notin \mathcal{M}_k$, directly encode
$M_1^{n_M(M^\infty)}$.  Recall that since
$n_M((\mathcal{M}_k)^C) = 1$, this is just the first bit $M_1$.

\item If $M^\infty \in \mathcal{M}_k$: First, specify $C_{M^\infty}$ with
the first $B_{M^\infty}$ bits in the encoding.  Then,
apply $C_{M^\infty}$ to each of the $n(M^\infty)$ blocks
in the $k$-partition $(M^L,z^k)_1^{n(M^\infty)}$.
\end{enumerate}
Call this encoding function $F(M^{n_M(M^\infty)})$.  By
Shannon's source coding converse, the expected length of this encoding must
exceed the entropy of the source $M^{n_M(M^\infty)}$.
We demonstrate that for this to hold $\mathcal{M}_k$
must be of measure zero.

\begin{eqnarray*}
0 & \leq & \E{\ell(F(M_1^{n_M(M^\infty)}))} - H(M_1^{n_M(M^\infty)}) 
\\
& \leqlabel{a} & \E{\ell(F(M_1^{n_M(M^\infty)}))} - \E{n_M(M^\infty)} 
\\
& \eqlabel{b} & \P{M^\infty \notin \mathcal{M}_k} + 
\P{M^\infty \in \mathcal{M}_k} \E{B_{M^\infty} + 
n(M^\infty) \E{\ell(C_{M^\infty}(M^L|z^k))}_{\widehat{p}_{n(M^\infty)}} 
\mid M^\infty \in \mathcal{M}_k} 
\\
& & - \P{M^\infty \notin \mathcal{M}_k} - \P{M^\infty \in \mathcal{M}_k}\E{n_M(M^\infty)
\mid M^\infty \in \mathcal{M}_k }\\
& = & \P{M^\infty \in \mathcal{M}_k} \E{B_{M^\infty} + 
																n(M^\infty) \E{\ell(C_{M^\infty}(M^L|z^k))}_{\widehat{p}_{n(M^\infty)}}
																-n_M(M^\infty) \mid M^\infty \in \mathcal{M}_k} \\
& \leqlabel{c} & \P{M^\infty \in \mathcal{M}_k} \E{ 
																n(M^\infty) \left(\delta + \E{\ell(C_{M^\infty}(M^L|z^k))}_{\widehat{p}_{n(M^\infty)}}
																            \right)
																-n_M(M^\infty) \mid M^\infty \in \mathcal{M}_k} \\
& \leqlabel{d} & \P{M^\infty \in \mathcal{M}_k} \E{ 
																n(M^\infty) \left(\delta + \E{\ell(C_{M^\infty}(M^L|z^k))}_{\widehat{p}_{n(M^\infty)}}
																           	\right)
																-n(M^\infty)\E{L}_{\widehat{p}_{n(M^\infty)}}  \mid M^\infty \in \mathcal{M}_k} 
																\\
& \leqlabel{e} & \P{M^\infty \in \mathcal{M}_k} \E{ 
																n(M^\infty) \left(\delta + H_{\widehat{p}}(M^L|z^k) + 1	\right)
																-n(M^\infty)\E{L}_{\widehat{p}_{n(M^\infty)}} \mid M^\infty \in \mathcal{M}_k} 
																\\
& \leqlabel{f} & \P{M^\infty \in \mathcal{M}_k} \E{ 
																n(M^\infty) \left(2\delta + H_{\widehat{p}}(M^L|z^k) + 1	 \right)	 											
																-n(M^\infty) \E{L}_{\widehat{p}} 
																                \mid M^\infty \in \mathcal{M}_k} 
																                \\
0 & \leqlabel{g} & \P{M^\infty \in \mathcal{M}_k} \E{ 
																n(M^\infty) \left(H_{\widehat{p}}(M^L|z^k) + 1	 	 											
																-\E{L}_{\widehat{p}} \right)
																                 \mid M^\infty \in \mathcal{M}_k}
\end{eqnarray*}
Step (a) holds by Lemma \ref{lem:selection}.  Step (b) follows from an expansion
 of both expectations.  (c) is due to condition {\bf C3}.  
 The definition of $n_M(M^\infty)$ yields (d).  Condition {\bf C2} implies (e).
 (f) follows from condition {\bf C1}. 
 Because $\delta$ can be arbitrarily small, (g) holds.
 Finally, by the definition of $\mathcal{M}_k$ in \eqref{eq:m-definition}, this last inequality
 can only be satisfied if $\P{M^\infty \in \mathcal{M}}$ is zero.
\end{proof}

\section{Proof of Lemma \ref{lem:markov-relation}}
\label{app:markov-relation-proof}
First, some notational conveniences are defined.
Suppose the given repetition scheme $\overline{\mathcal{F}}$ is applied to an
iid Bernoulli$(1/2)$ source $M^\infty$
and fixed noise sequence $z^\infty$.  Let $\{L_i\}$ be the number of source bits
decoded in each block.  Consider the $i$th block --- i.e.
$\mathcal{F}\left( (M_{\sum_{j=1}^{i-1}L_i + 1}^{\sum_{j=1}^{i-1}L_i+N},0^\infty),
z_{(i-1)N+1}^{iN}\right)$.
Let $M_{(i)}^{N} = M_{\sum_{j=1}^{i-1}L_i + 1}^{\sum_{j=1}^{i-1}L_i+N}$ indicate the source
 bits used by the encoder, 
 let $\widehat{M}_{(i)}^{L_i} = \widehat{M}_{\sum_{j=1}^{i-1}L_i + 1}^{\sum_{j=1}^{i}L_i}$
indicate the estimate produced, 
let $R_i = \frac{L_i}{N}$ indicate the rate of the block,
and let $z_{(i)}^{N} = z_{(i-1)N+1}^{iN}$ denote the noise
of the block. Finally, let $u_{(i)}^{N}=u_{(i-1)N+1}^{iN}$ denote the feedback
during the block.

Observe that the $j$th block in a repetition scheme can only affect a future block
in one way: by adjusting $L_j$ and thereby changing $M_{(i)}^{N}$ for $i > j$. 
As such, for all $i > j$, the Markov chain 
 $u_{(i)}^{N} - M_{(i)}^{N} - (u_{(j)}^{N},M_{(j)}^{N})$
holds.  The joint distribution of $(u_{(j)}^{N},u_{(i)}^{N},M_{(j)}^{N},M_{(i)}^{N})$
may then be written as
\[
p\left( u_{(j)}^{N},u_{(i)}^{N},M_{(j)}^{N},M_{(i)}^{N} \right) = 
p(u^{N}_{(j)},M^{N}_{(j)})
p(M^{N}_{(i)} | u^{N}_{(j)},M^{N}_{(j)})
p(u^{N}_{(i)} | M^{N}_{(i)})
\mbox{.}
\]
Since $E_i$ and $E_j$ are deterministic functions
of $(u^{N}_{(i)},M^{N}_{(i)})$ and $(u^{N}_{(j)},M^{}_{(j)})$ respectively,
we may easily introduce them into the joint distribution:
\beqan
p\left( u_{(j)}^{N},u_{(i)}^{N},M_{(j)}^{N},M_{(i)}^{N}, E_i,E_j \right) &= &
p\left(u^{N}_{(j)},M^{N}_{(j)}\right) p\left(E_j | u^{N}_{(j)},M^{N}_{(j)}\right) \\
&&
\cdot p\left(M^{N}_{(i)} | u^{N}_{(j)},M^{N}_{(j)}\right)
p\left(u^{N}_{(i)} | M^{N}_{(i)}\right) \\
& & \cdot
p\left(E_i | u^{N}_{(i)},M^{N}_{(i)}\right)
\mbox{.}
\eeqan
This may be rephrased as
\[
p\left( u_{(j)}^{N},u_{(i)}^{N},M_{(j)}^{N},M_{(i)}^{N}, E_i,E_j \right) = 
p\left(u_{(j)}^{N}, M_{(j)}^{N}, E_j, M^{N}_{(i)} \right)
p\left(u^{N}_{(i)}, E_i | M^{N}_{(i)} \right)
\mbox{.}
\]
Summing over $(u^{N}_{(j)},u^{N}_{(i)},M^{N}_{(j)})$ this yields
\[
p\left( E_i,E_j,M^{N}_{(i)} \right) = 
p\left( E_j,M^{N}_{(i)} \right)p\left( E_i | M^{N}_{(i)} \right) =
p\left( E_j | M^{N}_{(i)} \right)
p\left( E_i | M^{N}_{(i)} \right)
p\left( M^{N}_{(i)} \right)
\mbox{,}
\]
which proves the Markov relation $E_i - M^{N}_{(i)} - E^{i-1}$.

The above arguments may be repeated verbatim to show that $T_i- M^{N}_{(i)}-T^{i-1}$.

\section*{Acknowledgments}
The authors would like to thank Yuval Lomnitz both for stimulating discussion
and for his helpful suggestions for this manuscript.
\bibliographystyle{IEEEtran} 
\bibliography{porosity}

\end{document}